\documentclass[11pt]{article} 
\usepackage{preamble}

\newcommand{\Lloc}{\lambda_{\mathrm{loc}}}
\newcommand{\rhoin}{\rho_{\mathrm{in}}}
\newcommand{\rhoout}{\rho_{\mathrm{out}}}
\newcommand{\bits}{\mathbb{F}_2}
\newcommand{\Ddeg}{\mathfrak{d}_{\mathrm{dis}}}
\newcommand{\Kdeg}{K_{\mathrm{deg}}}
\newcommand{\Kden}{K_{\mathrm{den}}}
\newcommand{\Krec}{K_{\mathrm{rec}}}
\newcommand{\Ttot}{T_{\mathrm{tot}}}

\title{Near-Optimal Learning of Local Lindbladians}
\author[1]{Itai Arad\thanks{\texttt{arad.itai@fastmail.com}}}
\author[1]{Zhili Chen\thanks{\texttt{chen.zhili@u.nus.edu}}}
\author[1]{Naixu Guo\thanks{\texttt{naixug@u.nus.edu}}}
\author[1,2]{Patrick Rebentrost\thanks{\texttt{cqtfpr@nus.edu.sg}}}
\author[1]{Zhan Yu\thanks{\texttt{yu.zhan@u.nus.edu}}}
\affil[1]{\normalsize Centre for Quantum Technologies, National University of Singapore, 117543, Singapore}
\affil[2]{School of Computing, National University of Singapore, 117417, Singapore}

\begin{document}
\pagenumbering{roman}
\maketitle
\thispagestyle{empty}

\begin{abstract}
    We study the problem of learning local Lindbladians from black-box access to the physical evolution, where the goal is to estimate all Hamiltonian and dissipative coefficients. For local Lindbladians with low-intersection dissipators and local dynamical strength at most $\Lambda$, we design an algorithm that learns the coefficients to target accuracy $\eps$, whose number of channel uses and total evolution time scale as $\widetilde{O}(\Lambda^2/\eps^2)$ and $\widetilde{O}(\Lambda/\eps^2)$, respectively, with only logarithmic dependence on the number of qubits. The algorithm is built directly from finite-time channel probes. It runs the unknown evolution for short times, estimates the corresponding Pauli transfer matrices from classical shadows, and converts these estimates into Lindbladian coefficients by stable local Fourier inversions. 
    
    The algorithm is non-adaptive, uses no ancillas, and requires only random product states as inputs followed by random Pauli measurements. The method does not require knowing the structure of the Lindbladian in advance. We prove matching lower bounds that establish the near-optimality of the algorithm in both resources. By constructing a family of single-qubit dephasing Lindbladians, we show that any algorithm, even an adaptive one with arbitrary ancillas and measurements, requires $\Omega(\Lambda^2/\eps^2)$ channel uses and $\Omega(\Lambda/\eps^2)$ total evolution time. In particular, the lower bounds imply that the Heisenberg-limited scaling achievable for Hamiltonian learning is information-theoretically impossible once dissipative coefficients must be estimated.
\end{abstract}

\clearpage
\pagestyle{empty}
\tableofcontents
\clearpage

\pagestyle{plain}
\pagenumbering{arabic}
\section{Introduction}
Learning the dynamical laws governing a quantum system is a central task across quantum science, from simulation and computation \cite{bernien2017probing, lloyd1996universal,shulman2014suppressing} to metrology and many-body physics \cite{leibfried2004toward, Wiebe2014hamiltonian,wang2017experimental}. From an algorithmic point of view, the key challenge is to identify the generator of the dynamics using as few experiments as possible, while avoiding a description whose size is exponential in the number of qubits.

In the ideal regime of a closed quantum system, the evolution is unitary and is completely characterized by a Hamiltonian $H$. Expanding $H = \sum_a h_a P_a$ in the Pauli basis, learning the dynamics amounts to estimating the interaction coefficients $h_a$. This Hamiltonian learning problem has been studied extensively and has led to algorithms with strong guarantees on sample complexity and evolution time under various assumptions on locality, temperature, measurements, and available control~\cite{anshu2020sample, zubida2021optimal, haah2022optimal, huang2023learning, bakshi2024learning, hu2025ansatz}.

However, physical systems are rarely perfectly isolated from their surroundings. Interaction with the environment causes decoherence, relaxation, dephasing, and other irreversible effects. These phenomena cannot be represented by a Hamiltonian alone, since Hamiltonian evolution is reversible and preserves purity, whereas open quantum system evolution is generally noisy and dissipative. Thus, a learning framework that only targets Hamiltonians does not capture the dynamics of realistic quantum devices.

For Markovian open quantum systems, the standard model is a \emph{Lindbladian} generator. The Gorini--Kossakowski--Sudarshan--Lindblad (GKSL) theorem states that every finite-dimensional, time-homogeneous, completely positive and trace-preserving quantum Markov semigroup has a generator of Lindblad form~\cite{gorini1976completely,lindblad1976generators}:
\begin{equation*}
\cL(\rho) = \underbrace{\sum_{P_a \neq I}-\iu h_a [P_a,\rho]}_{\text{Hamiltonian part}}\quad + \quad \underbrace{\sum_{P_b,P_c \neq I} \gamma_{bc} \ab\Big(P_b\rho P_c-\frac{1}{2}\{P_cP_b,\rho\})}_{\text{Dissipative part}}.
\end{equation*}
A Lindbladian contains both a coherent Hamiltonian part and a dissipative part, and therefore gives a unified description of reversible interactions and irreversible environmental effects. This makes Lindbladians a natural and widely used model for open-system many-body dynamics, noisy quantum hardware, engineered dissipation and dissipative state preparation, and open-system quantum simulators~\cite{barreiro2011anopen, harper2020efficient, chen2025efficientquantum, ding2025end, rouze2025efficient, shang2025fast, shang2025designing}. Learning a Lindbladian is therefore the natural open-system analogue of Hamiltonian learning: rather than only estimating the Hamiltonian coefficients $h_a$, one must also estimate the dissipative coefficients $\gamma_{bc}$ that specify the noise.

A common and physically relevant setting is that of \emph{local} Lindbladians, which govern the dynamics of most realistic open quantum systems. In a local Lindbladian, each elementary Hamiltonian or dissipative term acts on at most a constant number of qubits. This captures the usual few-body interactions in quantum many-body systems and the dominant local noise processes in quantum devices, such as single-qubit relaxation and dephasing together with a limited number of correlated few-qubit errors. In particular, the dissipative part is often \emph{low-intersection}, i.e., each qubit participates in only a bounded number of dissipative terms, since in real devices each qubit has a small number of dominant relaxation, dephasing, leakage, or correlated-error channels, constrained by the hardware connectivity and by the way gates are implemented~\cite{acharya2025quantum, harper2020efficient}.

This leads to the first fundamental question that we focus on in this work:
\begin{question}\label{q:upper}
Can one efficiently learn local Lindbladians from black-box access to their time evolution?
\end{question}
Here the efficiency is measured by the physical resources used by the learner. Following the Hamiltonian learning literature, we consider two main resources: the \emph{number of channel uses}, which counts how many experiments invoke the unknown evolution, and the \emph{total evolution time}, which is the accumulated physical time that the unknown dynamics is run for. Beyond the performance of any particular algorithm, it is also important to determine what quantum mechanics itself permits. For closed systems, the total evolution time of Hamiltonian learning can attain the Heisenberg limit $\Theta(1/\eps)$~\cite{huang2023learning}. It is natural to ask whether the same scaling persists for open-system dynamics, which leads to our second question:
\begin{question}\label{q:lower}
What are the fundamental limits on the resources required for Lindbladian learning?
\end{question}

We answer \cref{q:upper} affirmatively for local Lindbladians with low-intersection dissipators. We design an algorithm that learns all Hamiltonian and dissipative coefficients at the standard quantum limit using only non-adaptive, ancilla-free experiments. We then resolve \cref{q:lower} by proving information-theoretic lower bounds showing that the scaling of our algorithm in both resources is optimal up to logarithmic factors, and that the Heisenberg limit is unattainable once dissipative coefficients must be estimated. The next subsection specifies the access model and the learning problem, and states our main results.

\subsection{Results}
Our main result is a near-optimal algorithm for the local Lindbladian learning problem, that is, to estimate all Hamiltonian and dissipative coefficients of an unknown $k$-local Lindbladian $\cL$ from black-box access to its time evolution $\{e^{t\cL}\}_{t\geq 0}$, together with matching information-theoretic lower bounds. Before stating the results, we describe the access model and the learning problem. Formal definitions are given in \cref{sec:background}.

\paragraph{Access model and resource measures.}
In each experiment, the learner selects an evolution time $t\geq0$, prepares an input state, applies the unknown channel $e^{t\cL}$ once, and measures the output. Every experiment counts as one \emph{channel use}, and an experiment with evolution time $t_j$ contributes $t_j$ to the \emph{total evolution time} $\Ttot=\sum_j t_j$. We also track the \emph{time resolution}, which is the shortest evolution time used across all experiments. Protocols whose time resolution decreases polynomially with the target accuracy are difficult to realize in practice. Our lower bounds hold in the strongest form of this access model, where the learner may attach arbitrarily many ancillas, apply arbitrary quantum operations between channel uses, and choose every evolution time and measurement adaptively. In contrast, our algorithm only requires the weakest form of this model. It is non-adaptive and ancilla-free, with product-state inputs and single-qubit Pauli measurements.

\paragraph{Local Lindbladian learning.}
Before formally stating the learning problem, let us first define the class of Lindbladians under consideration; formal definitions are given in \cref{sec:background}. We consider a $k$-local Lindbladian for a constant $k$: every Hamiltonian term $P_a$ and every dissipative Pauli pair $(P_b,P_c)$ acts nontrivially on at most $k$ qubits, i.e., the weight of Pauli terms satisfies $\wt(P_a) \leq k$ and $\wt(P_b, P_c) \leq k$. We emphasize that no geometric locality is required and the $k$-qubit supports may be scattered arbitrarily. We only assume that the total strength of all interactions and noise processes touching any single qubit is bounded. Decomposing $\cL=\sum_S\cL_S$, where $\cL_S$ collects the terms supported exactly on the region $S$, the \emph{local dynamical strength} satisfies
\begin{equation}\label{eq:local_strength_intro}
  \Lloc\coloneq\max_{i\in[n]}\sum_{S\ni i}\norm[\big]{\cL_S^\dagger}_{\infty\to\infty} \leq \Lambda.
\end{equation}
The parameter $\Lloc$ is the open-system analogue of the one-spin energy used in Hamiltonian learning~\cite{bakshi2024structure}. Our algorithm will also depend on the \emph{dissipative site degree} $\Ddeg$, the maximum number of dissipative terms in which any single qubit participates:
\begin{equation}\label{eq:site_degree_intro}
  \Ddeg\coloneq\max_{i\in[n]}\ \abs[\big]{\cbra{(b,c):\gamma_{bc}\neq0,\ i\in\supp(P_b) \cup \supp(P_c)}}.
\end{equation}
When $\Ddeg = O(1)$, we say the Lindbladian has \emph{low-intersection dissipators}. This mirrors realistic devices, where each qubit suffers from a few dominant noise channels~\cite{acharya2025quantum}. 

The learner is given only the parameters $(k, \Ddeg, \Lambda)$, and has no prior knowledge of which Hamiltonian or dissipative terms are present in $\cL$. The learning task is formally defined as follows.

\begin{problem}[Lindbladian learning]\label{prob:Lindbladian_learning}
Fix $k=O(1)$, let $\cL$ be an unknown $k$-local Lindbladian on $n$ qubits with dissipative site degree $\Ddeg$ and local dynamical strength at most $\Lambda$ (\cref{eq:local_strength_intro,eq:site_degree_intro}), whose Hamiltonian coefficients are $\{h_a\}$ and dissipative coefficients are $\{\gamma_{bc}\}$. Given $\eps,\delta>0$, along with adaptive black-box access to the semigroup $\{e^{t\cL}\}_{t\geq 0}$, find estimates $\{\hat h_a\}$ and $\{\hat\gamma_{bc}\}$ such that
\[
  \max_{a}{\abs{\hat{h}_a-h_a}}\leq \eps,
  \qquad
  \max_{(b,c)}{\abs{\hat\gamma_{bc}-\gamma_{bc}}} \leq \eps
\]
with probability at least $1-\delta$.
\end{problem}

Our first main result solves \cref{prob:Lindbladian_learning} with near-optimal resources, answering \cref{q:upper}. We summarize its guarantees informally as follows.

\begin{theorem}[Upper bound for local Lindbladian learning; informal version of \cref{thm:learning_upper}]\label{thm:informal_upper}
Fix $k=O(1)$, and let $\cL$ be an unknown $k$-local Lindbladian on $n$ qubits with bounded dissipative site degree $\Ddeg=O(1)$ and local dynamical strength at most $\Lambda$. For any $\eps,\delta\in(0,1)$, there is a learning algorithm with the following guarantees:
\begin{enumerate}
    \item \emph{(Accuracy)} the outputs satisfy $\max_a{\abs{\hat h_a-h_a}}\leq\eps$ and $\max_{(b,c)}{\abs{\hat\gamma_{bc}-\gamma_{bc}}}\leq\eps$ with probability at least $1-\delta$;
    \item \emph{(Channel uses)} it uses the channel $\widetilde O\ab\big(\frac{\Lambda^2}{\eps^2}\log\frac{n}{\delta})$ times;
    \item \emph{(Total evolution time)} its total evolution time is $\Ttot=\widetilde O\ab\big(\frac{\Lambda}{\eps^2}\log\frac{n}{\delta})$;
    \item \emph{(Time resolution)} every experiment evolves the system for time $t=\widetilde\Theta(1/\Lambda)$, and only $O(\log(\Lambda/\eps))$ distinct evolution times are ever used;
    \item \emph{(Simple experimental primitives)} the experiments are non-adaptive and ancilla-free: one prepares a uniformly random Pauli product eigenstate, evolves, and measures each qubit in a random Pauli basis;
    \item \emph{(Classical overhead)} the classical post-processing time is $\widetilde O\ab\big(n^k\frac{\Lambda^2}{\eps^2}\log\frac{n}{\delta})$.
\end{enumerate}
\end{theorem}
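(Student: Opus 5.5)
The proof follows the pipeline sketched in the abstract: short-time channel probes, shadow estimation of local Pauli transfer matrix (PTM) entries, and then a stable polynomial (Fourier-type) inversion. We first fix a set of evolution times $t_j=\theta_j/\Lambda$, with $\theta_j$ on a Chebyshev-type grid of size $m=O(\log(\Lambda/\eps))$ in a constant interval. For each $t_j$ we run $N$ experiments. Each experiment prepares a uniformly random product Pauli eigenstate, applies $e^{t_j\cL}$, and measures every qubit in a random Pauli basis. By the standard product-state classical-shadow argument for channels, this yields unbiased estimators of every PTM entry $R_{PQ}(t)=2^{-n}\Tr(P\,e^{t\cL}(Q))$ with $\wt(P),\wt(Q)\le 2k$, each with variance $O(9^{2k})=O(1)$. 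A median-of-means estimator combined with a union bound over the $O(n^{2k})$ relevant entries and $m$ times then gives uniform accuracy $\eta$ with $N=O(\eta^{-2}\log(n m/\delta))$. The classical cost is $\widetilde O(n^{k}N)$ if we only evaluate entries whose supports are compatible, which is consistent with the claimed overhead.

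The second step, which I expect to be the main obstacle, relates the PTM entries to the coefficients. For a local pair $(P,Q)$ we write $R_{PQ}(t)=\delta_{PQ}+t\,L_{PQ}+\sum_{r\ge2}\frac{t^r}{r!}(\cL^r)_{PQ}$, where $L_{PQ}$ is the Heisenberg-picture matrix element of $\cL^\dagger$. The entries $L_{PQ}$ are explicit linear combinations of $h_a$ and $\gamma_{bc}$. For instance, $L_{PQ}$ with $Q=P_aP$ anticommuting picks out $h_a$ plus dissipative contributions involving pairs with $P_bP_c\propto P_a P$, and the diagonal entries $L_{PP}$ give sums of $\gamma_{bb}$. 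We then need two facts. (i) Truncation: using $\Lloc\le\Lambda$ and a Lieb--Robinson or cluster-expansion bound, $\abs{(\cL^{\dagger r}(P))_Q}\le (C r\Lambda)^r$ for local $P$, so the function $f(\theta)=R_{PQ}(\theta/\Lambda)$ is analytic with Taylor coefficients decaying geometrically on a disc of constant radius. (ii) Stable inversion: $f'(0)=L_{PQ}/\Lambda$ is recovered by differentiating the Chebyshev interpolant of degree $m$ at the grid points. Its derivative functional has $\ell_1$ norm $O(m^2)=\widetilde O(1)$ on the constant interval, and the truncation error is $e^{-\Omega(m)}$. Hence $L_{PQ}$ is estimated to accuracy $\widetilde O(\Lambda\eta)+\eps/2$, and setting $\eta=\widetilde\Theta(\eps/\Lambda)$ suffices. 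This gives $mN=\widetilde O(\Lambda^2\eps^{-2}\log(n/\delta))$ channel uses and total time $\widetilde O(\Lambda^{-1})\cdot mN=\widetilde O(\Lambda\eps^{-2}\log(n/\delta))$, with time resolution $\widetilde\Theta(1/\Lambda)$, as required.

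The last step solves the linear system mapping the estimated $\{L_{PQ}\}$ to $\{h_a,\gamma_{bc}\}$ locally and with $O(1)$ condition number. The low-intersection hypothesis $\Ddeg=O(1)$ enters here. For each $k$-qubit region $S$, the entries $L_{PQ}$ with $P,Q$ supported in $S$ depend only on coefficients of terms intersecting $S$, and there are $O(\Ddeg\cdot 4^{k})=O(1)$ such dissipative terms. Their contribution from outside $S$ acts on $P$ through a commutator structure, which can be separated by choosing $P$ to commute with those external parts. We would show that the restricted GKSL map from coefficients to these entries is injective, via the identity $\gamma_{bc}=\tfrac{1}{2^{2k}}\sum_{P}(\ldots)$ obtained from the Pauli-twirl (Fourier) transform over $P\in\{I,X,Y,Z\}^{S}$. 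This makes each local inverse an $O(1)$-bounded Fourier inversion, so errors grow by only a constant factor. Structure learning comes for free, since all $O(n^k)$ candidate coefficients are estimated, and rescaling $\eps$ by a constant completes the argument.
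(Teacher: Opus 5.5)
Your first two steps follow the paper. Random Pauli-eigenstate inputs with random Pauli measurements give unbiased $O(1)$-variance estimators of local PTM entries. Chebyshev--Lobatto endpoint differentiation on an interval of length $\Theta(1/(k\Lambda))$ then works because locality gives $|R^{(r)}_{vw}(t)|\le r!\,(k\Lambda)^r$; the paper proves this by directly counting nested supports, with no Lieb--Robinson bound. The gap is in your last step, which is where the paper does its real work.

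For $P,Q$ supported in a region $S$, take a dissipative pair $(b',c')=(b\oplus e,c\oplus e)$ obtained by appending the same Pauli string $P_e$ to both labels outside $S$. It acts on every such $Q$ exactly as $(b,c)$ does, since $P_eQP_e=Q$ and $P_{c'}P_{b'}=P_cP_b$. Hence every entry $L_{PQ}$ with $P,Q$ supported in $S$ sees only the aggregated sum $\chi^S_{b,b\oplus u}=\chi_{b,b\oplus u}+\sum_{b'\neq b,\ b'|_S=b}\chi_{b',b'\oplus u}$. The Fourier inversion over $\{I,X,Y,Z\}^S$ returns this sum, not $\gamma_{bc}$. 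So the restricted map you claim is injective is \emph{not} injective. No choice of $P$ inside $S$ can separate the outside parts, because they are invisible there.

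The Hamiltonian coefficients are affected too. The entries $\gamma_{X_2,Z_1X_2}=\mathrm{i}g=\overline{\gamma_{Z_1X_2,X_2}}$ contribute $-\mathrm{i}g[Z_1,Q]$ for every $Q$ on qubit $1$, exactly like $h_{Z_1}=g$. Your count of $O(\Ddeg 4^k)$ dissipative terms near $S$ presupposes knowing which terms are present. The learner only knows $(k,\Ddeg,\Lambda)$, and a pair of union support $s<k$ has $\Theta(n^{k-s})$ candidate aliases.

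The missing idea is de-aliasing across regions without knowing the support. Process candidates by decreasing union support size $s=k,k-1,\dots,1$. At $s=k$ no aliases exist, so the local inversion is exact up to $\eta$. At smaller $s$, subtract from $\chi^S$ the already-estimated larger aliases. Subtracting all candidate aliases would multiply the error by $\Kden=O(n^{k-1})$. To keep the amplification $O(1)$ you must threshold: retain a candidate only if its residual exceeds $\tau_s=\eta+\Ddeg E_{s+1}$. Then use that every true alias acts nontrivially on each qubit of $S$, so at most $\Ddeg$ of them are nonzero.

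A descending induction shows that no inactive candidate is ever retained and that $E_s=2(\eta+\Ddeg E_{s+1})$. This gives the factor $\Kdeg=2\sum_{j=0}^{k-1}(2\Ddeg)^j$. The same bound covers the boundary entries $\chi_{u0},\chi_{0u}$ used for $h_u=\tfrac{\mathrm{i}}{2}(\chi_{u0}-\chi_{0u})$. Without this step, the accuracy guarantee (item 1) does not follow from PTM-generator accuracy $\eta=\Theta(\varepsilon)$.
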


\begin{remark}[Generic local dissipators]
The low-intersection assumption is applied only in the coefficient-recovery stage. In general, the channel-use, evolution-time, and classical-time bounds contain $\Krec^2$, where the recovery factor $\Krec=\min\{\Kdeg,\Kden\}$ satisfies $\Kdeg=2\sum_{j=0}^{k-1}(2\Ddeg)^j$, which is $O(1)$ whenever $\Ddeg=O(1)$, and $\Kden=O(n^{k-1})$ unconditionally. Therefore, the cost of the algorithm remains polynomial in all parameters even without any low-intersection assumption.
\end{remark}

We complement the algorithm with lower bounds that answer \cref{q:lower}, showing that both resource scalings are optimal up to logarithmic factors, and that they remain so in the strongest access model. The formal statements are \cref{thm:lower_bound_measurement} (channel uses), \cref{thm:lower_bound_time} (total evolution time), and \cref{cor:expected_time_lower} (expected evolution time of fully adaptive strategies).

\begin{theorem}[Lower bounds for local Lindbladian learning; informal version of \cref{thm:lower_bound_measurement,thm:lower_bound_time,cor:expected_time_lower}]\label{thm:informal_lower}
Let $\Lambda>0$ and $\eps\in(0,\Lambda/16]$. Any algorithm that estimates the coefficients of unknown $k$-local Lindbladians with local dynamical strength at most $\Lambda$ up to $\ell^\infty$ error $\eps$ with probability at least $2/3$ must use the channel $\Omega(\Lambda^2/\eps^2)$ times, and its (expected) total evolution time must be $\Omega(\Lambda/\eps^2)$. This holds even for adaptive algorithms that use arbitrary ancillas, entangling operations, entangled measurements, and adaptively chosen evolution times.
\end{theorem}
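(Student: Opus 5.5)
We plan to prove both bounds with a two-point (Le Cam) reduction on single-qubit dephasing Lindbladians $\cL_\gamma(\rho)=\gamma(Z\rho Z-\rho)$. These are $1$-local, so they are $k$-local for every $k$. We compare $\gamma_0=\Lambda/4$ with $\gamma_1=\Lambda/4+2\eps$. Both have local dynamical strength $\norm{\cL_\gamma^\dagger}_{\infty\to\infty}\le 2\gamma\le\Lambda$, and this is where $\eps\le\Lambda/16$ is used. Any $\eps$-accurate estimator of $\gamma_{ZZ}$ distinguishes the two hypotheses with success probability at least $2/3$. It therefore suffices to show that the adaptive protocol's final states $\rho_0,\rho_1$ satisfy $\mathrm{TV}\ge 1/3$, and hence a Bures/fidelity distance of order one, only when the resources are large.

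The key quantity is the per-use distinguishability of $e^{t\cL_{\gamma_0}}$ and $e^{t\cL_{\gamma_1}}$ when the channels are tensored with an identity on an arbitrary ancilla. The channel $e^{t\cL_\gamma}$ is the dephasing channel $\rho\mapsto(1-p)\rho+pZ\rho Z$ with $p=(1-e^{-2\gamma t})/2$. I will bound the channel (Bures/infidelity) divergence between $p_0$ and $p_1$ by the classical Hellinger/$\chi^2$ distance of Bernoulli$(p)$. The reason is that this channel is a random-unitary mixture with orthogonal Kraus operators $I,Z$, so a purification (Stinespring) argument applies. Differentiating in $\gamma$ gives $\partial_\gamma p=te^{-2\gamma t}$, so the per-use channel Fisher information is
\[
F(t)\;\lesssim\;\frac{(\partial_\gamma p)^2}{p(1-p)}\;=\;\frac{4t^2e^{-4\gamma t}}{1-e^{-4\gamma t}}.
\]
For $t\lesssim 1/\gamma$ this is $\approx t/\gamma$. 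For $t\gtrsim1/\gamma$ it is exponentially small. Hence $F(t)\lesssim \min\{t/\gamma,\ 1/\gamma^2\}$ uniformly, up to constants. The essential point is the linear-in-$t$ scaling, in contrast with the $t^2$ scaling of Hamiltonian learning; the absence of quadratic growth is what rules out the Heisenberg limit.

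Next, I will invoke the standard chain rule for adaptive channel discrimination. With arbitrary interleaved processing, ancillas, and adaptively chosen times, the infidelity or squared Bures distance between the final states is at most the sum over uses of the channel Bures divergence, $\sum_j c\,\eps^2F(t_j)$. One can see this either through the Bures metric triangle inequality applied use by use, or through a QFI additivity bound for adaptive strategies with channels whose Kraus representation is ``classical'' (i.e. $\alpha$-bound being zero for commuting $I,Z$). Requiring this sum to be $\Omega(1)$, and using $F(t_j)\lesssim t_j/\gamma$, gives $\Ttot\gtrsim\gamma/\eps^2=\Omega(\Lambda/\eps^2)$. Using $F(t_j)\lesssim1/\gamma^2$ gives $N\gtrsim\gamma^2/\eps^2=\Omega(\Lambda^2/\eps^2)$. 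If the evolution times are random and adaptive, the same sum holds in expectation via a martingale/Wald-type argument, which gives the expected-time statement.

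The main obstacle is the additivity step in the fully adaptive, ancilla-assisted setting. Bures distance composes subadditively only when each step is a single channel. Adaptive choice of $t$ means the channel applied depends on earlier outcomes, and a naive bound on the diamond norm would scale like $\eps\sqrt{t}$ per use, summing to the wrong exponent. To handle this, I would first try to simulate the pair of dephasing channels via a shared classical environment. The channel $e^{t\cL_\gamma}$ equals the application of $Z^{x}$ with $x\sim$ Bernoulli$(p(t))$. Alternatively, the whole semigroup is a Poisson process of $Z$-kicks with rate $\gamma$. In this picture the entire adaptive protocol becomes post-processing of a Poisson process observed over total time $\Ttot$. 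Its rate can be estimated only with Hellinger distance $\approx\eps^2\Ttot/\gamma$, and the count of uses caps the information per use by a constant times $\eps^2/\gamma^2$. Data processing then yields both bounds rigorously.
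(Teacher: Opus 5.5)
There is a concrete error in your reduction. With $\gamma_1-\gamma_0=2\eps$, an $\eps$-accurate learner need not distinguish the two instances. The constant output $\hat\gamma=\gamma_0+\eps$, with all other coefficients set to zero, is within $\eps$ of both $\gamma_0$ and $\gamma_1$ and uses no channel at all, so no lower bound can follow. You need a separation strictly larger than $2\eps$. The paper takes $\gamma_1=\gamma_0+4\eps$ and thresholds $\hat\gamma$ at $\gamma_0+2\eps$. This is also where $\eps\le\Lambda/16$ actually enters: it is exactly the condition $2\gamma_1=\Lambda/2+8\eps\le\Lambda$. With your $2\eps$ you would only have needed $\eps\le\Lambda/8$. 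The fix changes only constants.

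You correctly identify the composition step as the crux, but as written it is not yet an argument.
\begin{itemize}
\item The Bures triangle inequality gives $d_B\le\sum_j d_{B,j}$ with $d_{B,j}\sim\eps\sqrt{t_j/\gamma}$. That is exactly the quadratic, Heisenberg-type accumulation you need to rule out.
\item The adaptive QFI bound is local, so it must be integrated over $[\gamma_0,\gamma_1]$. For outcome-dependent times it controls a worst case over $t$, not $\E[\Ttot]$. Also, for the Kraus pair $\sqrt{1-p}\,I,\ \sqrt{p}\,Z$ it is $\beta$, not $\alpha$, that vanishes.
\end{itemize}
Your last idea, revealing the $Z$-kicks, is precisely the paper's device. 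The paper closes it by working with relative entropy rather than Hellinger or Bures distance.
\begin{itemize}
\item Take any pair of ancilla-extended inputs, attach the Bernoulli flag, then trace it out. This gives $D(e^{t\cL_{\gamma_0}}(\rho_0)\parallel e^{t\cL_{\gamma_1}}(\rho_1))\le D(\rho_0\parallel\rho_1)+D_\rmb(p_0\parallel p_1)$.
\item Intermediate adaptive operations can only decrease $D$.
\item The chain rule holds in expectation over adaptively chosen times and stopping, which gives the expected-time statement directly. Fidelity or Hellinger products over adaptive designs only control the worst-case path.
\end{itemize}
No Fisher information is needed after that. The finite-difference bound $D_\rmb(p_0\parallel p_1)\le (p_0-p_1)^2/(p_1(1-p_1))$ already gives $\min\{\Delta^2/\gamma_0^2,\ \Delta^2 t/\gamma_0\}$ uniformly in $t$, where $\Delta=\gamma_1-\gamma_0$. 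Pinsker's inequality with total variation at least $1/3$ then finishes. If you prefer fidelity, the same flag construction gives $\sqrt{F}(\mathrm{out})\ge \mathrm{BC}(p_0,p_1)\,\sqrt{F}(\mathrm{in})$ per use. That suffices for the channel-use bound and for fixed time schedules, but not cleanly for the expected-time bound.
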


In particular, \cref{thm:informal_lower} shows that once dissipative coefficients must be estimated, the total evolution time is pinned to the standard quantum limit $1/\eps^2$, so the Heisenberg-limited scaling of Hamiltonian learning is information-theoretically impossible for Lindbladian learning.

\subsection{Related work}

\paragraph{Hamiltonian learning.}
Characterizing unknown Hamiltonians has been studied extensively, with early works exploring machine-learning and quantum-simulation-based approaches for practical characterization \cite{dasilva2011practical, henstchel2010machine, granade2012robust, Wiebe2014hamiltonian}, and later formulations estimating the coefficients of Hamiltonians under different forms of access to the system, including real-time dynamics, thermal states, and eigenstates \cite{gu2024practical, bairey2019learning, zubida2021optimal, qi2019determining, evans2019scalable}.
In the dynamical setting, $n$-qubit $k$-local Hamiltonians with low-intersection interaction terms have been shown to be learnable with Heisenberg-limited scaling $O(1/\eps)$ \cite{huang2023learning}, while subsequent work established that quantum control is necessary to attain this scaling in general \cite{dutkiewicz2024advantage}.
Further developments have relaxed, or even removed, the assumption of local interactions \cite{zhao2025learning, hu2025ansatz, Sinha2025improved, ma2024learning}, explored learning from long-time evolution rather than many short-time experiments \cite{shin2026heisenberg, pradenne2026learning}, proved a lower bound of $(n/k)^{\Omega(k)}$ on the total evolution time~\cite{chen2026lower}, and learned the underlying Hamiltonian by estimating the short-time Pauli transfer matrix~\cite{zubida2021optimal,franca2024efficient,caro2024learning}.
Another line of work considers Hamiltonian learning from Gibbs states $e^{-\beta H}$: sample-efficient learning was first established in \cite{anshu2020sample}, and computationally efficient algorithms with optimal sample complexity were later obtained in the high-temperature regime \cite{haah2022optimal}, which also exhibits a ``sudden death'' of thermal entanglement \cite{bakshi2024high}, and were extended to the constant-temperature regime $\beta=O(1)$ \cite{bakshi2024learning}.
Other variants, including structure learning \cite{bakshi2024structure}, robust learning \cite{yu2023robust}, certification \cite{gao2026quantum, bluhm2026certifying}, and property testing \cite{bluhm2026hamiltonian, arunachalam2025testing}, have also been considered, and experimental demonstrations have been reported \cite{wang2017experimental, gentile2021learning, hangleiter2024robustly, guo2025hamiltonian}.

\paragraph{Lindbladian learning.}
Early attempts to reconstruct Markovian open-system dynamics from measured data date back over two decades \cite{buzek1998reconstruction, boulant2003robust}. Recent work includes learning from steady states~\cite{bairey2020learning}, heuristic and numerical procedures for fitting noise models to tomographic data~\cite{onorati2023fitting, vandenberg2024techniques, liu2025robust}, experimental demonstrations~\cite{birke2026demonstrating}, detection of dissipation with Heisenberg-limited scaling~\cite{cai2026optimal}, and non-Markovian noise learning~\cite{montana-lopez2025efficiently}. The works most relevant to ours learn Lindbladian coefficients from access to the time evolution~\cite{franca2024efficient, franca2025learning, ivashkov2026ansatz}. These protocols share the broad template of estimating short-time derivatives and then inverting a linear system, but differ in their structural assumptions and in the conditioning of the inverse problem.

Fran\c{c}a et al.~\cite{franca2024efficient} showed that geometrically local Lindbladians obeying a Lieb--Robinson bound can be learned at the standard quantum limit. Combining polynomial interpolation of time derivatives with a shadow process tomography scheme, their protocol estimates the coefficients to accuracy $\eps$ using $\widetilde O(1/\eps^2)$ samples. This interpolation and Lieb--Robinson framework was subsequently extended to learning and certification of time-dependent local dynamics~\cite{franca2025learning}.
Their learning protocols are limited to simple single-qubit dissipators, the conditioning of recovering the general two-sided dissipative coefficients is not analyzed, and the degree of the interaction graph is assumed to be a constant. In contrast, our result applies to arbitrary constant-local dissipators, and makes no assumption on the geometrical locality.

Ivashkov et~al.~\cite{ivashkov2026ansatz} initiated \emph{ansatz-free} Lindbladian learning, targeting $M$-sparse Lindbladians without any locality assumption. This is a broader and intrinsically harder problem than the local setting considered here, but their end-to-end Lindbladian learning protocol uses the channel $\widetilde O(M^4 \nu^2/\eps^4)$ times, which depends on an instance-dependent conditioning factor $\nu$. The factor $\nu$ cannot be removed in their algorithm even for local Lindbladians, and it may scale exponentially in the underlying dimension in the worst case. It has been shown in~\cite{montana-lopez2025efficiently} that the conditioning factor can be bounded for the $k$-local and low-intersection cases, but this requires the knowledge of the supports. Our learning algorithm avoids this bottleneck of conditioning by decoupling the inverse problem into local Walsh--Hadamard transforms with condition number one, followed by a controlled de-aliasing step.

\paragraph{Concurrent work.}
While completing this manuscript, we became aware of several independent and concurrent works on Lindbladian learning~\cite{romanov2026learning, sinha2026efficient, mobus2026robust, lewis2026learning}.

Romanov et~al.~\cite{romanov2026learning} give the first standard-quantum-limited algorithm for the ansatz-free setting of~\cite{ivashkov2026ansatz}, learning arbitrary $M$-sparse Lindbladians with total evolution time $\widetilde O(M/\eps^2)$, and, under an additional regularity condition, learning part of the Hamiltonian at the Heisenberg limit. They remove the conditioning bottleneck of~\cite{ivashkov2026ansatz} through a recursive quantum error correction construction, which requires ancillary qubits and Clifford control interleaved with the evolution. In the local regime, our protocol attains the standard quantum limit with no ancillas or mid-evolution control, and with only logarithmic dependence on the system size. Sinha~\cite{sinha2026efficient} also revisits the ansatz-free setting, with an emphasis on robustness to state-preparation-and-measurement errors.

M\"obus et~al.~\cite{mobus2026robust} consider the same $k$-local problem through a similar route of short-time PTM estimation followed by local inversion, and additionally project the estimate onto a valid GKSL generator via a semidefinite program. Treating the dynamical strength as a constant, they use $\widetilde O(n^{2k}/\eps^2)$ samples to learn $k$-local Lindbladians, and $\widetilde O(\log(n)/\eps^2)$ samples when a candidate support with bounded intersection degree is given as input, which achieves a similar complexity to our result. In comparison, our algorithm does not need the knowledge of the support, and our lower bounds hold against adaptive protocols with arbitrary ancillas, while theirs are restricted to product inputs and Pauli measurements.

Lewis, Tang, and Wright~\cite{lewis2026learning} learn constant-local Lindbladians with an algorithm in a setting similar to ours. In terms of the approximate interaction-graph degree $d$ and the single-site energy $g$ (the analogue of our parameter $\Lambda$), the total evolution time is $\widetilde{O}(gd^2/\eps^2)$. We emphasize a key difference in the degree parameters: the parameter $d$ in~\cite{lewis2026learning} is the degree of the \emph{entire} interaction graph, counting Hamiltonian and dissipative terms together, whereas our $\Ddeg$ constrains only the dissipative part, and we impose no degree bound on the Hamiltonian part. Consequently, for systems whose Hamiltonian interaction graph is dense but whose noise consists of a few dominant channels per qubit, the degree $d$ can grow polynomially with the system size while $\Ddeg$ stays constant. When the entire interaction graph has bounded degree, their algorithm attains total evolution time $\widetilde O(g/\eps^2)$ and saturates our lower bound, yielding the same performance as our algorithm. In terms of the dependence on the respective degree parameters, the algorithm of~\cite{lewis2026learning} scales polynomially as $d^2$, improving on the $\Ddeg^{2k}$ factor of our algorithm, while our classical post-processing is polynomial and theirs is quasi-polynomial in general.

\subsection{Technical overview}

Our algorithm accesses the unknown dynamics only through non-adaptive randomized experiments at a logarithmic number of short evolution times, followed by classical post-processing, in three stages,
\[
  \{e^{t\cL}\}_{t\geq0}
  \ \xrightarrow{\ \text{probe}\ }\
  \cbra[\big]{\bar R_{vw}(t_j)}
  \ \xrightarrow{\ \text{differentiate}\ }\
  \cbra[\big]{\hat L_{vw}}
  \ \xrightarrow{\ \text{invert \& de-aliasing}\ }\
  \cbra[\big]{\hat h_a}\cup\cbra[\big]{\hat\gamma_{bc}}.
\]
We describe the three stages in turn and then explain the ideas behind the matching lower bounds. Compared with Hamiltonian learning, the two main technical differences are (i) derivative bounds that depend on the local rather than the global dynamical strength, and (ii) a coefficient-recovery step, which we reduce to well-conditioned local Fourier transforms followed by a de-aliasing procedure.

\paragraph{Estimating the generator from finite-time probes.}
We cannot access the generator $\cL$ directly but only observe the finite-time dynamics $e^{t\cL}$. The two are connected by the continuous-time Pauli transfer matrix (PTM),
\[
  R_{vw}(t)\coloneq\frac{1}{2^n}\tr\ab\big(P_v\,e^{t\cL}(P_w)), \qquad
  L_{vw}\coloneq R'_{vw}(0)=\frac{1}{2^n}\tr\ab\big(P_v\,\cL(P_w)),
\]
for Pauli operators $P_v,P_w$. Since the coefficients will be recovered region by region, we only need the $O(n^k)$ entries indexed by $\cT_k=\{(v,w):\wt(v,w)\leq k\}$. To estimate the finite-time entries, we adapt an ancilla-free shadow process tomography protocol~\cite{franca2024efficient,levy2024classical,kunjummen2023shadow} (\cref{sec:ptm_learning}). Each experiment prepares a uniformly random product state of single-qubit Pauli eigenstates, applies $e^{t\cL}$, and measures every qubit in an independent uniformly random Pauli basis. Each measurement record yields, for every target pair $(v,w)\in\cT_k$, an unbiased single-shot estimator $\hat R_{vw}(t)$ with second moment at most $3^{2k}$ (\cref{lem:unbiasedness_of_R,lem:variance}). One batch of records is therefore reused for all $O(n^k)$ target entries through different classical post-processing, rather than new channel uses. The difficulties specific to open systems appear later, when the generator entries are converted into physical coefficients.

The generator entry $L_{vw}=R'_{vw}(0)$ is a derivative at the boundary $t=0$, which must be estimated from the noisy finite-time estimates (\cref{sec:chebyshev}). The naive finite difference $[R_{vw}(\tau)-R_{vw}(0)]/\tau$ suffers from a bias--variance trade-off: its bias is $O(\Lambda^2\tau)$, which forces $\tau=O(\eps/\Lambda^2)$, and dividing the statistical error by such a small $\tau$ increases the sample complexity to $O(1/\eps^4)$~\cite{bairey2020learning,zubida2021optimal}. Instead, following the interpolation technique of recent generator-learning algorithms~\cite{caro2024learning,franca2024efficient,gu2024practical,ivashkov2026ansatz,he2026optimal}, we place $q+1$ Chebyshev--Lobatto nodes $t_j=\frac{T}{2}(1-\cos\frac{j\pi}{q})$ on a short interval $[0,T]$ and differentiate the degree-$q$ interpolating polynomial at $t=0$. This gives an explicit endpoint differentiation rule $\hat L_{vw}=\sum_{j=0}^{q}\ell_j'(0)\,\bar R_{vw}(t_j)$, a linear combination of the empirical PTM estimates $\bar R_{vw}(t_j)$ with Lagrange weights $\ell_j'(0)$. The value at $t_0=0$ is known exactly since $R_{vw}(0)=\mathbf{1}[v=w]$, so experiments are only run at the $q$ positive nodes. The estimation error consists of the interpolation bias, which is controlled by the $(q+1)$-st time derivative of $R_{vw}$, and the statistical error, which is amplified by the weight sum $\sum_{j=1}^q\abs{\ell_j'(0)}=O(q^2/T)$. The interval length $T$ must therefore balance the growth of the derivatives against that of the weights.

\paragraph{Bounding the derivatives by the local dynamical strength.}
For a generic observable, the best available bound on the derivatives of $R_{vw}$ is $\abs{R^{(r)}_{vw}(t)}\leq\norm{\cL^\dagger}^r_{\infty\to\infty}$ in terms of the \emph{global} dynamical strength, which grows as $\Theta(n)$ for an extensive system and would make both the interval length $T$ and the final complexity scale polynomially with the system size. Our first key observation is that locality removes this dependence: for every $(v,w)\in\cT_k$, $r\in\N$, and $t\geq0$, it holds that
\[\abs[\big]{R^{(r)}_{vw}(t)}\leq r!\,(k\Lloc)^r.\]

The proof works in the Heisenberg picture, where $R^{(r)}_{vw}(t)=2^{-n}\tr\ab\big((\cL^\dagger)^r(P_v)\,e^{t\cL}(P_w))$: expanding $(\cL^\dagger)^r$ into $r$ layers of local terms acting on the $k$-local operator $P_v$, a summand is nonzero only if each layer intersects the support accumulated by the previous ones, which has size at most $\ell k$ before the $\ell$-th layer, so the $\ell$-th layer contributes a factor of at most $\ell k\Lloc$ and the $r$ layers together give $r!\,(k\Lloc)^r$. The factorial growth implies that $t\mapsto R_{vw}(t)$ is analytic with a radius of order $1/(k\Lloc)$. Fixing the interval length $T=1/(2k\Lambda)$, the interpolation error decays geometrically as $\abs[\big]{L_{vw}-p_q'(0)}\leq(k\Lambda)^{q+1}T^{q}=k\Lambda\cdot2^{-q}$, so $q=\lceil\log(2k\Lambda/\eps)\rceil$ nodes suffice to make the bias at most $\eps/2$ (\cref{lem:truncation}). This choice of parameters determines all the complexity bounds in \cref{thm:informal_upper}. The evolution times satisfy $t_j\in[\Theta(T/q^2),\,T]$, so the time resolution is $\widetilde\Theta(1/\Lambda)$, independent of $\eps$ up to logarithmic factors. The differentiation weights sum to $O(k\Lambda q^2)$, so the Bernstein bound gives $\widetilde O\ab\big(3^{2k}k^2\frac{\Lambda^2}{\eps^2}(k\log n+\log\frac{1}{\delta}))$ channel uses. Since each access evolves for time at most $T=O(1/(k\Lambda))$, the total evolution time is smaller than the number of channel uses by a factor of $\Lambda$, which gives $\widetilde O(\Lambda/\eps^2)$ (\cref{thm:ptm_generator_estimation}).

\paragraph{Recovering the coefficients by local Fourier inversion.}
It remains to convert the estimated generator entries $\{\hat L_{vw}\}$ into estimates of the physical coefficients, and this step is where Lindbladian learning differs most from Hamiltonian learning. For closed dynamics, each short-time expectation value directly encodes a single Hamiltonian coefficient, whereas each Lindbladian entry $L_{vw}$ is a linear combination of many dissipative coefficients $\gamma_{bc}$. Prior approaches therefore invert an instance-dependent linear system whose condition number is either not analyzed or can be exponentially large in the worst case~\cite{franca2024efficient,ivashkov2026ansatz,montana-lopez2025efficiently}. We avoid this issue by choosing bases in which the linear map becomes an orthogonal transform (\cref{sec:coeff_recovery}).

Write $\cL$ in the left--right Pauli ($\chi$-matrix) basis, $\cL(A)=\sum_{b,c}\chi_{bc}P_b A P_c$. The $\chi$-matrix entries determine the physical coefficients directly (\cref{lem:chi_to_params}) as $h_a=\frac{\iu}{2}\ab\big(\chi_{a0}-\chi_{0a})$ and $\gamma_{bc}=\chi_{bc}$. Substituting the $\chi$-expansion into the PTM generator gives $L_{vw}=\sum_{b,c}\chi_{bc}\cdot2^{-n}\tr(P_v P_b P_w P_c)$, and the trace vanishes unless $v\oplus b\oplus w\oplus c=0$. Consequently, the linear map from $\{L_{vw}\}$ to $\{\chi_{bc}\}$ decouples according to the \emph{Pauli shift} $u\coloneq v\oplus w=b\oplus c$: only entries with the same shift mix. For a fixed shift $u$, absorbing explicitly computable Pauli phases into $F_u(w)\propto L_{w\oplus u,w}$ and $G_u(b)\propto\chi_{b, b\oplus u}$, the relation collapses to
\[
  F_u(w)=\sum_{b}G_u(b)\,(-1)^{\sinprod{b}{w}},
\]
which is a Walsh--Hadamard (discrete Fourier) transform over Pauli labels, where the symplectic inner product plays the role of the Boolean character. Restricting to a region $S$ of at most $k$ qubits turns this into the matrix identity $F^S_u=H_SG^S_u$ with $H_S^\top H_S=4^{\abs{S}}I$. The inversion \cref{eq:fwht_inverse} is thus a scaled orthogonal transform with condition number exactly $1$, so an entrywise error of $\eta$ on the PTM generator estimates leads to an error of at most $\eta$ on each inverted value, without any amplification. Moreover, each block has dimension $4^{\abs{S}}\leq4^k=O(1)$ and can be inverted by a fast Walsh--Hadamard transform at small classical cost (\cref{lem:recovery_time}).

\paragraph{De-aliasing without knowing the support.}
One issue remains for the local inversion: the Fourier transform on a region $S$ cannot distinguish a Pauli pair from its \emph{aliases}, namely the pairs $(b',b'\oplus u)$ with strictly larger support whose labels agree with $b$ on $S$ and coincide with each other outside $S$. The local inversion actually returns the aggregated sum
\[
  \chi^S_{b,b\oplus u} = \chi_{b,b\oplus u} + \sum_{\substack{b'\neq b:\ b'|_S=b\\ \gamma_{b',b'\oplus u}\neq0}}\chi_{b',b'\oplus u}.
\]
This aliasing only occurs for dissipative dynamics: for closed systems all outside extensions vanish, so the issue does not arise in short-time Hamiltonian learning. Without knowing the support of $\cL$, we remove the aliases by a thresholded peeling procedure (\cref{alg:coeff_recovery_threshold}) that processes candidates from union support size $k$ down to $1$: at size $s$, it subtracts the previously retained larger aliases from the aggregated sum and thresholds the residual at $\tau_s=\eta+\Ddeg\,E_{s+1}$, where $\eta$ is the accuracy of the PTM generator estimates and $E_s=2\tau_s$ is the reconstruction error at size $s$. The threshold controls the number of subtracted terms: a candidate has $\Theta(n^{k-s})$ potential aliases, but every nonzero alias acts on each qubit of $S$ in the same way as $(b,b\oplus u)$, so the dissipative site degree bounds their number by $\Ddeg$. A descending induction shows that the reconstruction error satisfies $E_s=2(\eta+\Ddeg E_{s+1})$, which gives $E_1=2\eta\sum_{j=0}^{k-1}(2\Ddeg)^j=\Kdeg\,\eta$, so the error only increases by a constant factor when $k,\Ddeg=O(1)$ (\cref{lem:recovery_error}). The same bound applies to the Hamiltonian coefficients, since their boundary entries $\chi_{u0}$ and $\chi_{0u}$ only alias with dissipative pairs. When no degree bound is available, a dense variant (\cref{alg:coeff_recovery_dense}) subtracts every potential alias, and the error increases by a factor of $\Kden=O(n^{k-1})$ instead. The full algorithm runs the PTM estimation stage at accuracy $\eta=\eps/\Krec$ with $\Krec=\min\{\Kdeg,\Kden\}$, yielding \cref{thm:learning_upper}. Note that the peeling procedure identifies the significant support during the recursion, since the retained candidates serve as the current hypothesis for the support, so no coefficient-gap or known-support assumption is needed.

\paragraph{Lower bounds.}
For the lower bounds (\cref{sec:lowerbound}), we construct a simple family of hard instances consisting of single-qubit dephasing Lindbladians, whose semigroups are  explicit Pauli channels:
\[
  \cL_\gamma(\rho)=\gamma\ab\big(Z\rho Z-\rho),\qquad
  e^{t\cL_\gamma}(\rho)=(1-p_{\gamma t})\rho+p_{\gamma t}\,Z\rho Z, \qquad p_{\gamma t}=\frac{1-e^{-2\gamma t}}{2},
\]
with $\gamma_0=\Lambda/4$ and $\gamma_1=\gamma_0+4\eps$. Both instances are $1$-local with $\Ddeg=1$ and local dynamical strength $2\gamma_i\leq\Lambda$, and any $\eps$-accurate learner must distinguish them. In other words, each channel use for time $t$ applies $Z$ with probability $p_{\gamma t}$. Revealing whether $Z$ was applied can only help the learner, reducing each hypothesis to a Bernoulli random variable. By the data-processing inequality, each channel use thus increases the relative entropy between the two hypotheses by at most the binary relative entropy $D_\rmb(p_{\gamma_0t}\parallel p_{\gamma_1t})$, which an elementary computation bounds in two ways, uniformly over $t\geq0$:
\[
  D_\rmb\ab\big(p_{\gamma_0t} \parallel p_{\gamma_1t}) \leq \min\cbra[\bigg]{\frac{\Delta^2}{\gamma_0^2},\ \frac{\Delta^2}{\gamma_0}\,t}, \qquad
  \Delta\coloneq\gamma_1-\gamma_0=4\eps.
\]
The two bounds have a clear physical interpretation. For small $t$, the two dephasing probabilities differ by $O(\Delta t)$ while their variance is of order $\gamma_0 t$, so a single channel use provides at most $\Delta^2t/\gamma_0$ information, growing only linearly in $t$ --- in contrast with Hamiltonian parameters, which enter the dynamics as phases and allow quadratic information growth, leading to Heisenberg scaling. For large $t$, both channels converge to the same completely dephasing channel, and the information per channel use is at most $\Delta^2/\gamma_0^2$. By the chain rule and the data-processing inequality, these per-use bounds can be summed over the whole protocol, even when the evolution times, ancillas, and measurements are chosen adaptively (\cref{thm:lower_bound_measurement,thm:lower_bound_time,cor:expected_time_lower}). Combining the resulting bound with Pinsker's inequality implies that the number of channel uses $N$ and the total evolution time $\Ttot$ of any successful learner must obey
\[
  N = \Omega\ab\Big(\frac{\gamma_0^2}{\Delta^2})=\Omega\ab\Big(\frac{\Lambda^2}{\eps^2}), \qquad
  \Ttot = \Omega\ab\Big(\frac{\gamma_0}{\Delta^2})=\Omega\ab\Big(\frac{\Lambda}{\eps^2}),
\]
which matches the upper bounds up to logarithmic factors and shows that the standard quantum limit cannot be avoided when the dissipative coefficients have to be estimated.

\subsection{Discussion}\label{sec:discussion}
We have presented a near-optimal algorithm for learning local Lindbladians with low-intersection dissipators, whose cost is governed by the local strength $\Lambda$ and the precision $\eps$, together with matching lower bounds that pin the total evolution time to the standard quantum limit $1/\eps^2$. Several directions remain open.

\begin{enumerate}
    \item \textbf{Learning generic $k$-local Lindbladians.} Our algorithm assumes that the dissipative site degree $\Ddeg$ is known. Without knowledge of this parameter, it uses a total evolution time of $\widetilde{O}(n^{2k-2}\Lambda/\eps^2)$; similar guarantees were obtained in concurrent work~\cite{mobus2026robust,lewis2026learning}. In this regime, a polynomial gap remains between this upper bound and our lower bound of $\Omega(\Lambda/\eps^2)$. It is unclear which side is loose: whether an improved algorithm can avoid the $n^{2k-2}$ overhead, or whether a stronger lower bound with explicit dependence on $n$ and $k$ holds.
    \item \textbf{Learning general Lindbladians.} In the ansatz-free setting, the protocol of Ivashkov et al.~\cite{ivashkov2026ansatz} learns $M$-sparse Lindbladians in situ but requires $\widetilde{O}(M^4\nu^2/\eps^4)$ channel uses, where $\nu$ is an instance-dependent conditioning factor. Alternative methods eliminate this factor and reach the standard quantum limit, but they require resources beyond the in situ setting, such as ancillary qubits~\cite{romanov2026learning}. It remains open whether a protocol can achieve the standard quantum limit in situ, without any dependence on $\nu$.

    \item \textbf{The gap between the Heisenberg and standard quantum limits.} A Hamiltonian can be learned at the Heisenberg limit $1/\eps$, and merely \emph{detecting} dissipation is also Heisenberg-limited~\cite{cai2026optimal}. By contrast, our lower bound shows that estimating even a single dissipative rate requires total evolution time $\Omega(1/\eps^2)$.
    This gap raises a finer question: which functionals or properties of $\cL$ are learnable at the Heisenberg limit, and which are inherently constrained to the standard quantum limit? More broadly, our lower bound is reminiscent of results in quantum metrology showing that generic noise destroys the Heisenberg scaling of phase estimation~\cite{escher2011general,demkowicz-dobrzanski2012elusive}. Making this analogy quantitative for Lindbladian learning would be interesting.
    \item \textbf{Beyond the Markovian, time-independent setting.} Our results assume Markovian dynamics generated by a time-independent Lindbladian. Extending the guarantees to time-dependent generators, non-Markovian dynamics, or continuous-variable systems would broaden the applicability of the approach.
\end{enumerate}

\section{Background}\label{sec:background}

\subsection{Notation}
Throughout this paper, $\log$ denotes the base-2 logarithm, $\ln$ denotes the natural logarithm, $\iu = \sqrt{-1}$, and $[n] = \{1,2,\ldots,n\}$. Asymptotic notation $O(\cdot)$, $\Theta(\cdot)$, and $\Omega(\cdot)$ has its standard meaning. The notation $\widetilde{O}(\cdot)$ and $\widetilde{\Theta}(\cdot)$ suppress polylogarithmic factors.

Matrix norms in this paper are Schatten norms: for a matrix $A$, let $\norm{A}_{p}$ denote its Schatten-$p$ norm.  In particular, $\norm{A}_{1}$ is the trace norm and $\norm{A}_{\infty}$ is the operator norm. For a linear map $\Phi$ on matrices, define the induced operator norm as
\[
  \norm{\Phi}_{\infty\to\infty} \coloneq \sup_{A\neq 0}\frac{\norm{\Phi(A)}_{\infty}}{\norm{A}_{\infty}}.
\]
We use $\Phi^\dag$ for the adjoint with respect to the Hilbert--Schmidt inner product.

\subsection{Pauli operators}
We often expand operators in the basis of tensor products of Pauli matrices.
\begin{definition}[Pauli matrices]
    The Pauli matrices are $2 \times 2$ Hermitian matrices defined as follows.
\[
    I = \begin{pmatrix} 1 & 0 \\ 0 & 1 \end{pmatrix}, \quad
    X = \begin{pmatrix} 0 & 1 \\ 1 & 0 \end{pmatrix}, \quad
    Y = \begin{pmatrix} 0 & -\iu \\ \iu & 0 \end{pmatrix}, \quad
    Z = \begin{pmatrix} 1 & 0 \\ 0 & -1 \end{pmatrix}.
\]
\end{definition}
These matrices are unitary and Hermitian, and $XY = \iu Z, YZ = \iu X, ZX = \iu Y$. Therefore, the product of Pauli matrices is a Pauli matrix, up to a factor of $\{1,\iu,-1,-\iu\}$. The non-trivial (non-identity) Pauli matrices are traceless. We consider tensor products of Pauli matrices, $P_1 \otimes \cdots \otimes P_n$, where $P_i \in \{X,Y,Z,I\}$ for all $i \in [n]$. Let $\cP_n$ denote the set of $n$-qubit Pauli operators. For both notation and computation, it is convenient to encode each Pauli operator by a binary vector, which turns operator multiplication into addition modulo two.

\begin{definition}[Binary representation of Pauli operators]
    Given a Pauli operator $P_a$, define the binary vector $a=(z,x)\in \F_{2}^{2n}$ as the binary representation (label) of $P_a$ if
    \[
        P_a=\bigotimes_{i=1}^n (P_a)_i=(-\iu)^{z\cdot x}\bigotimes_{i=1}^n Z^{z_i}X^{x_i},
    \]
    where each $z_i,x_i\in \{0,1\}$.
\end{definition}
The zero vector $0\in\F_2^{2n}$ labels the identity.
We will often identify $P_a$ with its label $a$ when no confusion can arise. Two basic quantities associated with a Pauli operator are its support and weight, recording the qubits on which it acts nontrivially and how many there are.
\begin{definition}[Support and weight of a Pauli operator]
    For a Pauli operator $P_a \in \cP_n$, its support $\supp(a)~\subseteq~[n]$ is the subset of qubits that $P_a$ acts non-trivially on. That is,
    \[
        \supp(a)\coloneq \{i:(P_a)_i\neq I\}.
    \]
    The weight of the Pauli operator $P_a$ is the size of the support, denoted by
    \[
        \wt(a)\coloneq \abs{\supp(a)}.
    \]
\end{definition}
These notions extend naturally to a pair of Paulis. For a pair of Pauli labels $(a,b)$, write
\[
  \supp(a,b)\coloneq \supp(a)\cup\supp(b), \qquad \wt(a,b)\coloneq \abs{\supp(a,b)}.
\]

Bounding the weight yields the notion of locality that underlies our setting.

\begin{definition}[Local Pauli operators]
For $k\geq 0$, define the set of all non-identity $k$-local Pauli operators as
\[
  \cP_{n,k}\coloneq \{P_a\in\cP_n:1\leq\wt(P_a)\leq k\}.
\]
\end{definition}
For a subset $S \subseteq [n]$ and a label $a=(z,x)\in\bits^{2n}$, we write $a \in \bits^{2S}$ if $P_a$ is supported on $S$, i.e., $\supp(a) \subseteq S$. We write $a|_S$ for the sub-operator of $a$ within the region $S$ obtained by replacing every operator
outside $S$ by the identity,
\[
  a|_S \coloneq (z',x')\in\bits^{2n}, \qquad
  (z'_i,x'_i)=
  \begin{cases}
    (z_i,x_i), & i\in S,\\
    (0,0), & i\notin S.
  \end{cases}
\]

Apart from supports, we will repeatedly use the commutation structure of Pauli operators, which is conveniently captured by the symplectic inner product of their labels.

\begin{definition}[Symplectic inner product]
    Given two Pauli labels $a,b \in \bits^{2n}$, we define their symplectic inner product as
    \[
      \langle a,b\rangle_s = z\cdot x' + x\cdot z' \pmod 2,
    \]
    where $a=(z,x)$ and $b=(z',x')$.
\end{definition}

The Lindbladian generator is built from commutators and anti-commutators, which we recall next.
\begin{definition}[Commutator and anti-commutator]
    Given operators $A$ and $B$, the commutator of $A$ and $B$ is defined as 
    \[
        [A,B] = AB - BA,
    \]
    and the anti-commutator is defined as
    \[
        \{A,B\} = AB + BA.
    \]
\end{definition}
The symplectic inner product between two Pauli operators can be regarded as an indicator of commutation. Given two Pauli operators $P_a,P_b \in \cP_n$, we have
\[
    P_aP_b=(-1)^{\langle a,b\rangle_s}P_bP_a.
\]
Pauli operators are moreover closed under multiplication, up to a phase, which lets us track products purely at the level of binary labels. Given $P_a,P_b \in \cP_n$, their multiplication has the form
\[
  P_aP_b = \xi_{ab}P_{a\oplus b},
\]
where $\xi_{ab}\in\{\pm 1,\pm \iu\}$ are efficiently computable given $a$ and $b$, and satisfy $\xi_{ab}\xi_{ba}=1$. Moreover, $\xi_{ab} \in \{\pm 1\}$ if $P_a$ and $P_b$ commute, and $\xi_{ab} \in \{\pm \iu\}$ if they anti-commute.

\subsection{Markovian open quantum systems}

For an open quantum system, when environmental memory effects are negligible, the reduced dynamics becomes Markovian and is generated by a Lindbladian \cite{gorini1976completely, lindblad1976generators}. A Lindbladian naturally separates the evolution into two contributions: a coherent part generated by an effective Hamiltonian, and dissipative parts that encode irreversible environmental effects such as dephasing and relaxation.

\begin{definition}[Quantum dynamical semigroup]\label{def:semigroup}
Let $\mathcal{H}$ be an $n$-qubit Hilbert space and $\mathcal{B}(\mathcal{H})$ be the set of all linear operators on $\mathcal{H}$.
A family of linear maps
$\{\mathcal{E}_t\}_{t\geq 0}$, with
$\mathcal{E}_t\colon\mathcal{B}(\mathcal{H})\to\mathcal{B}(\mathcal{H})$,
is called a \emph{quantum dynamical semigroup} if
\begin{itemize}
    \item $\mathcal{E}_0=\mathrm{id}$, the identity map on $\mathcal{B}(\mathcal{H})$,
    \item $\mathcal{E}_{t+s}=\mathcal{E}_t\circ\mathcal{E}_s$ for all $t,s\geq 0$,
    \item each $\mathcal{E}_t$ is a quantum channel, i.e.\ a completely positive and trace-preserving map,
    \item $t\mapsto\mathcal{E}_t$ is continuous.
\end{itemize}
\end{definition}

In finite dimensions, the Gorini--Kossakowski--Sudarshan--Lindblad (GKSL) theorem \cite{gorini1976completely, lindblad1976generators} provides the closed form of the generator for this semigroup.

\begin{definition}[Lindbladian]
    A Lindbladian is the infinitesimal generator of a quantum dynamical semigroup, defined by
    \[
    \mathcal{L} = \lim_{t\to 0^+} \frac{\mathcal{E}_t-\mathrm{id}}{t},
    \]
    which can be written in the Pauli basis as
    \begin{equation*}
      \cL(\rho) = \sum_{a\neq 0} -\iu h_a[P_a,\rho] + \sum_{b,c\neq 0} \gamma_{bc} \ab\Big(P_b\rho P_c-\frac{1}{2}\{P_cP_b,\rho\}), \qquad h_a \in \bbR \text{ and }\gamma_{bc} \in \bbC,
    \end{equation*}
where $\gamma=(\gamma_{bc})_{b,c\neq0} \succeq 0$ and $\gamma=\gamma^\dagger.$
\end{definition}

The evolution of the Lindbladian can be written as $\cE_t = e^{t\cL}$, and by definition we have
\begin{align*}
    \odv{\rho(t)}{t}=\mathcal{L}(\rho(t)).
\end{align*}

In this paper, we consider Lindbladians that are $k$-local. Since the dissipation contains two-sided Pauli terms, we define the set of Pauli pairs of weight at most $k$,
\begin{equation}
    \Gamma_k \coloneq \ab\{(b,c)\in\bits^{2n}\times\bits^{2n}:b\neq0,\ c\neq0,\ 1\leq \wt(b,c)\leq k\}.
\end{equation}

\begin{definition}[Local Lindbladian]\label{def:local_lindbladian}
    A Lindbladian is $k$-local if it has the form 
    \begin{equation}\label{eq:lindbladian}
      \cL(\rho) = \sum_{a\in\cP_{n,k}}-\iu h_a [P_a,\rho] + \sum_{(b,c)\in \Gamma_k} \gamma_{bc} \ab\Big(P_b\rho P_c-\frac{1}{2}\{P_c P_b,\rho\}).
    \end{equation}
    In other words, each coherent and dissipative term acts nontrivially on at most $k$ qubits.
\end{definition}

\begin{definition}[Dissipative site degree]
    The dissipative site degree of a Lindbladian is the maximum number of dissipative Pauli-pair terms acting non-trivially on any single qubit,
    \[
      \Ddeg \coloneq \max_{i\in[n]}{\abs{\{(b,c)\in\Gamma_k:\gamma_{bc}\neq0,\ i\in\supp(b,c)\}}}.
    \]
\end{definition}
The dissipative site degree is bounded if $\Ddeg = O(1)$. For a Lindbladian with bounded dissipative site degree, we say it has \emph{low-intersection dissipators}.

A natural measure of the overall strength of the dynamics is the induced operator norm of the generator under the Heisenberg picture, $\norm{\cL^\dag}_{\infty\to\infty}$. This global quantity generically grows with the system size: an extensive Lindbladian with $O(1)$ local coefficients is a sum of $\Theta(n)$ local terms, so the norm scales as $\Theta(n)$. The complexity of our algorithm is instead governed by a \emph{local} strength, which we now define.

\begin{definition}[Local dynamical strength]\label{def:local_strength}
    Decompose a Lindbladian into local generators $\cL = \sum_{S} \cL_S$, where $S$ is a local region such that $\cL_S$ collects all Hamiltonian and dissipative terms supported exactly on $S$, 
    \[
    \cL_S(\rho) = \sum_{\supp(a)= S}-\iu h_a [P_a,\rho] + \sum_{\supp(b, c)= S} \gamma_{bc} \ab\Big(P_b\rho P_c-\frac{1}{2}\{P_c P_b,\rho\}).
    \]
    The \emph{local dynamical strength} under the Heisenberg picture is the largest total interaction strength attached to any single qubit,
    \[
    \Lloc \coloneq \max_{i \in [n]}{\sum_{S \ni i}{\norm[\big]{\cL_S^\dagger}}_{\infty\to\infty}}.
    \]
\end{definition}

Since $\Lloc$ only counts the terms acting on a single qubit, the global dynamical strength obeys $\norm{\cL^\dag}_{\infty\to\infty}\leq\sum_S\norm{\cL_S^\dag}_{\infty\to\infty}=O(n\Lloc)$, so the local strength can be smaller than the global one by a factor of the system size. Throughout the paper, we assume the local strength is bounded by a known parameter $\Lambda$, that is, $\Lloc\leq\Lambda$. We will show that both the number of channel uses and the total evolution time depend on $\Lambda$ rather than on the global dynamical strength.

This model includes many physically and practically relevant cases. For example, consider a one-dimensional transverse-field Ising chain, $H=\sum_i J_i Z_iZ_{i+1}+\sum_i h_i X_i$, coupled to a local thermal bath through single-site relaxation and excitation operators $\sigma_i^\pm=(X_i\pm \iu Y_i)/2$. The whole dynamics is $2$-local and the dissipative site degree is at most 4. The same setting also captures common noise models for quantum hardware. For instance, in a superconducting-qubit device with bounded-degree
nearest-neighbor connectivity, the Hamiltonian contains local control terms and two-qubit couplings along hardware edges, while the noise is dominated by single-qubit relaxation, excitation, dephasing, and possible two-qubit correlated dephasing on neighboring qubits. These processes can be described by Lindbladians that are at most $2$-local with bounded dissipative site degree when the hardware connectivity is bounded.

\subsection{Pauli transfer matrix}
We define the Pauli transfer matrix (PTM) of the Lindbladian.
\begin{definition}[Pauli transfer matrix of Lindbladian]
    For a Lindbladian $\cL$ and time $t \geq 0$, the continuous-time Pauli transfer matrix of the semigroup $e^{t\cL}$ has entries
    \begin{equation}
        R_{vw}(t) \coloneq \frac{1}{2^n} \tr\ab\big(P_v e^{t \cL}(P_w)), \qquad P_v, P_w \in \cP_n.
    \end{equation}
    Its generator (the PTM of the Lindbladian $\cL$ itself) has entries
    \begin{equation}
        L_{vw} \coloneq \odv[delims-eval=.|]{R_{vw}(t)}{t}_{t=0} = \frac{1}{2^n} \tr\ab\big(P_v \cL(P_w)).
    \end{equation}
\end{definition}
We will estimate all entries $L_{vw}$ for $P_v$ and $P_w$ whose joint support has size at most $k$, and show that this is sufficient to recover the coefficients of a $k$-local Lindbladian. Intuitively, the recovery is carried out locally, region by region, so only such entries are ever needed. We collect these target entries into the index set
\begin{equation}\label{eq:Tk_def}
  \cT_{k}\coloneq\{(v,w):\wt(v, w)\leq k\},
\end{equation}
and write $M_k\coloneq\abs{\cT_k}=O(n^{k})$ for its size.

This index set $\cT_k$ resembles the dissipative index set $\Gamma_k$, as both collect Pauli pairs of joint weight at most $k$ and have size $O(n^k)$. The difference lies in what they describe: $\cT_k$ indexes the PTM entries $L_{vw}$ that we estimate from the dynamics, while $\Gamma_k$ indexes the coefficients $\gamma_{bc}$ that we eventually output. In addition, $\cT_k$ keeps the entries in which one Pauli is the identity, which are the ones carrying the information needed to recover the Hamiltonian coefficients.

\subsection{Derivative estimation by Chebyshev interpolation}\label{sec:chebyshev_interpolation}
For $q\in \N_{>0}$, let $f(x)$ be a real-valued function that is $q+1$ times differentiable on a closed interval $\cI$. Let $x_0, x_1, \dots, x_q$ be $q+1$ distinct interpolation nodes in $\cI$, and let $p_q(x)$ be the unique polynomial of degree at most $q$ that interpolates $f(x)$ at these nodes (i.e., $p_q(x_j) = f(x_j)$ for all $j = 0, 1, \dots, q$).

\begin{lemma}[Interpolation remainder formula, {see e.g.\ \cite[Chapter 3]{burden2011numerical}}]\label{lem:remain_formula}
    For any $x \in \cI$, there exists a point $\varsigma$ situated in the smallest interval containing $x, x_0, x_1, \dots, x_q$ such that the interpolation error $f(x) - p_q(x)$ is given by
    \[
        f(x) - p_q(x) = \frac{f^{(q+1)}(\varsigma)}{(q+1)!} \prod_{j=0}^q (x - x_j).
    \]
\end{lemma}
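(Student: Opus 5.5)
The plan is the classical argument built on the auxiliary function that measures the interpolation error against the node polynomial. Set $\omega(y)=\prod_{j=0}^q (y-x_j)$.

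If $x$ coincides with one of the nodes $x_j$, both sides vanish: $f(x_j)=p_q(x_j)$ and $\omega(x_j)=0$. In that case any $\varsigma$ in the interval works, so we may assume $x\neq x_j$ for all $j$.

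For fixed $x$, define on $\cI$ the function
\[
  g(y)\coloneq f(y)-p_q(y)-\frac{f(x)-p_q(x)}{\omega(x)}\,\omega(y),
\]
which is well defined because $\omega(x)\neq 0$. The function $g$ is $q+1$ times differentiable, since $f$ is and the other two terms are polynomials. It vanishes at the $q+2$ distinct points $x,x_0,\dots,x_q$: at each node by the interpolation property and $\omega(x_j)=0$, and at $y=x$ by construction.

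Next, apply Rolle's theorem repeatedly. Between consecutive zeros of $g$ there is a zero of $g'$, so $g'$ has at least $q+1$ distinct zeros. Inducting on the order $m$, $g^{(m)}$ has at least $q+2-m$ distinct zeros, all in the smallest interval $J$ containing $x,x_0,\dots,x_q$. In particular, $g^{(q+1)}$ has a zero $\varsigma\in J$.

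Finally, compute $g^{(q+1)}$ directly. Since $\deg p_q\le q$, we have $p_q^{(q+1)}=0$. Since $\omega$ is monic of degree $q+1$, we have $\omega^{(q+1)}=(q+1)!$. Therefore
\[
  0=g^{(q+1)}(\varsigma)=f^{(q+1)}(\varsigma)-\frac{f(x)-p_q(x)}{\omega(x)}\,(q+1)!,
\]
and rearranging gives exactly the claimed formula.

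The only delicate point is bookkeeping in the Rolle iteration. We need the zeros at each stage to be distinct and to stay inside $J$, which holds because each new zero lies strictly between two previous ones. We also need differentiability of order $q+1$ on $J\subseteq\cI$, which is exactly the hypothesis; continuity of the derivatives is not required for Rolle's theorem.
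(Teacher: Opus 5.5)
Your proof is correct. It is the standard argument behind the result the paper cites from Burden--Faires: the paper gives no proof of its own, and the textbook proceeds as you do, using the auxiliary function $g$ with $q+2$ distinct zeros, iterated (generalized) Rolle, and the identities $p_q^{(q+1)}=0$, $\omega^{(q+1)}=(q+1)!$. Your remark that only $(q+1)$-fold differentiability is needed, not continuity of $f^{(q+1)}$, is also correct, and it matches the paper's stated hypothesis more closely than the textbook's $C^{q+1}$ assumption.
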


We will only apply the interpolation-error formula at the node $x=x_0$, where the standard differentiated form gives, for some $\varsigma$ in the interval,
\[
  f'(x_0) - p_q'(x_0) = \frac{f^{(q+1)}(\varsigma)}{(q+1)!} \prod_{j\neq 0}^q(x_0-x_j),
\]
which avoids differentiating the (possibly non-smooth) remainder point $\varsigma$ in $x$.

We use Chebyshev--Lobatto interpolation. It operates on the Chebyshev--Lobatto nodes, which are clustered near the boundaries of the domain $[-1, 1]$.
\begin{definition}[Chebyshev--Lobatto nodes]
    Let $q \in \N_{>0}$. For a grid of $q+1$ points, the nodes are defined as
    \[
        x_j = \cos\ab\Big(\frac{j\pi}{q}), \quad j=0,1,\dots,q.
    \]
\end{definition}
\begin{definition}[Chebyshev spectral differentiation matrix]
Given $q \in \N_{>0}$ and a polynomial $f(x)$ of degree at most $q$, the evaluation of this polynomial at $q + 1$ Chebyshev--Lobatto nodes is a vector
\[
    \mathbf{f} = [f(x_0),f(x_1),\dots,f(x_q)]^\top.
\]
We define the Chebyshev spectral differentiation matrix as the $(q+1)\times(q+1)$ matrix $D$ such that multiplying it by $\mathbf{f}$ yields the vector of derivatives $\mathbf{f}'$, that is,
\[
 (D\mathbf{f})_j = \mathbf{f}'_j = f'(x_j),
\]
for $j \in \{0,1,\dots,q\}$. For a general function $f$, the product $D\mathbf{f}$ returns the derivatives at the nodes of its degree-$q$ interpolant.
\end{definition}
\begin{lemma}[{\cite[Section 2.4.2]{canuto2006spectral}}]\label{lem:chebyshev_matrix}
    For $q \in \N_{>0}$, the Chebyshev spectral differentiation matrix $D = D_q$ has the following entries
    \begin{gather*}
        D_{00}=\frac{2q^2+1}{6}, \quad D_{qq}=-\frac{2q^2+1}{6}, \\
        D_{ii}=\frac{-x_i}{2(1-x_i^2)}\quad\text{for } 1 \leq i \leq q-1, \\
        D_{ij}=\frac{c_i}{c_j}\frac{(-1)^{i+j}}{x_i-x_j}\quad\text{for } i \neq j,
    \end{gather*}
    where the coefficients $c_i$ are defined such that $c_0=c_q=2$, and $c_i=1$ for all interior points $1\leq i \leq q-1$.
\end{lemma}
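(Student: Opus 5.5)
We prove the formula from the Lagrange form of the interpolant. If $p_q(x)=\sum_j f(x_j)\ell_j(x)$ with Lagrange basis $\ell_j$, then $D_{ij}=\ell_j'(x_i)$. Let
\[
  \omega(x)\coloneq\prod_{k=0}^q(x-x_k),
  \qquad\text{so that}\qquad
  \ell_j(x)=\frac{\omega(x)}{(x-x_j)\,\omega'(x_j)} .
\]
Differentiating this expression gives two formulas:
\begin{itemize}
  \item off the diagonal ($i\neq j$), $D_{ij}=\dfrac{\omega'(x_i)}{(x_i-x_j)\,\omega'(x_j)}$;
  \item on the diagonal, by a Taylor expansion of $\omega$ around its simple root $x_i$, $D_{ii}=\dfrac{\omega''(x_i)}{2\,\omega'(x_i)}$.
\end{itemize}
The proof therefore reduces to computing $\omega'$ and $\omega''$ at the nodes.

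\textbf{Step 1: identify $\omega$.} The interior nodes $x_j=\cos(j\pi/q)$, $1\leq j\leq q-1$, are exactly the zeros of $T_q'$, because $T_q'(\cos\theta)=q\sin(q\theta)/\sin\theta$. Hence $\omega(x)=C\,(x^2-1)\,T_q'(x)$ for a nonzero constant $C$. The constant cancels in both formulas above, so we may take $C=1$.

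\textbf{Step 2: compute $\omega'$ at the nodes.} We use the Chebyshev differential equation
\[
  (1-x^2)\,T_q''-x\,T_q'+q^2T_q=0 .
\]
It gives $(x^2-1)T_q''=-xT_q'+q^2T_q$, and therefore
\[
  \omega'=(x^2-1)T_q''+2xT_q'=xT_q'+q^2T_q .
\]
We evaluate this at each node.
\begin{itemize}
  \item At an interior node, $T_q'(x_j)=0$ and $T_q(x_j)=(-1)^j$, so $\omega'(x_j)=q^2(-1)^j$.
  \item At $x=1$, we have $T_q'(1)=q^2$ and $T_q(1)=1$, so $\omega'(1)=2q^2$.
  \item At $x=-1$, we have $T_q'(-1)=(-1)^{q+1}q^2$ and $T_q(-1)=(-1)^q$, so $\omega'(-1)=2q^2(-1)^q$.
\end{itemize}
In all cases $\omega'(x_j)=q^2c_j(-1)^j$. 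Substituting into the off-diagonal formula gives
\[
  D_{ij}=\frac{c_i}{c_j}\,\frac{(-1)^{i+j}}{x_i-x_j}\qquad(i\neq j),
\]
as claimed.

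\textbf{Step 3: compute $\omega''$ for the diagonal.} Differentiating the expression for $\omega'$ gives
\[
  \omega''=(1+q^2)\,T_q'+x\,T_q'' .
\]
\begin{itemize}
  \item \emph{Interior nodes.} Here $T_q'(x_i)=0$, and the differential equation gives $T_q''(x_i)=-q^2T_q(x_i)/(1-x_i^2)$. Therefore
  \[
    D_{ii}=\frac{x_iT_q''(x_i)}{2q^2T_q(x_i)}=\frac{-x_i}{2(1-x_i^2)} .
  \]
  \item \emph{Endpoint $x_0=1$.} We need $T_q''(1)$, and the differential equation degenerates there. We obtain the value by setting $x=1$ in the once-differentiated equation
  \[
    (1-x^2)T_q'''-3xT_q''+(q^2-1)T_q'=0 ,
  \]
  which gives $T_q''(1)=q^2(q^2-1)/3$. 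Then
  \[
    \omega''(1)=(1+q^2)q^2+\frac{q^2(q^2-1)}{3}=\frac{q^2(4q^2+2)}{3},
  \]
  and
  \[
    D_{00}=\frac{q^2(4q^2+2)/3}{4q^2}=\frac{2q^2+1}{6} .
  \]
  \item \emph{Endpoint $x_q=-1$.} The nodes satisfy the reflection symmetry $x_{q-i}=-x_i$. Under $x\mapsto-x$, Lagrange interpolation on these nodes gives $D_{q-i,q-j}=-D_{ij}$. Hence $D_{qq}=-D_{00}$.
\end{itemize}

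\textbf{Main obstacle.} The step needing the most care is the endpoint diagonal entry. There the differential equation degenerates at $x=\pm1$, so $T_q''(1)$ must be obtained from the differentiated equation, as done above. Alternatively, it can be read off from the Taylor expansion of $T_q(\cos\theta)=\cos(q\theta)$ near $\theta=0$.

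As a cross-check, every row of $D$ must sum to zero, since $D$ maps the constant vector to the zero vector. This gives $D_{ii}=-\sum_{j\neq i}D_{ij}$, which can be verified against the closed forms using standard trigonometric sums.
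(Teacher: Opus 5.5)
Your derivation is correct and complete. The formulas $D_{ij}=\omega'(x_i)/((x_i-x_j)\omega'(x_j))$ and $D_{ii}=\omega''(x_i)/(2\omega'(x_i))$, the identification $\omega\propto(x^2-1)T_q'$, the values $\omega'(x_j)=q^2c_j(-1)^j$, the endpoint value $T_q''(1)=q^2(q^2-1)/3$ from the differentiated Chebyshev equation, and the reflection identity $D_{q-i,q-j}=-D_{ij}$ all check out, including the edge case $q=1$. The paper gives no proof of its own and only cites Canuto et al.; your argument, which writes the Lagrange basis through $(1-x^2)T_q'(x)$, is the standard derivation behind that citation.
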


\subsection{Information-theoretic quantities}
We collect the information-theoretic notions used in the lower-bound arguments. The notation $D(\cdot\parallel\cdot)$ is used for both classical and quantum relative entropy. The type of the inputs will make clear which notion is meant.
\begin{definition}[Classical relative entropy]\label{def:classical_relative_entropy}
For probability distributions $\mu,\nu$ on a finite set, the relative entropy, also called the Kullback--Leibler divergence, is
\[
  D(\mu\parallel \nu) \coloneq \sum_x \mu(x)\ln\frac{\mu(x)}{\nu(x)},
\]
with the convention that $D(\mu\parallel \nu)=\infty$ if $\mu$ is not absolutely continuous with respect to $\nu$.  For Bernoulli distributions with parameters $p,q\in[0,1]$, we write the binary relative entropy as
\[
  D_\rmb(p\parallel q) \coloneq p\ln\frac{p}{q} + (1-p)\ln\frac{1-p}{1-q}.
\]

\end{definition}
The quantum analogue is defined on density operators and is the quantity used when comparing the possible quantum states produced by two candidate dynamics.
\begin{definition}[Quantum relative entropy]\label{def:quantum_relative_entropy}
For quantum states $\rho,\sigma$, the quantum relative entropy, or quantum Kullback--Leibler divergence, is
\[
  D(\rho\parallel\sigma) \coloneq \tr\ab(\rho(\ln\rho-\ln\sigma)),
\]
when $\supp(\rho)\subseteq\supp(\sigma)$, and is infinite otherwise.
\end{definition}
To convert distinguishability statements into error-probability lower bounds, we also use the total variation distance between classical outcome distributions.
\begin{definition}[Total variation distance]\label{def:total_variation}
For probability distributions $\mu,\nu$ on a finite set, the total variation distance is
\[
  \dtv(\mu,\nu) \coloneq \frac{1}{2}\sum_x\abs{\mu(x)-\nu(x)}.
\]
\end{definition}
The following standard facts are used to control how much information an experiment can gain from each channel access.
\begin{fact}[Data-processing inequality]\label{fact:data_processing}
Relative entropy is monotone under stochastic maps and quantum channels.  In particular, for any quantum channel $\Phi$ and quantum states $\rho,\sigma$ satisfying $\supp(\rho)\subseteq\supp(\sigma)$,
\[
  D(\Phi(\rho)\parallel\Phi(\sigma)) \leq D(\rho\parallel\sigma),
\]
and the analogous inequality holds for classical probability distributions under stochastic maps.
\end{fact}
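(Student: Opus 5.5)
The statement is the data-processing inequality (\cref{fact:data_processing}). It is standard; the plan is to derive the quantum version and read off the classical one as the commuting special case.

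\emph{Step 1 (partial trace).} By the Stinespring dilation, any channel can be written as
\[
\Phi(\rho)=\tr_E\ab\big(V\rho V^\dagger),
\]
where $V$ is an isometry. Unitary and isometric conjugation leaves $D(\rho\parallel\sigma)$ unchanged, because $\ln(V\rho V^\dagger)=V(\ln\rho)V^\dagger$ on the range of $V$. Appending a fixed environment state $\tau_E$ is also harmless, since $D(\rho\otimes\tau\parallel\sigma\otimes\tau)=D(\rho\parallel\sigma)$. It therefore suffices to prove monotonicity under the partial trace:
\[
D(\rho_{AB}\parallel\sigma_{AB})\ \ge\ D(\rho_A\parallel\sigma_A).
\]
The support condition $\supp(\rho)\subseteq\supp(\sigma)$ is preserved by channels. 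This is because $\rho\le c\sigma$ for some $c>0$ implies $\Phi(\rho)\le c\,\Phi(\sigma)$, so both sides of the inequality are finite.

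\emph{Step 2 (twirling).} Let $\{W_j\}_{j=1}^{d_B^2}$ be the generalized Pauli (Weyl) unitaries on $B$. Averaging over them gives the completely depolarizing channel:
\[
\frac{1}{d_B^2}\sum_j (I\otimes W_j)X(I\otimes W_j)^\dagger=X_A\otimes \frac{I_B}{d_B}.
\]
By Step 1 each conjugation preserves relative entropy. Joint convexity of $(\rho,\sigma)\mapsto D(\rho\parallel\sigma)$ then gives
\[
D\ab\big(\rho_A\otimes \tfrac{I}{d_B}\parallel\sigma_A\otimes\tfrac{I}{d_B})\le D(\rho_{AB}\parallel\sigma_{AB}),
\]
and the left-hand side equals $D(\rho_A\parallel\sigma_A)$.

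\emph{Step 3 (joint convexity).} This is the main obstacle. I would derive it from Lieb's concavity theorem: the map $(A,B)\mapsto\tr(K^\dagger A^{s}KB^{1-s})$ is jointly concave for $s\in[0,1]$. The relative entropy can be written as
\[
D(\rho\parallel\sigma)=\lim_{s\to 1^-}\frac{\tr\rho-\tr(\rho^{s}\sigma^{1-s})}{1-s},
\]
which is a limit of jointly convex functions. Joint convexity therefore passes to the limit. Alternatively, one can cite Lindblad/Uhlmann's proof via strong subadditivity.

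\emph{Step 4 (classical case).} A classical stochastic map is a channel acting on diagonal (commuting) density matrices, so the classical inequality follows as a special case. It can also be proved directly from the log-sum inequality: for each output $y$,
\[
\sum_x \Phi(y|x)\mu(x)\ln\frac{\Phi(y|x)\mu(x)}{\Phi(y|x)\nu(x)}\ \ge\ (\Phi\mu)(y)\ln\frac{(\Phi\mu)(y)}{(\Phi\nu)(y)}.
\]
Summing over $y$ gives $D(\Phi\mu\parallel\Phi\nu)\le D(\mu\parallel\nu)$.
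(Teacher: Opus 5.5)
Your proposal is correct. The paper gives no proof of this statement: it is listed among the standard background facts and used as a black box, so what you are comparing against is a bare citation, not an argument.

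Your route is the classical Lindblad proof.
\begin{itemize}
\item Stinespring dilation reduces an arbitrary channel to a partial trace. Isometric conjugation and appending a fixed ancilla state leave $D$ invariant, and the bound $\rho\le c\sigma$ shows that finiteness is preserved.
\item A Weyl twirl on the traced-out factor writes $\rho_A\otimes I/d_B$ as a uniform mixture of unitarily rotated copies of $\rho_{AB}$.
\item Joint convexity then closes the argument. You obtain it from Lieb's concavity theorem through the limit $s\to1^-$ of $(\tr\rho-\tr\rho^{s}\sigma^{1-s})/(1-s)$, with $\sigma^{1-s}$ taken on $\supp(\sigma)$; this is where the support hypothesis makes the limit equal to $D(\rho\parallel\sigma)$.
\item The classical case follows either by embedding distributions as diagonal states or directly from the log-sum inequality.
\end{itemize}

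Compared with the citation, your version makes explicit that the only deep ingredient is Lieb's concavity theorem, which you still use as a black box; the dilation and twirling steps are elementary. It also covers every form in which the paper actually invokes the fact in the lower-bound section:
\begin{itemize}
\item tracing out the classical flag register in the dephasing-channel estimate;
\item arbitrary adaptive operations acting on system plus memory between channel uses;
\item the final quantum-to-classical measurement, which is a channel with different input and output spaces and is handled by your Stinespring step.
\end{itemize}
The alternative citation you mention at the end of Step 3 is not needed; your main argument is complete without it.
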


\begin{fact}[$\chi^2$-divergence bound on binary relative entropy]\label{fact:chi2}
For Bernoulli distributions with parameters $p,q\in(0,1)$, the binary relative entropy satisfies
\[
  D_\rmb(p\parallel q)\leq \frac{(p-q)^2}{q(1-q)}.
\]
\end{fact}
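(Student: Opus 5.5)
Since $p,q\in(0,1)$, the plan is to bound the relative entropy by the $\chi^2$-divergence using the elementary inequality $\ln x\leq x-1$ for $x>0$. Write $D_\rmb(p\parallel q)=p\ln\frac{p}{q}+(1-p)\ln\frac{1-p}{1-q}$. Applying $\ln x\leq x-1$ to each logarithm with its nonnegative weight gives
\[
  D_\rmb(p\parallel q)\leq p\ab\Big(\frac{p}{q}-1)+(1-p)\ab\Big(\frac{1-p}{1-q}-1)=\frac{p^2}{q}+\frac{(1-p)^2}{1-q}-1.
\]
The right-hand side is the $\chi^2$-divergence $\sum_x \mu(x)^2/\nu(x)-1$ for the two Bernoulli distributions.

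Next I would simplify this to the closed form. Expanding over the common denominator $q(1-q)$, the numerator is $p^2(1-q)+(1-p)^2q-q(1-q)$. After expanding, this equals $p^2-2pq+q^2=(p-q)^2$. Equivalently, write the $\chi^2$-divergence as $\frac{(p-q)^2}{q}+\frac{(p-q)^2}{1-q}$, since $\frac{((1-p)-(1-q))^2}{1-q}$ has the same numerator. Either way, the sum is $\frac{(p-q)^2}{q(1-q)}$, which is the claimed bound.

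No step is a real obstacle. The only points to check are that the weights $p$ and $1-p$ are nonnegative, so the inequality direction is preserved, and that $q\in(0,1)$, so the denominators are positive and finite. At $p\in\{0,1\}$ the corresponding term vanishes under the convention $0\ln0=0$, and the bound still holds.
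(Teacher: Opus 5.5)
Your proof is correct. Applying $\ln x\leq x-1$ termwise, with the nonnegative weights $p$ and $1-p$, gives $D_\rmb(p\parallel q)\leq \frac{p^2}{q}+\frac{(1-p)^2}{1-q}-1$. Your algebra reducing this to $\frac{(p-q)^2}{q(1-q)}$ also checks out.

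The paper gives no proof of this statement. It is quoted as a standard fact and only invoked in the lower-bound arguments, with $p=p_{\gamma_0 t}$ and $q=p_{\gamma_1 t}$, both in $(0,1)$ for $t>0$. So there is no competing route to compare against. The other common derivation uses Jensen's inequality: $D(\mu\parallel\nu)\leq\ln\sum_x\mu(x)^2/\nu(x)=\ln(1+\chi^2)\leq\chi^2$. That bound is marginally sharper but unnecessary for how the paper uses the fact.
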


\begin{fact}[Pinsker's inequality]\label{fact:pinsker}
For probability distributions $\mu,\nu$ on a finite set,
\[
  \dtv(\mu,\nu) \leq \sqrt{\frac{1}{2}D(\mu\parallel\nu)}.
\]
\end{fact}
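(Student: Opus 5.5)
The statement to prove is Pinsker's inequality (\cref{fact:pinsker}): $\dtv(\mu,\nu)\leq\sqrt{\tfrac12 D(\mu\parallel\nu)}$. The plan is to reduce to the binary case by data processing and then verify the binary inequality by a one-variable calculus argument.

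\emph{Reduction to two points.} We may assume $D(\mu\parallel\nu)<\infty$, since otherwise there is nothing to prove. Let $A=\{x:\mu(x)\geq\nu(x)\}$. Then
\[
\dtv(\mu,\nu)=\mu(A)-\nu(A).
\]
The map $x\mapsto\mathbf{1}[x\in A]$ is a deterministic stochastic map. It sends $\mu,\nu$ to Bernoulli distributions with parameters $p=\mu(A)$ and $q=\nu(A)$. By the classical data-processing inequality (\cref{fact:data_processing}),
\[
D_\rmb(p\parallel q)\leq D(\mu\parallel\nu).
\]
It therefore suffices to show $D_\rmb(p\parallel q)\geq 2(p-q)^2$ for all $p,q\in[0,1]$. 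The boundary cases, where $q\in\{0,1\}$ with $p\neq q$, give infinite divergence. The case $p=q$ is trivial.

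\emph{Binary inequality.} Fix $p$ and define
\[
g(q)=D_\rmb(p\parallel q)-2(p-q)^2 \quad\text{for } q\in(0,1).
\]
We have $g(p)=0$. Differentiating gives
\[
g'(q)=-\frac{p}{q}+\frac{1-p}{1-q}+4(p-q)=(q-p)\Big(\frac{1}{q(1-q)}-4\Big).
\]
Since $q(1-q)\leq 1/4$, the second factor is nonnegative. Hence $g'(q)\leq0$ for $q\leq p$ and $g'(q)\geq0$ for $q\geq p$, so $g$ attains its minimum value $0$ at $q=p$. This gives $D_\rmb(p\parallel q)\geq 2(p-q)^2$, and combining with the reduction yields
\[
\dtv(\mu,\nu)^2=(p-q)^2\leq\tfrac12 D(\mu\parallel\nu).
\]

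\emph{Main obstacle.} The only delicate point is the derivative computation and its sign analysis, together with the endpoint cases $p\in\{0,1\}$. For those endpoints we use the convention $0\ln 0=0$; the derivative formula remains valid on the open interval $(0,1)$ in $q$, so the argument goes through unchanged.
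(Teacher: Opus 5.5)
Your proof is correct. The paper gives no proof of this statement: it records Pinsker's inequality as a standard fact and only invokes it in the lower-bound arguments, so there is no in-paper route to compare against.

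What you give is the classical textbook argument, and it is self-contained given the data-processing inequality the paper already states as \cref{fact:data_processing}. You reduce to the Bernoulli pair $(\mu(A),\nu(A))$ with $A=\{x:\mu(x)\geq\nu(x)\}$. You then prove $D_\rmb(p\parallel q)\geq 2(p-q)^2$ from the sign of $g'(q)=(q-p)\bigl(\tfrac{1}{q(1-q)}-4\bigr)$. Both the identity $\dtv(\mu,\nu)=\mu(A)-\nu(A)$ and the derivative computation check out.

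Two small points are worth tidying.

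First, the endpoint cases $p\in\{0,1\}$ do not go through literally ``unchanged.'' There the point $q=p$ lies outside the open interval $(0,1)$, so $g$ does not attain a minimum at $q=p$ inside its domain. Instead, $g$ is monotone on $(0,1)$: increasing if $p=0$, decreasing if $p=1$. Moreover $g(q)\to 0$ as $q\to p$; for instance $g(q)=-\ln(1-q)-2q^2$ when $p=0$. These two facts give $g\geq 0$.

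Second, under your standing assumption $D(\mu\parallel\nu)<\infty$, absolute continuity forces $p=q$ whenever $q\in\{0,1\}$. So the infinite-divergence boundary case you mention never actually arises after the reduction.
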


\section{Lindbladian learning algorithm and analysis}\label{sec:algorithm}

This section presents and analyzes our learning algorithm. We first estimate the PTM generator entries from the dynamics: \cref{sec:chebyshev} sets up the endpoint-derivative rule via Chebyshev interpolation, \cref{sec:ptm_learning} estimates the finite-time PTM entries by shadow process tomography, and \cref{sec:upper_bound} bounds the resulting sample and time complexity. We then convert the estimated PTM generator into the physical coefficients in \cref{sec:coeff_recovery}, and combine the two stages into our main upper bound in \cref{thm:learning_upper}. Throughout this section, we assume $\eps < k\Lambda$. Otherwise, outputting the all-zero Lindbladian suffices, since every coefficient of $\cL$ is bounded in magnitude by $\Lambda \leq \eps$. This assumption ensures $q \geq 1$ in \cref{lem:truncation}.

\subsection{Chebyshev interpolation}\label{sec:chebyshev}

The dynamic-access model gives estimates of finite-time PTM entries $R_{vw}(t)$, while the desired generator entry is the endpoint derivative $L_{vw}=R_{vw}'(0)$. Thus, the learning problem contains a differentiation step: we must approximate a derivative at the boundary of the interval from estimates at positive times. We use Chebyshev--Lobatto interpolation for this step, which gives an explicit endpoint derivative rule and has a rapidly decaying interpolation bias for the analytic semigroup functions considered here. Chebyshev--Lobatto endpoint differentiation is a standard spectral interpolation tool~\cite{howell1991derivative,canuto2006spectral}.  Similar interpolation ideas have also been used in recent algorithms for learning quantum dynamical generators~\cite{caro2024learning,franca2024efficient,gu2024practical,ivashkov2026ansatz,he2026optimal}.

We have the following bound on the $r$-th derivative of local entries $R_{vw}(t)$, which depends on the local dynamical strength.

\begin{lemma}\label{lem:derivative_bound}
    Let $\cL$ be a $k$-local Lindbladian with local dynamical strength $\Lloc$. For any $(v,w)\in \cT_k$, $r \in \N$ and $t\geq 0$, the $r$-th derivative of PTM entry $R_{vw}(t)$ satisfies
    \[
     \abs[\big]{R_{vw}^{(r)}(t)} \leq  r!\,(k\Lloc)^r.
    \]
\end{lemma}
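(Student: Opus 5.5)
We work in the Heisenberg picture. Since $e^{t\cL}$ and $\cL$ commute, we have
\[
  R^{(r)}_{vw}(t)=\frac{1}{2^n}\tr\ab\big(P_v\,\cL^r e^{t\cL}(P_w))=\frac{1}{2^n}\tr\ab\big((\cL^\dagger)^r(P_v)\,e^{t\cL}(P_w)).
\]
We bound this by Hölder's inequality in the normalized form $\abs{2^{-n}\tr(AB)}\leq\norm{A}_\infty\cdot 2^{-n}\norm{B}_1$. Because $e^{t\cL}$ is CPTP, it is contractive in trace norm, so $2^{-n}\norm{e^{t\cL}(P_w)}_1\leq 2^{-n}\norm{P_w}_1=1$. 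It therefore suffices to prove
\[
  \norm{(\cL^\dagger)^r(P_v)}_\infty\leq r!\,(k\Lloc)^r .
\]
This bound does not depend on $t$ or $w$, which explains why the lemma holds uniformly in both. We only use that $P_v$ is supported on a set $S_0=\supp(v)$ of size at most $k$, which holds because $(v,w)\in\cT_k$.

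Next we expand $\cL^\dagger=\sum_S\cL_S^\dagger$ using the decomposition in \cref{def:local_strength}. This gives
\[
  (\cL^\dagger)^r(P_v)=\sum_{S_1,\dots,S_r}\cL_{S_r}^\dagger\cdots\cL_{S_1}^\dagger(P_v).
\]
The key locality fact is that $\cL_S^\dagger(A)=0$ whenever $A$ is supported on a region disjoint from $S$. For the Hamiltonian part, $[P_a,A]=0$. For the dissipative part, the Heisenberg-picture dissipator is $P_cAP_b-\tfrac12\{P_cP_b,A\}$, and this vanishes since $A$ commutes with $P_b$ and $P_c$. Conversely, if $A$ is supported on $U$, then $\cL_S^\dagger(A)$ is supported on $U\cup S$.

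Consequently, a nonzero term requires $S_\ell\cap U_{\ell-1}\neq\emptyset$ for every $\ell$, where $U_0=S_0$ and $U_\ell=U_{\ell-1}\cup S_\ell$. Since each $\abs{S_\ell}\leq k$, we have $\abs{U_{\ell-1}}\leq \ell k$.

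We then bound the sum layer by layer, proceeding by induction on $r$. The step is to sum over $S_r$ last, with $S_1,\dots,S_{r-1}$ fixed. The allowed $S_r$ must touch one of the at most $rk$ sites of $U_{r-1}$. Union-bounding over the site $i\in U_{r-1}$ gives
\[
  \sum_{S_r\cap U_{r-1}\neq\emptyset}\norm{\cL_{S_r}^\dagger}_{\infty\to\infty}\leq\sum_{i\in U_{r-1}}\sum_{S\ni i}\norm{\cL_S^\dagger}_{\infty\to\infty}\leq rk\Lloc .
\]
Here the triangle inequality and submultiplicativity of $\norm{\cdot}_{\infty\to\infty}$ handle each fixed sequence of regions.

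The delicate point is that $U_{r-1}$ depends on the earlier choices. We therefore organize the estimate as a sum over admissible sequences (a tree of choices), peeling off the innermost layer first. The quantity to control is
\[
  \sum_{(S_1,\dots,S_r)\ \mathrm{admissible}}\prod_\ell\norm{\cL_{S_\ell}^\dagger}_{\infty\to\infty}\norm{P_v}_\infty .
\]
We bound it by summing over $S_\ell$ conditionally on $S_1,\dots,S_{\ell-1}$, which contributes a factor of at most $\ell k\Lloc$ at step $\ell$. Since $\norm{P_v}_\infty=1$, the product of these factors is $\prod_{\ell=1}^r \ell k\Lloc=r!(k\Lloc)^r$.

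The main obstacle is making this conditional counting rigorous. The bound $\abs{U_{\ell-1}}\leq\ell k$ holds along every branch, so the conditional sum at each level is uniformly bounded, and the nested sums factor as an upper bound. A minor point is the case $r=0$, where the claim reduces to $\abs{R_{vw}(t)}\leq1$, which follows from the same Hölder estimate.
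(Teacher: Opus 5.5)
Your proposal is correct and follows the paper's proof essentially step for step. Both arguments differentiate in the Heisenberg picture, then use H\"older with trace-norm contractivity to reduce to $\norm{(\cL^\dagger)^r(P_v)}_\infty$, expand over sequences of local generators with the admissibility condition $S_\ell\cap U_{\ell-1}\neq\varnothing$ and $\abs{U_{\ell-1}}\leq \ell k$, and bound the nested sum from the inside out to obtain $\prod_{\ell=1}^r \ell k\Lloc=r!\,(k\Lloc)^r$; your explicit commutation check that $\cL_S^\dagger(A)=0$ when $\supp(A)\cap S=\varnothing$ is just a direct form of the paper's remark that $\cL_S^\dagger$ is supported on $S$ and annihilates the identity.
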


\begin{proof}
    By differentiating the semigroup,
    \[
    \odv[r]{R_{vw}(t)}{t}=\frac{1}{2^n}\tr\ab\big((\cL^\dag)^r(P_v)\,e^{t\cL}(P_w)).
    \]
    Since $e^{t\cL}$ is a quantum channel, the trace norm is contractive on Hermitian operators, i.e., $\norm{e^{t\cL}(P_w)}_{1}\leq \norm{P_w}_{1}=2^n$. By H\"{o}lder's inequality for Schatten norms,
    \[
    \abs[\Big]{\odv[r]{R_{vw}(t)}{t}} \leq \frac{1}{2^n}\norm[\big]{(\cL^\dag)^r(P_v)}_{\infty}
    \norm[\big]{e^{t\cL}(P_w)}_{1} \leq \norm[\big]{(\cL^\dag)^r(P_v)}_{\infty}.
    \]
    It remains to bound $\norm{(\cL^\dag)^r(P_v)}_{\infty}$ by the local strength. Using the decomposition $\cL^\dag=\sum_S\cL_S^\dag$ from \cref{def:local_strength},
    \[
      (\cL^\dag)^r(P_v)=\sum_{S_1,\dots,S_r}\cL_{S_r}^\dag\cdots\cL_{S_1}^\dag(P_v).
    \]
    Each $\cL_S^\dag$ is supported on $S$, annihilates the identity, and is bounded as a map. Hence $\cL_S^\dag(A)=0$ whenever $\supp(A)\cap S=\varnothing$, and otherwise $\norm{\cL_S^\dag(A)}_{\infty}\leq b_S\norm{A}_{\infty}$ with $b_S\coloneq\norm{\cL_S^\dag}_{\infty\to\infty}$ and $\supp(\cL_S^\dag(A))\subseteq\supp(A)\cup S$. Set $U_0=\supp(v)$ and $U_\ell=U_{\ell-1}\cup S_\ell$. A summand is nonzero only if $S_\ell\cap U_{\ell-1}\neq\varnothing$ for every $\ell$, in which case its operator norm is at most $\prod_{\ell=1}^r b_{S_\ell}$. Bounding the nested sum from the inside out, and using $\abs{U_{\ell-1}}\leq\wt(v)+(\ell-1)k\leq \ell k$ together with $\sum_{S\ni i}b_S\leq\Lloc$,
    \[
      \norm[\big]{(\cL^\dag)^r(P_v)}_{\infty} \leq\prod_{\ell=1}^r\ab\Big(\sum_{S: S\cap U_{\ell-1}\neq\varnothing}b_S)
      \leq\prod_{\ell=1}^r\abs{U_{\ell-1}}\,\Lloc
      \leq\prod_{\ell=1}^r \ell\, k\,\Lloc
      = r!\,(k\Lloc)^r. \qedhere
    \]
\end{proof}

Let $q \in \N_{>0}$ and $T > 0$. We map the $q+1$ standard Chebyshev--Lobatto nodes from the canonical domain $[-1, 1]$ to the physical time interval $[0, T]$:
\[t_j = \frac{T}{2} \ab( 1 - \cos\ab\Big(\frac{j\pi}{q})), \quad j = 0, 1, \dots, q.\]
Note that $t_0 = 0$ and $t_q = T$. The true expectation value is $R_{vw}(t)$, and let $p_q(t)$ be the exact interpolating polynomial. The derivative of the interpolating polynomial at $t = 0$ is a linear combination of $R_{vw}(t_j)$,
\begin{equation*}
    p_q'(0) = \sum_{j=0}^q \ell'_j(0) R_{vw}(t_j),
\end{equation*}
where $\ell_j(t)$ are the Lagrange basis polynomials
\[
\ell_j(t) = \prod_{\substack{0\leq i \leq q\\ i \neq j}} \frac{t - t_i}{t_j - t_i}.
\]
\begin{lemma}[Interpolation error]\label{lem:truncation}
    Let $\cL$ be a $k$-local Lindbladian with local dynamical strength $\Lloc \leq \Lambda$. For any $\eps > 0$, set the evolution time and the number of nodes as
    \[
        T = \frac{1}{2k\Lambda},\qquad
        q = \ab\lceil\log(2k\Lambda/\eps)\rceil.
    \]
    Then, for every $(v,w)\in\cT_k$, the interpolation error is bounded by
    \[
         \abs{L_{vw} - p_q'(0)} \leq \frac{\eps}{2}.
    \]
\end{lemma}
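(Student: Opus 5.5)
We plan to apply the differentiated interpolation remainder formula stated after \cref{lem:remain_formula} with $f(t)=R_{vw}(t)$ on $\cI=[0,T]$, at the node $x_0=t_0=0$. Since $L_{vw}=R_{vw}'(0)$, this yields, for some $\varsigma\in[0,T]$,
\[
  \abs{L_{vw}-p_q'(0)} = \frac{\abs{R^{(q+1)}_{vw}(\varsigma)}}{(q+1)!}\prod_{j=1}^q \abs{t_j}.
\]
The function $R_{vw}$ is smooth (indeed entire, since it is a matrix exponential entry), so the formula applies.

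The first step is to bound the derivative factor. \Cref{lem:derivative_bound} applies because $(v,w)\in\cT_k$ and $\varsigma\geq0$, and it gives $\abs{R^{(q+1)}_{vw}(\varsigma)}\leq (q+1)!\,(k\Lloc)^{q+1}\leq (q+1)!\,(k\Lambda)^{q+1}$. The factorial then cancels exactly against the $(q+1)!$ in the denominator. This cancellation is the key point: it is what makes geometric convergence possible.

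The second step is to bound the node product. Every node satisfies $t_j\in[0,T]$, so $\prod_{j=1}^q t_j\leq T^q$. A sharper Chebyshev product bound is available but not needed. Combining the two steps gives
\[
  \abs{L_{vw}-p_q'(0)}\leq (k\Lambda)^{q+1}T^q = k\Lambda\,(k\Lambda T)^q = k\Lambda\,2^{-q},
\]
using $T=1/(2k\Lambda)$.

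Finally, we insert $q=\lceil\log(2k\Lambda/\eps)\rceil$, which is base-2 and at least $1$ under the standing assumption $\eps<k\Lambda$. Then $2^{-q}\leq \eps/(2k\Lambda)$, so the error is at most $\eps/2$.

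No step looks like a genuine obstacle. The only care needed is to make sure the differentiated remainder formula is used at the endpoint node, where the product runs over $j\neq0$, and that the derivative bound holds uniformly in $\varsigma\in[0,T]$. Both are given by the earlier results.
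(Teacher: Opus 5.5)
Your proposal is correct and follows essentially the same route as the paper. Both arguments apply the differentiated remainder formula at the endpoint node $t_0=0$, cancel the factorial using \cref{lem:derivative_bound}, bound $\prod_{j=1}^q t_j\le T^q$, and choose $q$ so that $k\Lambda\,2^{-q}\le\eps/2$, relying on the standing assumption $\eps<k\Lambda$ to ensure $q\ge1$.
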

\begin{proof}
    By the differentiated interpolation-remainder formula in \cref{lem:remain_formula} applied at the node $x_0 = t_0 = 0$, there exists $\varsigma_0 \in (0, T)$ such that
    \[
        L_{vw} - p_q'(0) = \frac{R_{vw}^{(q+1)}(\varsigma_0)}{(q+1)!} \prod_{j\neq 0}(0 - t_j) = \frac{R_{vw}^{(q+1)}(\varsigma_0)}{(q+1)!} \cdot (-1)^q \prod_{j=1}^q t_j.
    \]
    Here we used $t_0=0$, so that differentiating $W(t)=\prod_{j=0}^q(t-t_j)$ at $t=0$ leaves only the term $\prod_{j\neq0}(0-t_j)$ and avoids differentiating the $t$-dependent remainder point $\varsigma_t$.
    By the local derivative bound in \Cref{lem:derivative_bound} and the assumption $\Lloc\leq\Lambda$, $\abs{R_{vw}^{(q+1)}(\varsigma_0)}\leq (q+1)!\,(k\Lloc)^{q+1}\leq (q+1)!\,(k\Lambda)^{q+1}$, and since every node obeys $t_j\leq T$,
    \begin{align*}
        \abs{L_{vw} - p_q'(0)} &= \frac{\abs{R_{vw}^{(q+1)}(\varsigma_0)}}{(q+1)!} \cdot \prod_{j=1}^q t_j \\
        & \leq \frac{(q+1)!\,(k\Lambda)^{q+1}}{(q+1)!}\,T^{q} = (k\Lambda)^{q+1}T^q.
    \end{align*}
    Fixing the maximum evolution time as $T = 1/(2k\Lambda)$, the factorials cancel and the truncation error decays geometrically in $q$,
    \[
        (k\Lambda)^{q+1}T^q = k\Lambda\,\ab\big(k\Lambda T)^q = k\Lambda\,2^{-q}.
    \]
    To bound this by $\eps/2$, it suffices to fix the number of nodes as
    \begin{equation*}
        q = \ab\lceil\log(2k\Lambda/\eps)\rceil = O\ab\big(\log(k\Lambda/\eps)) = \widetilde{\Theta}(1). \qedhere
    \end{equation*}
\end{proof}

\subsection{Learning the PTM generator by shadow process tomography}\label{sec:ptm_learning}
The learning protocol below is adapted from ancilla-free shadow process tomography and related randomized-measurement methods~\cite{franca2024efficient,levy2024classical,kunjummen2023shadow}.  We specialize it to the local PTM entries needed for the endpoint-differentiation step. 

The target PTM entries are those indexed by $\cT_k$ defined in \cref{eq:Tk_def}. In the generator estimation algorithm, we choose the Chebyshev--Lobatto nodes $t_0,\ldots,t_q$ as described in \cref{sec:chebyshev_interpolation}, and estimate each $R_{vw}(t_j)$ for $0 \leq j \leq q$.  The value at $t_0=0$ is known exactly, since $R_{vw}(0)=\mathbf{1}[v=w]$, so no experiment is performed at $t_0$.  For each nonzero node $t_j$, $j=1,\ldots,q$, we collect $N$ independent shadow records.

In each experiment for some fixed $t$, we first generate a random input state $\rhoin = \bigotimes_{i=1}^n \ketbra{\psi_i}{\psi_i}$ where each $\ket{\psi_i}$ is chosen independently and uniformly from $\cbra{\ket{0},\ket{1},\ket{+},\ket{-},\ket{+\iu},\ket{-\iu}}$. After applying $e^{t\cL}$, choose a measurement basis $B_i\in\{X,Y,Z\}$ independently and uniformly for each output qubit, and record outcomes $o_i\in\{\pm1\}$. A single shadow shot can be recorded as a triple
\[
s = (\rhoin, B, o),
\]
where $B = (B_1, B_2, \dots, B_n)$ is the vector of output measurement bases and $o = (o_1, o_2, \dots, o_n)$ is the vector of measurement outcomes. Thus, the full shadow data set contains $qN$ raw records and uses the unknown channel exactly $qN$ times. For each target entry $(v,w)$, every raw record contributes one bounded random variable to the average for $R_{vw}(t)$.  The same collection of records is reused for all entries in $\cT_k$.

For a node $t_j$, the estimator of PTM entries $(v,w)$ obtained from the $m$-th shot is defined as
\begin{align}
\hat{R}_{vw}(t_j,m) \coloneq \hat{C}_{\mathrm{in}}(w) \cdot\hat{C}_{\mathrm{out}}(v),
\end{align}
where
\[
\hat{C}_{\mathrm{in}}(w) \coloneq 3^{\wt(w)}\tr\ab\big(P_w \rhoin), \qquad \hat{C}_{\mathrm{out}}(v) \coloneq \prod_{i\in\supp(v)} 3 o_i \cdot \mathbf{1}[B_i = (P_v)_i].
\]
The empirical PTM estimate at time $t_j$ is the average over total $N$ experiments,
\begin{equation*}
\bar{R}_{vw}(t_j) \coloneq \frac{1}{N}\sum_{m=1}^N \hat{R}_{vw}(t_j,m).
\end{equation*}
The generator estimate is then obtained by applying the endpoint differentiation rule to these empirical finite-time PTM estimates, together with the exact value at $t_0=0$:
\[
  \hat{L}_{vw} = \sum_{j=0}^q \ell'_j(0) \bar{R}_{vw}(t_j).
\]
In the following, we show that each single-shot estimator is unbiased and uniformly bounded.

\begin{algorithm}[ht!]
\caption{Learning the PTM generator by shadow process tomography}
\label{alg:PTM_learning}
\begin{algorithmic}[1]
\Require Access to $n$-qubit channel $\{e^{t\cL}\}_{t\geq 0}$; locality $k$; accuracy $\eps$; failure probability $\delta$; local strength bound $\Lambda$.
\Ensure Estimates of the PTM generator entries $\{\hat{L}_{vw}\}_{(v,w)\in \cT_k}$.
\State Set $T \gets 1/(2k\Lambda)$ and $q \gets \lceil\log(2k\Lambda/\eps)\rceil$.
\State Compute Chebyshev--Lobatto nodes $t_j \in [0, T]$ and weights $\ell'_j(0)$ for $j = 0,1,\dots, q$.
\State Set $N \gets C\, 3^{2k}k^2\Lambda^2 q^4/{\eps^2}\cdot(k\log n + \log(2/\delta))$ for a sufficiently large constant $C$.
\Statex \textit{// Stage 1: collecting raw shadow records by using the channel.}
\For{$j=1,\dots,q$}
  \For{$m = 1,\dots,N$}
    \State Prepare $\rhoin = \bigotimes_{i=1}^n\ketbra{\psi_i}{\psi_i}$, each $\ket{\psi_i}$ chosen uniformly from $\cbra*{\ket{0},\ket{1},\ket{+},\ket{-},\ket{+\iu},\ket{-\iu}}$.
    \State Evolve $\rhoout \gets e^{t_j\cL}(\rho_{\mathrm{in}})$.
    \State Measure each qubit in an independently and uniformly sampled Pauli basis 
    \NoNumState $B_i \in \{X, Y, Z\}$, forming $B = (B_1, \dots, B_n)$, and get an outcome $o\in\{\pm 1\}^n$.
    \State Store the raw record $s_{j,m}\gets(\rhoin,B,o)$.
  \EndFor
\EndFor
\Statex \textit{// Stage 2: classical post-processing, reusing the same collection of records for every $(v,w)$.}
\For{$(v,w) \in \cT_k$}
  \For{$j=1,\dots,q$ \textbf{and} $m=1,\dots,N$}
  \State Read record $s_{j,m}$, compute
  \[
  \hat{R}_{vw}(t_j,m) \gets \hat{C}_{\mathrm{in}}(w) \cdot\hat{C}_{\mathrm{out}}(v)
  \]
  \NoNumState where
  \[
  \hat{C}_{\mathrm{in}}(w) \gets 3^{\wt(w)}\tr\ab\big(P_w \rhoin), \qquad
  \hat{C}_{\mathrm{out}}(v) \gets \prod_{i\in\supp(v)} 3 o_i \cdot \mathbf{1}[B_i = (P_v)_i].
  \]
\EndFor
\State $\bar{R}_{vw}(t_j) \gets \frac{1}{N}\sum_{m=1}^N \hat{R}_{vw}(t_j,m)$ for $j = 1, \dots, q$.
\EndFor

\State\Return
\begin{equation}\label{eq:shadow_deriv}
  \hat{L}_{vw} \gets \ell'_0(0)\,\mathbf{1}[v=w] + \sum_{j=1}^{q}\ell'_j(0) \bar{R}_{vw}(t_j) 
\end{equation}
for all $(v,w)\in \cT_k$.
\end{algorithmic}
\end{algorithm}
\begin{lemma}[Unbiased estimator of PTM]\label{lem:unbiasedness_of_R}
    For every pair $(v, w) \in \cT_k$, $t \geq 0$, and $m \in [N]$,
    \[
    \E[\hat{R}_{vw}(t, m)] = R_{vw}(t).
    \]
\end{lemma}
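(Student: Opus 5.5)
The plan is to compute the expectation in two stages: first conditioned on the input state, averaging over the random measurement bases and outcomes, and then averaging over the random product input.

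\textbf{Step 1: the output factor.} Fix $\rhoin$ and let $\rhoout=e^{t\cL}(\rhoin)$. For each qubit $i\in\supp(v)$, the basis $B_i$ equals $(P_v)_i$ with probability $1/3$. Conditioned on this event, the outcome $o_i$ is the eigenvalue of that single-qubit Pauli. Because the bases are chosen independently across qubits, the product $\prod_{i\in\supp(v)} o_i\,\mathbf{1}[B_i=(P_v)_i]$ has conditional expectation $3^{-\wt(v)}\tr(P_v\rhoout)$. The reason is that the joint outcome distribution on the qubits where the basis matches is that of measuring the commuting product observable $P_v$. Multiplying by the factor $3^{\wt(v)}$ gives
\[
\E[\hat C_{\mathrm{out}}(v)\mid\rhoin]=\tr\ab\big(P_v\,e^{t\cL}(\rhoin)).
\]
The phase convention in the label encoding causes no trouble here, because $P_v$ is a tensor product of Hermitian single-qubit Paulis.

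\textbf{Step 2: the input average.} Since $\hat C_{\mathrm{in}}(w)$ is a deterministic function of $\rhoin$, linearity of $e^{t\cL}$ and of the trace reduces the claim to the operator identity
\[
\E_{\rhoin}\ab\big[3^{\wt(w)}\tr(P_w\rhoin)\,\rhoin]=2^{-n}P_w.
\]
Substituting this into Step 1 gives $\E[\hat R_{vw}]=2^{-n}\tr(P_v e^{t\cL}(P_w))=R_{vw}(t)$. The identity factorizes over qubits because the input state is a product with independent factors, so it suffices to check it for a single qubit with $\sigma=\ketbra{\psi}{\psi}$ drawn uniformly from the six stabilizer states:
\begin{itemize}
\item If $P=I$, the left side is $\E[\sigma]=I/2$.
\item If $P\in\{X,Y,Z\}$, write $\sigma=\tfrac12(I+\sum_{Q}s_Q Q)$. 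Only the two eigenstates of $P$ have $\tr(P\sigma)=\pm1$, and each occurs with probability $1/6$. This gives $3\cdot\tfrac16\big(\tfrac{I+P}{2}-\tfrac{I-P}{2}\big)=P/2$.
\end{itemize}

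The main obstacle is bookkeeping rather than substance. In Step 1 one has to argue carefully that the conditional expectation of the product of outcomes is exactly $\tr(P_v\rhoout)$ even though $\rhoout$ may be entangled. This follows from the fact that measuring every qubit in the basis $(P_v)_i$ and multiplying the eigenvalues realizes a measurement of $P_v$. Qubits outside $\supp(v)$ are marginalized out, which leaves the reduced state and preserves the expectation.
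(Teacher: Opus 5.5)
Your proposal is correct and follows essentially the same route as the paper. You condition on $\rhoin$ to obtain $\E[\hat C_{\mathrm{out}}(v)\mid\rhoin]=\tr(P_v\rhoout)$, then combine it, via linearity of $e^{t\cL}$ and the trace, with the qubit-wise factorized identity $\E_{\rhoin}[3^{\wt(w)}\tr(P_w\rhoin)\rhoin]=2^{-n}P_w$, checked case by case exactly as in the paper.
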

\begin{proof}
    Let $\rhoout = e^{t\cL}(\rhoin)$. 
    The measurement bases $B = (B_1, B_2, \dots, B_n)$ are chosen independently and uniformly from $\cbra{X, Y, Z}$. Since $P_v = \bigotimes_{i=1}^n (P_v)_i$, the estimator
    \[
    \hat{C}_{\mathrm{out}}(v) = \prod_{i\in\supp(v)} 3 o_i \cdot \mathbf{1}[B_i = (P_v)_i]
    \]
    is non-zero only when $B_i = (P_v)_i$ for every $i \in \supp(v)$, the probability of which is 
    \[
    \Pr\ab\big[B_i = (P_v)_i, \forall i \in \supp(v)] = \ab\Big(\frac{1}{3})^{\wt(v)}.
    \]
    Conditioning on this event, the outcomes $\cbra{o_i}_{i \in \supp(v)}$ are the results of measuring $P_v$ qubit-wise on $\rhoout$, thus we have
    \[
    \E\ab\bigg[\prod_{i \in \supp(v)} o_i \mathrel{\Big|} B_i = (P_v)_i, \forall i \in \supp(v)] = \tr(P_v \rhoout).
    \]
    Therefore,
    \begin{subequations}
    \label{eq:C_out}
    \begin{align}
        \E_{B,o}\ab[\hat{C}_{\mathrm{out}}(v) \mathrel{\big|} \rhoin] &= 3^{\wt(v)} \cdot \tr(P_v \rhoout) \cdot \Pr\ab\big[B_i = (P_v)_i, \forall i \in \supp(v)] \\
        &= 3^{\wt(v)} \cdot \ab\Big(\frac{1}{3})^{\wt(v)} \cdot \tr(P_v \rhoout)\\
        &= \tr(P_v \rhoout).
    \end{align}
    \end{subequations}

    Now consider $\hat{C}_{\mathrm{in}}(w) = 3^{\wt(w)}\tr\ab\big(P_w \rhoin)$ for some Pauli operator $P_w$. Since $\rhoin = \bigotimes_{i=1}^n \rho_i$ with each $\rho_i$ independently and uniformly sampled, and $\tr(P_w \rhoin) = \prod_{i=1}^n \tr((P_w)_i \rho_i)$, we have
    \[
    \E_{\rhoin}\ab[3^{\wt(w)}\tr(P_w \rhoin)\; \rhoin] = \bigotimes_{i=1}^n \E_{\rho_i}\ab[3^{\mathbf{1}{[(P_w)_i\neq I]}} \tr\ab((P_w)_i \rho_i) \rho_i].
    \]
    Then it suffices to evaluate the expectation for each single qubit:
    \begin{enumerate}
        \item If $(P_w)_i = I$, then $3^{\mathbf{1}{[(P_w)_i\neq I]}} \tr\ab((P_w)_i \rho_i) \rho_i = 3^0 \tr(I \rho_i) \rho_i = \rho_i$. Taking the expectation gives $\E[\rho_i] = I/2$ since each $\rho_i$ is chosen independently and uniformly from the six Pauli eigenstates.
        \item If $(P_w)_i \in \{X, Y, Z\}$, among the six states only the two eigenstates of $(P_w)_i$ (denoted by $\ket{s_+}$ and $\ket{s_-}$) have non-zero trace with $(P_w)_i$. Therefore,
        \begin{align*}
        \E_{\rho_i}\ab[3^{\mathbf{1}{[(P_w)_i\neq I]}} \tr((P_w)_i \rho_i) \rho_i] &= 3 \E_{\rho_i}\ab[\tr((P_w)_i \rho_i) \rho_i]\\
        &= 3\ab\big( \frac{1}{6} \ketbra{s_+}{s_+} - \frac{1}{6} \ketbra{s_-}{s_-})\\
        &= \frac{1}{2}(\ketbra{s_+}{s_+} - \ketbra{s_-}{s_-})\\
        &= \frac{(P_w)_i}{2}.
        \end{align*}
    \end{enumerate}
    In both cases, the expectation on each single-qubit is $(P_w)_i/2$. Taking the tensor product gives
    \begin{equation}\label{eq:C_in}
        \E_{\rhoin}\ab[3^{\wt(w)}\tr(P_w \rhoin)\; \rhoin] = \bigotimes_{i=1}^n \frac{(P_w)_i}{2} = \frac{P_w}{2^n}.
    \end{equation}
    Combining \cref{eq:C_in,eq:C_out} yields
    \begin{align*}
        \E[\hat{R}_{vw}(t, m)] &= \E_{\rhoin}\ab\Big[\hat{C}_\mathrm{in}(w) \cdot \E_{B, o}\ab[\hat{C}_\mathrm{out}(v) \mathrel{|} \rhoin]]\\
        &= \E_{\rhoin}\ab\Big[3^{\wt(w)} \cdot \tr(P_w \rhoin) \cdot \tr(P_v \rhoout)]\\
        &= \E_{\rhoin}\ab\Big[3^{\wt(w)} \cdot \tr(P_w \rhoin) \cdot \tr\ab\big(P_v e^{t\cL}(\rhoin))]\\
        &= \E_{\rhoin}\ab\Big[  \tr\ab\Big(P_v e^{t\cL}\ab\big(3^{\wt(w)} \cdot \tr(P_w \rhoin)\cdot \rhoin))]\\
        &= \tr\ab(P_v e^{t\cL}\ab\Big(\E_{\rhoin}\ab[3^{\wt(w)} \cdot \tr(P_w \rhoin)\cdot \rhoin]))\\
        &= \tr\ab(P_v e^{t\cL}\ab\Big(\frac{P_w}{2^n}))\\
        &= \frac{1}{2^n} \tr\ab\big(P_v e^{t \cL}(P_w)) = R_{vw}(t). \qedhere
    \end{align*}
\end{proof}

\begin{lemma}[Variance of the PTM estimator]\label{lem:variance}
For every pair $(v, w) \in \cT_k$, $t \geq 0$, and $m \in [N]$
\[
  \E\ab\big[\hat R_{vw}(t, m)^2]=3^{\wt(v)+\wt(w)}, \qquad \Var\ab\big[\hat R_{vw}(t, m)] \leq 3^{\wt(v)+\wt(w)}.
\]
\end{lemma}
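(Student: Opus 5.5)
The second moment can be computed exactly: the square of the estimator turns out not to depend on the random outcomes at all whenever it is nonzero. I will treat the input and output factors separately. Write $\hat R_{vw}(t,m)^2=\hat C_{\mathrm{in}}(w)^2\,\hat C_{\mathrm{out}}(v)^2$.

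For the output factor, $o_i^2=1$ and $\mathbf{1}[\cdot]^2=\mathbf{1}[\cdot]$. Hence
\[
\hat C_{\mathrm{out}}(v)^2=9^{\wt(v)}\prod_{i\in\supp(v)}\mathbf{1}[B_i=(P_v)_i].
\]
This quantity is independent of $\rhoin$, of the channel $e^{t\cL}$, and of the outcomes $o$. It depends only on the bases $B$, which are sampled independently of everything else. Its expectation is therefore $9^{\wt(v)}\cdot 3^{-\wt(v)}=3^{\wt(v)}$, by the same probability computation as in the proof of \cref{lem:unbiasedness_of_R}.

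For the input factor, use the product structure $\tr(P_w\rhoin)=\prod_{i}\tr((P_w)_i\rho_i)$. On a qubit $i\in\supp(w)$, the state $\rho_i$ is one of the six Pauli eigenstates, so $\tr((P_w)_i\rho_i)\in\{0,\pm1\}$. It equals $\pm1$ exactly when $\rho_i$ is one of the two eigenstates of $(P_w)_i$, which happens with probability $1/3$. On qubits outside the support the factor is identically $1$. Consequently
\[
\hat C_{\mathrm{in}}(w)^2=9^{\wt(w)}\prod_{i\in\supp(w)}\mathbf{1}\bigl[\rho_i\text{ is an eigenstate of }(P_w)_i\bigr],
\]
and by independence across qubits its expectation is $9^{\wt(w)}3^{-\wt(w)}=3^{\wt(w)}$.

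The one point needing care is independence. I need $\hat C_{\mathrm{in}}(w)^2$ and $\hat C_{\mathrm{out}}(v)^2$ to be independent, so that the expectation of the product factorizes. This holds because the first is a function of $\rhoin$ alone and the second a function of $B$ alone, and $B$ is drawn independently of $\rhoin$. The outcomes $o$, which do depend on $\rhoin$ through the channel, have dropped out after squaring.

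This gives $\E[\hat R_{vw}^2]=3^{\wt(v)+\wt(w)}$ exactly. The variance bound then follows from $\Var[X]=\E[X^2]-(\E X)^2\leq\E[X^2]$. No real obstacle is expected: the only subtlety is noticing that squaring removes the outcome dependence, so no quantum expectation needs to be evaluated.
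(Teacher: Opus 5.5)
Your proof is correct and follows essentially the same route as the paper. Squaring eliminates the outcomes $o_i$, so the output factor has second moment $3^{\wt(v)}$ regardless of the input, and the input factor has second moment $3^{\wt(w)}$ because each supported qubit contributes $9\cdot\tfrac13$; the two moments then multiply. The only cosmetic difference is that you invoke the independence of $B$ and $\rhoin$ directly, whereas the paper conditions on $\rhoin$ and applies the tower property.
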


\begin{proof}
By definition $\hat R_{vw}(t, m) = \hat{C}_{\mathrm{in}}(w) \cdot\hat{C}_{\mathrm{out}}(v)$. Conditioned on $\rho_{\mathrm{in}}$, the output squares to
\[
  \hat{C}_{\mathrm{out}}(v)^2=\prod_{i\in\supp(v)}9\cdot\mathbf{1}[B_i=(P_v)_i]\,o_i^2 .
\]
Since $o_i^2=1$ and each measurement basis matches the required Pauli with probability $1/3$,
\[
  \E[\hat{C}_{\mathrm{out}}(v)^2\mid\rho_{\mathrm{in}}]=\prod_{i\in\supp(v)}9\cdot\tfrac13=3^{\wt(v)},
\]
independently of $\rho_{\mathrm{in}}$.  For the input, the six single-qubit Pauli eigenstates give $\E_\rho[(\tr Q\rho)^2]=1/3$ for each $Q\in\{X,Y,Z\}$, so
\[
  \E[\hat{C}_{\mathrm{in}}(w)^2]
  =3^{2\wt(w)}\prod_{i\in\supp(w)}\tfrac13
  =3^{\wt(w)}.
\]
By the tower property,
\[
  \E\ab\big[\hat R_{vw}(t, m)^2]
  =\E\ab[\hat{C}_{\mathrm{in}}(w)^2\,\E[\hat{C}_{\mathrm{out}}(v)^2\mid\rho_{\mathrm{in}}]]
  =3^{\wt(v)} \E\ab\big[\hat{C}_{\mathrm{in}}(w)^2]
  =3^{\wt(v)+\wt(w)}.
\]
The variance is at most the second moment, giving the stated bound.
\end{proof}

\begin{corollary}[Bias of the PTM generator estimator]\label{cor:estimator_of_L}
    Let $\cL$ be a $k$-local Lindbladian with local dynamical strength $\Lloc \leq \Lambda$, and set $T = \frac{1}{2k\Lambda}$, $q = \lceil\log(2k\Lambda/\eps)\rceil$. Then the estimator of the PTM generator $\hat{L}_{vw}$ in \cref{eq:shadow_deriv} satisfies that
    \[
    \abs[\big]{\E[\hat{L}_{vw}] - L_{vw}} \leq \eps/2
    \]
    for each pair $(v, w) \in \cT_k$.
\end{corollary}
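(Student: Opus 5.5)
The plan is to combine linearity of expectation with the unbiasedness of the single-shot estimators, which reduces the bias of $\hat L_{vw}$ to the deterministic interpolation error already controlled in \cref{lem:truncation}.

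\textbf{Step 1: reduce to the empirical averages.} By \cref{eq:shadow_deriv}, $\hat L_{vw}$ is a fixed linear combination of the constant $\mathbf{1}[v=w]$ and the empirical averages $\bar R_{vw}(t_j)$, $j=1,\dots,q$. The weights $\ell_j'(0)$ are deterministic, since they depend only on the Chebyshev--Lobatto nodes. Hence
\[
  \E[\hat L_{vw}] = \ell_0'(0)\,\mathbf{1}[v=w] + \sum_{j=1}^q \ell_j'(0)\,\E[\bar R_{vw}(t_j)].
\]

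\textbf{Step 2: evaluate each expectation.} Each $\bar R_{vw}(t_j)$ is an average of $N$ single-shot estimators $\hat R_{vw}(t_j,m)$. By \cref{lem:unbiasedness_of_R}, each of these has mean $R_{vw}(t_j)$, so $\E[\bar R_{vw}(t_j)]=R_{vw}(t_j)$. At the node $t_0=0$ we have $e^{0\cdot\cL}=\mathrm{id}$ and $R_{vw}(0)=2^{-n}\tr(P_vP_w)=\mathbf{1}[v=w]$ by orthogonality of Paulis. Therefore the $j=0$ term also equals $\ell_0'(0)R_{vw}(t_0)$, and
\[
  \E[\hat L_{vw}] = \sum_{j=0}^q \ell_j'(0)R_{vw}(t_j) = p_q'(0),
\]
where $p_q$ is the exact degree-$q$ interpolant of $R_{vw}$ at the nodes.

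\textbf{Step 3: conclude.} We get $\abs{\E[\hat L_{vw}]-L_{vw}}=\abs{p_q'(0)-L_{vw}}\le\eps/2$ by \cref{lem:truncation}, which applies because $T$ and $q$ are chosen exactly as in that lemma and $(v,w)\in\cT_k$. There is no real obstacle. The only points needing care are that the weights are nonrandom, so expectation passes through the sum, and that the $t_0$ term is handled exactly rather than estimated; the latter is the identity $R_{vw}(0)=\mathbf{1}[v=w]$.
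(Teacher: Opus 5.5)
Your proposal is correct and matches the paper's proof: linearity of expectation together with \cref{lem:unbiasedness_of_R} gives $\E[\hat L_{vw}]=\ell_0'(0)\,\mathbf{1}[v=w]+\sum_{j=1}^q \ell_j'(0)R_{vw}(t_j)=p_q'(0)$, and \cref{lem:truncation} then bounds the remaining interpolation error by $\eps/2$. Your explicit check that $R_{vw}(0)=\mathbf{1}[v=w]$, so that the $j=0$ term is exact, makes precise a step the paper leaves implicit.
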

\begin{proof}
    By definition, we have
    \[
    \hat{L}_{vw} = \ell'_0(0)\,\mathbf{1}[v=w] + \sum_{j=1}^{q}\ell'_j(0) \bar{R}_{vw}(t_j) = \ell'_0(0)\,\mathbf{1}[v=w] + \sum_{j=1}^{q}\ell'_j(0) \ab\Big(\frac{1}{N}\sum_{m=1}^{N}\hat{R}_{vw}(t_j, m)).
    \]
    By \cref{lem:unbiasedness_of_R}, the expectation is
    \begin{align*}
        \E[\hat{L}_{vw}] &= \ell'_0(0)\,\mathbf{1}[v=w] + \sum_{j=1}^{q}\ell'_j(0) \ab\Big(\frac{1}{N}\sum_{m=1}^{N} \E[\hat{R}_{vw}(t_j, m)]) \\
        &=\ell'_0(0)\,\mathbf{1}[v=w]+ \sum_{j=1}^{q}\ell'_j(0) R_{vw}(t_j).
    \end{align*}
    By \Cref{lem:truncation}, the Chebyshev derivative estimation with $T = \frac{1}{2k\Lambda}$ and $q = \lceil\log(2k\Lambda/\eps)\rceil$ yields
    \[
    \abs[\big]{\E[\hat{L}_{vw}] - L_{vw}} \leq (k\Lambda)^{q+1}T^q \leq \frac{\eps}{2}. \qedhere
    \]
\end{proof}

\subsection{Complexity for learning PTM generator}\label{sec:upper_bound}
To prove the upper bound for the sample complexity, we will need a concentration bound derived from weighted Bernstein's inequality.

\begin{lemma}[Weighted Bernstein's inequality {\cite[Theorem 2.10]{boucheron2013concentration}}]\label{lem:weighted_bernstein}
Let $\{X_{i}:1\leq i\leq m\}$ be independent random variables, and let $a_1,\ldots,a_m \in \bbR$ be real weights with $a_{\max}=\max_i\abs{a_i}$. Define the weighted sum and variance
\[
S \coloneq \sum_{i= 1}^m a_i (X_i - \E[X_i]), \qquad V \coloneq \sum_{i=1}^m a_i^2 \Var[X_i].
\]
Assume that $\abs{X_i - \E[X_i]} \leq B$ almost surely for each $i$. Then
\[
\Pr\ab[\abs{S}\geq s] \leq 2\exp\ab\Big( -\frac{s^2}{2(V+ a_{\max}sB/ 3)}).
\]
\end{lemma}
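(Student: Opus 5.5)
We would reduce the statement to the classical (unweighted) Bernstein inequality for bounded, centered, independent summands. Set $Y_i \coloneq a_i(X_i-\E[X_i])$. The $Y_i$ are independent and have mean zero. They satisfy $\abs{Y_i}\leq \abs{a_i}B\leq a_{\max}B \eqqcolon b$ almost surely, and $\Var[Y_i]=a_i^2\Var[X_i]$. Hence $S=\sum_i Y_i$ and $V=\sum_i\Var[Y_i]$. It therefore suffices to prove that for independent centered $Y_i$ with $\abs{Y_i}\leq b$ and total variance $V$,
\[
\Pr[S\geq s]\leq \exp\ab\Big(-\frac{s^2}{2(V+bs/3)}).
\]
Applying the same bound to $-S$ (the variables $-Y_i$ satisfy the same hypotheses) and taking a union bound then gives the factor $2$ in the two-sided statement.

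For the one-sided bound we use the Chernoff method. For $\lambda>0$, Markov's inequality and independence give $\Pr[S\geq s]\leq e^{-\lambda s}\prod_i\E[e^{\lambda Y_i}]$. The key moment-generating-function step is as follows. Expand $e^{\lambda Y_i}=1+\lambda Y_i+\sum_{p\geq2}\lambda^pY_i^p/p!$, use $\E[Y_i]=0$, and bound $\abs{\E[Y_i^p]}\leq b^{p-2}\Var[Y_i]$ for $p\geq 2$. This yields
\[
\E[e^{\lambda Y_i}]\leq 1+\Var[Y_i]\sum_{p\geq2}\frac{\lambda^pb^{p-2}}{p!}\leq \exp\ab\big(\Var[Y_i]\,\lambda^2\phi(\lambda b)),\qquad \phi(x)=\frac{e^x-1-x}{x^2}.
\]
We then use the elementary estimate $p!\geq 2\cdot 3^{p-2}$, which gives $\phi(x)\leq \frac{1}{2(1-x/3)}$ for $0\leq x<3$. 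Multiplying over $i$, we obtain $\log\E[e^{\lambda S}]\leq \frac{\lambda^2V}{2(1-\lambda b/3)}$ whenever $\lambda b<3$.

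The remaining step is to optimize over $\lambda$, and this is the only place that needs care. We choose $\lambda = s/(V+bs/3)$. This choice automatically satisfies $\lambda b<3$. It also gives $1-\lambda b/3 = V/(V+bs/3)$, so the exponent becomes
\[
-\lambda s+\frac{\lambda^2 V}{2(1-\lambda b/3)} = -\frac{s^2}{V+bs/3}+\frac{s^2}{2(V+bs/3)} = -\frac{s^2}{2(V+bs/3)}.
\]
Substituting $b=a_{\max}B$ yields exactly the claimed bound. The only potential obstacle is verifying the moment bound $\abs{\E[Y_i^p]}\leq b^{p-2}\E[Y_i^2]$ together with the inequality $p!\geq 2\cdot3^{p-2}$, and both are routine. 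Since this is the standard statement in \cite[Theorem 2.10]{boucheron2013concentration}, we could alternatively just invoke that result with the variables $Y_i$.
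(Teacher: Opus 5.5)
Your proposal is correct and matches the paper, which simply invokes \cite[Theorem 2.10]{boucheron2013concentration}. Absorbing the weights into $Y_i=a_i(X_i-\E[X_i])$ with $\abs{Y_i}\leq a_{\max}B$ is how that result gives the stated form, and your Chernoff derivation (log-MGF bound $\lambda^2V/(2(1-\lambda b/3))$ optimized at $\lambda=s/(V+bs/3)$) is the standard proof of the cited theorem. One small nit: $\lambda b<3$ requires $V>0$, but when $V=0$ every $Y_i$ vanishes almost surely and the bound holds trivially.
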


Next, we establish the sample complexity of the proposed algorithm. 
\begin{theorem}[Upper bound for PTM generator learning]\label{thm:ptm_generator_estimation}
    Let $\cL$ be an unknown $k$-local Lindbladian on $n$ qubits with local dynamical strength $\Lloc \leq \Lambda$. For any $\eps,\delta \in (0,1)$, the estimates $\hat{L}_{vw}$ produced by \cref{alg:PTM_learning} satisfy that
    \[
    \Pr\ab\Big[\max_{(v, w) \in \cT_k} \abs[\big]{\hat{L}_{vw}-L_{vw}}\geq \eps] \leq \delta,
    \]
    using a total number of 
    \[
    \widetilde O\ab\Big(\frac{\Lambda^2 }{\eps^2} 3^{2k} k^2  \rbra[\big]{k\log n+\log\tfrac{2}{\delta}})
    \]
    accesses to $e^{t\cL}$. The total evolution time is
    \[
    \widetilde O\ab\Big(\frac{\Lambda}{\eps^2}3^{2k} k \rbra[\big]{k\log n+\log\tfrac{2}{\delta}}).
    \]
    Moreover, the classical processing of \cref{alg:PTM_learning} runs in time
    \[
    \widetilde O\ab\Big(\frac{ \Lambda^2}{\eps^2}\,3^{2k} k^3 n^k \rbra[\big]{k\log n+\log\tfrac{2}{\delta}}).
    \]
\end{theorem}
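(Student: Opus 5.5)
The proof splits the error into bias and fluctuation. For each $(v,w)\in\cT_k$ write $\hat L_{vw}-L_{vw}=(\hat L_{vw}-\E[\hat L_{vw}])+(\E[\hat L_{vw}]-L_{vw})$. By \cref{cor:estimator_of_L}, the second term is at most $\eps/2$. It therefore suffices to show that the fluctuation term exceeds $\eps/2$ with probability at most $\delta/M_k$ for each entry, and then take a union bound over the $M_k=O(n^k)\le (4n)^k$ target pairs. The fluctuation is the weighted sum
\[
S_{vw}=\sum_{j=1}^q\sum_{m=1}^N \frac{\ell_j'(0)}{N}\bigl(\hat R_{vw}(t_j,m)-R_{vw}(t_j)\bigr),
\]
a sum of $qN$ independent terms, since distinct records are independent. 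I will apply \cref{lem:weighted_bernstein} with weights $a_{j,m}=\ell_j'(0)/N$.

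The Bernstein inputs come from the earlier lemmas. By \cref{lem:variance}, $\Var[\hat R_{vw}]\le 3^{2k}$ because $\wt(v)+\wt(w)\le 2k$. Almost surely $\abs{\hat R_{vw}}\le 3^{\wt(v)+\wt(w)}\le 3^{2k}$, and $\abs{R_{vw}}\le 1$, so $B\le 2\cdot 3^{2k}$. The key quantity is the weight sum $W\coloneq\sum_{j=1}^q\abs{\ell_j'(0)}$.

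\textbf{The main obstacle is showing $W=O(q^2/T)$.} By the affine change of variables $t=\frac T2(1-x)$, the weights $\ell_j'(0)$ equal $-\frac{2}{T}D_{0j}$, where $D$ is the Chebyshev differentiation matrix of \cref{lem:chebyshev_matrix} at the endpoint $x_0=1$. For $j\ge1$ that lemma gives $\abs{D_{0j}}=\frac{2}{c_j}\frac{1}{1-x_j}$. Using $1-\cos(j\pi/q)=2\sin^2(j\pi/2q)\ge 2j^2/q^2$ (from $\sin y\ge 2y/\pi$ on $[0,\pi/2]$), we get $\sum_{j\ge1}\abs{D_{0j}}\le \sum_j q^2/j^2=O(q^2)$. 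Hence $W=O(q^2/T)=O(k\Lambda q^2)$ and $a_{\max}\le W/N$.

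With these bounds, $V\le \sum_j \ell_j'(0)^2\,3^{2k}/N\le 3^{2k}W^2/N$. Setting $s=\eps/2$, Bernstein gives a tail bound of
\[
2\exp\!\Bigl(-\Omega\bigl(N\eps^2/(3^{2k}W^2+3^{2k}W\eps)\bigr)\Bigr).
\]
Since $\eps<k\Lambda\le W$, this is $2\exp(-\Omega(N\eps^2/(3^{2k}k^2\Lambda^2q^4)))$. Choosing $N=C\,3^{2k}k^2\Lambda^2q^4\eps^{-2}(k\log n+\log(2/\delta))$, as in \cref{alg:PTM_learning}, makes this at most $\delta/M_k$, and the union bound finishes the accuracy claim.

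The resource counts follow from $N$. The number of channel uses is $qN$, and since $q=O(\log(k\Lambda/\eps))$ this is the stated $\widetilde O(\cdot)$. Each use evolves for at most $T=1/(2k\Lambda)$, so the total evolution time is at most $qNT$, which removes one factor of $k\Lambda$. For classical processing, each of the $M_k=O(n^k)$ entries requires reading $qN$ records. Each record costs $O(k)$ time to evaluate $\hat C_{\mathrm{in}}\hat C_{\mathrm{out}}$ on the at most $k$ relevant qubits, and the Lagrange weights are precomputed once in $\mathrm{poly}(q)$ time. This gives $O(n^k\,k\,qN)$, matching the stated bound up to the logarithmic factors absorbed in $\widetilde O$.
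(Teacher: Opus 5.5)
Your proposal is correct and follows the paper's proof essentially step for step. You take the bias from \cref{cor:estimator_of_L}, apply weighted Bernstein with $B=2\cdot 3^{2k}$ and $V\le 3^{2k}\sum_j\ell_j'(0)^2/N$, take a union bound over $\cT_k$, and read off the resources as $qN$, $qNT$ and $O(kM_kqN)$. The only difference is that you bound $\sum_{j\ge1}\abs{D_{0j}}$ via $\sin y\ge 2y/\pi$, whereas the paper uses the exact identity $\sum_{j=1}^{q-1}\csc^2(j\pi/(2q))=2(q^2-1)/3$; both give $O(q^2)$.

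One harmless slip: $M_k\le(4n)^k$ is false. In fact $M_k=\sum_{s=0}^{k}\binom{n}{s}15^s$, so for example $M_1=15n+1$. This still gives $\log M_k=O(k\log n)$ for $n\ge2$, which is all your union bound needs.
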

\begin{proof}
    In \cref{alg:PTM_learning}, we can see that
    \[
    \hat{L}_{vw} = \ell'_0(0)\,\mathbf{1}[v=w] + \sum_{j=1}^{q}\ell'_j(0) \bar{R}_{vw}(t_j) = \ell'_0(0)\,\mathbf{1}[v=w] + \sum_{j=1}^{q}\ell'_j(0) \ab\Big(\frac{1}{N}\sum_{m=1}^{N}\hat{R}_{vw}(t_j, m))
    \]
    is a weighted sum of $qN$ independent random variables $\hat{R}_{vw}(t_j, m)$, which have means $R_{vw}(t_j)$ and weights $\ell'_j(0)/N$.  The largest weight is $\ell'_{\max} / N$ with $\ell'_{\max} \coloneq \max_{j} \abs{\ell'_j(0)}$. Since $\abs{R_{vw}(t_j)}\leq 1$ and $\abs{\hat{R}_{vw}(t_j, m)}\leq3^{2k}$, we have $\abs{\hat{R}_{vw}(t_j, m) - R_{vw}(t_j)}\leq B = 2\cdot 3^{2k}$, and by \cref{lem:variance} the variance is $\Var[\hat{R}_{vw}(t_j, m)] \leq 3^{2k}$. The total weighted variance is therefore
    \begin{equation}\label{eq:weighted_var}
    V=\sum_{j=1}^q\sum_{m=1}^N\ab\Big(\frac{\ell'_j(0)}{N})^2\Var\ab\big[\hat{R}_{vw}(t_j, m)]\leq \frac{3^{2k}}{N}\sum_{j=1}^q \abs{\ell'_j(0)}^2.
    \end{equation}

    For Chebyshev--Lobatto nodes mapped to $[0, T]$, the chain rule introduces a factor of $2/T$ relative to the Chebyshev spectral differentiation matrix $D$ evaluated at the boundary. 
    We have
    \begin{equation*}
        \sum_{j=1}^q \ab\big|\ell'_j(0)| = \frac{2}{T} \sum_{j=1}^q \ab\big|D_{0j}|.
    \end{equation*}
    By \Cref{lem:chebyshev_matrix}, we have
    \begin{align*}
        \sum_{j=1}^q |D_{0j}| &= \ab\bigg(\sum_{j=1}^{q-1} \frac{2}{1 - \cos(j\pi/q)}) + \frac{2}{2(1 - \cos(\pi))} \\
        & = \sum_{j=1}^{q-1} \csc^2\ab\Big(\frac{j\pi}{2q}) + \frac{1}{2} \\
        & = \frac{2(q^2 - 1)}{3} + \frac{1}{2} = \frac{4q^2 - 1}{6}.
    \end{align*}
    Therefore, the weights sum to
    \begin{equation}
        \sum_{j=1}^q \ab\big|\ell'_j(0)| = \frac{2}{T}\ab\Big(\frac{4q^2 - 1}{6}) = O(k\Lambda q^2). \label{eq:lagrange_polynomial}
    \end{equation}
    By weighted Bernstein's inequality in \cref{lem:weighted_bernstein}, we have
    \begin{subequations}
        \begin{align}
            \Pr \ab[\abs*{\hat{L}_{vw}-\E[\hat{L}_{vw}]} > \frac{\eps}{2}]
            &\leq 2\exp\ab\bigg(
            -\frac{\eps^2}
            {8(V+\frac{\eps \ell'_{\max}B}{6N})})\\
            &\leq 2\exp\ab\bigg(
            -\frac{\eps^2}
            {8(\frac{3^{2k}}{N}\sum_{j=1}^q \abs{\ell'_j(0)}^2+ \frac{\eps \ell'_{\max}3^{2k}}{3N})}) \label{eq:var_bound}\\
            &= 2\exp\ab\bigg(
            -\frac{N \eps^2}
            {8\cdot {3^{2k}} (\sum_{j=1}^q \abs{\ell'_j(0)}^2+ \eps \ell'_{\max}/{3})})\\
            &\leq 2\exp\ab\bigg(
            -\frac{N \eps^2}
            {8\cdot {3^{2k}} (\sum_{j=1}^q \abs{\ell'_j(0)}^2+ (\sum_{j=1}^q \abs{\ell'_j(0)})^2/{3})}) \label{eq:l_bound}\\
            &\leq 2\exp\ab\bigg(
            -\frac{N \eps^2}
            {32\cdot {3^{2k-1}}  (\sum_{j=1}^q \abs{\ell'_j(0)})^2}) \label{eq:lsum_bound}\\
            &\leq 2\exp\ab\bigg(
            -\frac{N \eps^2}
            {32\cdot {3^{2k-1}} \cdot O(k^2 \Lambda^2 q^4)}), \label{eq:lag_bound}
        \end{align}
    \end{subequations}
    where \cref{eq:var_bound} follows from \cref{eq:weighted_var} and $B = 2 \cdot 3^{2k}$, \cref{eq:l_bound} holds because $\eps,\ell'_{\max} \leq \sum\nolimits_{j=1}^q \abs{\ell'_j(0)}$, \cref{eq:lsum_bound} relies on $\sum\nolimits_{j=1}^q \abs{\ell'_j(0)}^2 \leq (\sum\nolimits_{j=1}^q \abs{\ell'_j(0)})^2$, and \cref{eq:lag_bound} follows from \cref{eq:lagrange_polynomial}.
    Recall that the number of pairs in $\cT_k$ is $M_k = O(n^{k})$. Applying the union bound over all $M_k$ pairs and absorbing the $\eps/2$ truncation error from \cref{cor:estimator_of_L}, we have
    \[
    \Pr\ab\Big[\max_{(v, w) \in \cT_k} \abs[\big]{\hat{L}_{vw}-L_{vw}}\geq \eps] \leq 2 M_k \exp\ab\Big(-\frac{N\eps^2}{O(k^2 \Lambda^2 q^4\cdot 3^{2k})}).
    \]
    Bounding the probability by $\delta$ gives
    \[
    2M_k \exp\ab\Big(-\frac{N\eps^2}{O(k^2 \Lambda^2 q^4\cdot 3^{2k})}) \leq \delta
    \implies
    N = O\ab\Big(\frac{3^{2k} k^2 \Lambda^2 q^4 }{\eps^2}\rbra[\big]{k\log n+\log\tfrac{2}{\delta}}).
    \]
    Since $q = \widetilde{\Theta}(1)$ by \Cref{lem:truncation}, the number of accesses to $e^{t\cL}$ in \cref{alg:PTM_learning} is 
    \[
    qN = \widetilde O\ab\Big(\frac{3^{2k} k^2 \Lambda^2 }{\eps^2}\rbra[\big]{k\log n+\log\tfrac{2}{\delta}}).
    \]
    The evolution time is at most $T = 1/(2k\Lambda)$ per access, therefore the total evolution time is
    \[
        \Ttot \leq qN \cdot T = \widetilde O\ab\Big(\frac{3^{2k} k \Lambda }{\eps^2}\rbra[\big]{k\log n+\log\tfrac{2}{\delta}}).
    \]

    Finally, we bound the classical running time. In Stage~1, the algorithm generates and stores $qN$ records, each consisting of an $n$-qubit product-state label and an $n$-qubit Pauli-basis/outcome pair, taking $O(nqN)$ time. In Stage~2, for each of the $M_k$ target entries $(v,w)$ and each of the $qN$ records, it evaluates the single-shot estimator $\hat R_{vw}(t_j,m)=\hat C_{\mathrm{in}}(w)\,\hat C_{\mathrm{out}}(v)$. Since $\hat C_{\mathrm{in}}(w)$ and $\hat C_{\mathrm{out}}(v)$ depend only on the at most $k$ qubits in $\supp(w)$ and $\supp(v)$, each evaluation costs $O(k)$, for a total of $O(k\,M_k\,qN)$. Forming the $q$ empirical averages and the endpoint-derivative combination \cref{eq:shadow_deriv} adds only $O(q\,M_k)$. The classical running time is therefore
    \[
      O\ab\big(k M_k qN + n qN) = \widetilde O\ab\Big(\frac{3^{2k}k^3 n^k \Lambda^2}{\eps^2}\rbra[\big]{k\log n+\log\tfrac{2}{\delta}}),
    \]
    using $M_k = O(n^{k})$, $M_k \geq n$, and $q=\widetilde\Theta(1)$.
\end{proof}

\subsection{Coefficient recovery from PTM generator entries}\label{sec:coeff_recovery}

This subsection converts the estimated PTM generator $\{\hat L_{vw}\}$ into the physical coefficients in three steps: (i) we relate the $\chi$-matrix entries to the coefficients (\cref{lem:chi_to_params}); (ii) we show the map $\{L_{vw}\}\to\{\chi_{bc}\}$ decouples by Pauli shift into local Walsh--Hadamard transforms as \cref{eq:fwht_inverse}; and (iii) we remove support aliasing by descending de-aliasing (\cref{alg:coeff_recovery_dense} and \cref{alg:coeff_recovery_threshold}), with the end-to-end error analysis in \cref{lem:recovery_error,thm:learning_upper}.

Every linear map $\cL$ on $n$-qubit operators has a unique $\chi$-matrix expansion
\begin{equation}\label{eq:chi}
  \cL(A)=\sum_{b,c}\chi_{bc}P_b A P_c.
\end{equation}
The PTM generator and the $\chi$-matrix are two different representations of the same linear map $\cL$.
The PTM generator entries
\[
  L_{vw}=\frac{1}{2^n}\tr(P_v\cL(P_w))
\]
describe how $\cL$ maps an input Pauli operator $P_w$ into the Pauli basis. In contrast, the $\chi$-matrix expansion expresses $\cL$ in the left--right Pauli multiplication basis. These two representations are not equal entrywise, but they are related by an invertible change of basis.

For the Lindbladian in \cref{eq:lindbladian}, the entries of this $\chi$ matrix determine the physical coefficients as follows.

\begin{lemma}[Coefficient recovery from $\chi$-matrix]\label{lem:chi_to_params}
For all Pauli operators $a\in \cP_{n,k}$ and all Pauli pairs $(b,c) \in \Gamma_k$, we have
\[
  h_a=\frac{\iu}{2}\ab(\chi_{a0}-\chi_{0a}),\qquad \gamma_{bc}=\chi_{bc}.
\]
\end{lemma}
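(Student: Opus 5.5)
The plan is to expand the Lindbladian \cref{eq:lindbladian} term by term in the left--right Pauli multiplication basis $A\mapsto P_bAP_c$, and then read off the $\chi$-matrix entries using the uniqueness of the expansion \cref{eq:chi}. Uniqueness holds because the $16^n$ superoperators $A\mapsto P_bAP_c$ are linearly independent: under the vectorization $P_bAP_c\leftrightarrow (P_b\otimes P_c^\top)\,\mathrm{vec}(A)$, they correspond to the $16^n$ operators $P_b\otimes P_c^\top$. These are, up to signs, the $2n$-qubit Pauli operators, which are orthogonal in the Hilbert--Schmidt inner product. So it suffices to rewrite $\cL(A)$ as an explicit sum $\sum_{b,c}(\cdot)P_bAP_c$ and match coefficients.

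\textbf{Step 1 (two-sided terms).} The only contributions with both labels nonzero come from the jump part $\gamma_{bc}P_bAP_c$ with $(b,c)\in\Gamma_k$. The Hamiltonian terms $P_aA$ and $AP_a$, as well as the anticommutator terms, all carry an identity on one side. Hence $\chi_{bc}=\gamma_{bc}$ for every $(b,c)\in\Gamma_k$.

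\textbf{Step 2 (boundary terms $\chi_{a0}$, $\chi_{0a}$ with $a\neq0$).} I will use the product rule $P_cP_b=\xi_{cb}P_{c\oplus b}$ to rewrite the anticommutator as
\[
  -\tfrac12\{P_cP_b,A\}=-\tfrac12\xi_{cb}\bigl(P_{b\oplus c}A+AP_{b\oplus c}\bigr).
\]
Collecting terms then gives
\[
  \chi_{a0}=-\iu h_a-\tfrac12K_a,\qquad \chi_{0a}=+\iu h_a-\tfrac12K_a,\qquad K_a\coloneq\sum_{(b,c)\in\Gamma_k:\,b\oplus c=a}\gamma_{bc}\,\xi_{cb}.
\]
Here $-\iu h_a[P_a,A]=-\iu h_aP_aA+\iu h_aAP_a$ supplies the Hamiltonian parts. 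The dissipative correction $K_a$ is the same in $\chi_{a0}$ and $\chi_{0a}$, because the anticommutator places the same operator $P_cP_b$ on the left and on the right with the same coefficient.

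Subtracting, $K_a$ cancels and $\chi_{a0}-\chi_{0a}=-2\iu h_a$, so $h_a=\frac{\iu}{2}(\chi_{a0}-\chi_{0a})$. Pairs with $b=c$ land in $\chi_{00}$ and play no role. For labels $a$ outside $\cP_{n,k}$ we simply have $h_a=0$, which is consistent with the formula.

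\textbf{Main obstacle.} The only delicate point is the bookkeeping of phases in Step 2. One must check that the right-multiplication term $AP_cP_b$ carries the same phase $\xi_{cb}$ as the left term $P_cP_bA$, rather than $\xi_{bc}$, since $\xi_{cb}\neq\xi_{bc}$ for anticommuting pairs. This is immediate from writing both sides as $\xi_{cb}P_{c\oplus b}$. Once this is verified, the cancellation of $K_a$ is exact, and the rest is coefficient matching.
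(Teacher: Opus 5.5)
Your proposal is correct and follows the same route as the paper's proof. In both, coefficients are matched in the left--right Pauli basis, only the jump terms produce entries with both labels nonzero, and the anticommutator contributes the same scalar $-\tfrac12\sum_{b\oplus c=a}\gamma_{bc}\xi_{cb}$ to both $\chi_{a0}$ and $\chi_{0a}$, so it cancels in their difference. Your vectorization argument for uniqueness of the $\chi$-expansion is a small addition that the paper simply asserts.
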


\begin{proof}
The terms $P_b A P_c$ with $(b,c)\in\Gamma_k$ arise only from the first term of the dissipator in \cref{eq:lindbladian}, so $\gamma_{bc}=\chi_{bc}$. For the boundary entries, the Hamiltonian contributes $-\iu h_a$ to $\chi_{a0}$ and $+\iu h_a$ to $\chi_{0a}$.  The anti-commutator part contributes the same scalar to both entries:
\[
  -\frac{1}{2}\sum_{\substack{(b,c)\in\Gamma_k\\b\oplus c=a}}
      \gamma_{bc}\,\xi_{cb}.
\]
Thus $\chi_{a0}-\chi_{0a}=-2\iu h_a$, which is equivalent to the claimed formula.
\end{proof}

The recovery is organized by Pauli shifts.  Substituting \cref{eq:chi} into the PTM generator gives
\[
  L_{vw}=\sum_{b,c}\chi_{bc}\ab[2^{-n}\tr(P_v P_b P_w P_c)],
\]
and the trace is zero unless
\[
  v\oplus b \oplus w\oplus c=0.
\]
Thus, only terms with the same shift
\[
  u\coloneq v\oplus w=b\oplus c
\]
mix with one another. By expanding $L_{w \oplus u, w}$, we have
\begin{align*}
    L_{w \oplus u, w} &= \frac{1}{2^n} \sum_{b} \chi_{b, b\oplus u} \tr\ab(P_{w \oplus u} P_b P_{w} P_{b \oplus u})\\
    &= \frac{1}{2^n} \sum_{b} \chi_{b, b\oplus u} (-1)^{\sinprod{b}{w}} \tr\ab(P_{w \oplus u} P_w P_b P_{b \oplus u})\\
    &= \frac{1}{2^n} \sum_{b} \chi_{b, b\oplus u} (-1)^{\sinprod{b}{w}} \xi_{w \oplus u, w} \xi_{b,b \oplus u}  \tr\ab( P_u P_u)\\
    &= \sum_{b} \chi_{b, b\oplus u} (-1)^{\sinprod{b}{w}} \xi_{w \oplus u, w} \xi_{b,b \oplus u}.
\end{align*}
Grouping all known classical phases gives
\begin{equation}\label{eq:fwht_forward}
  \xi_{w\oplus u,w}^{-1}L_{w\oplus u,w}
  =\sum_{b}\xi_{b, b \oplus u}\chi_{b,b\oplus u}(-1)^{\sinprod{b}{w}}.
\end{equation}
For notational convenience, define
\[
  F_u(w)\coloneq \xi_{w\oplus u,w}^{-1} L_{w\oplus u,w},
  \qquad
  G_u(b)\coloneq \xi_{b,b\oplus u}\chi_{b,b\oplus u}.
\]
Then \cref{eq:fwht_forward} becomes
\[
F_u(w)=\sum_{b'}G_u(b')(-1)^{\sinprod{b'}{w}}.
\]
Fix a qubit set $S$ with $\abs{S}\leq k$ and a shift $u$ supported on $S$. Restrict to Pauli operators supported inside $S$, i.e., $\supp(w) \subseteq S$, and write the function as $F_u^S(w)$. Since $w$ is the identity outside $S$, the commutation sign only depends on the restriction of $b'$ to $S$:
\[
\sinprod{b'}{w} = \sinprod{b'|_S}{w}.
\]
Therefore,
\begin{equation}
  F_u^S(w) = \sum_{b'} G_u(b')(-1)^{\sinprod{b'|_S}{w}}.
\end{equation}
We can further group global coefficients by their local restriction and define the aggregated local quantity
\begin{equation}
    G_u^S(b) \coloneq \sum_{b': b'|_S = b} G_u(b'), \qquad \text{where } b \in \bits^{2S}.
\end{equation}
Regrouping the global sum gives
\begin{equation}
  F^S_u(w) = \sum_{b \in \bits^{2S}} G_u^S(b)(-1)^{\sinprod{b}{w}},
\end{equation}
which is precisely the Walsh--Hadamard transform on the local region $S$. In matrix notation, we can write
\[
F_u^S = H_S G_u^S,
\]
where $H_S$ is the Hadamard matrix with entries $(-1)^{\sinprod{b}{w}}$. The Hadamard matrix is proportional to an orthogonal matrix, satisfying
\[(H_S)^\top H_S=4^{\abs{S}}I,\]
so the inversion has condition number exactly $1$, and the PTM entry estimation error propagates to $\chi$-matrix entry error without amplification. The inverse Walsh--Hadamard transform is
\begin{equation}
  G_u^S(b) = \frac{1}{4^{\abs{S}}} \sum_{w \in \bits^{2S}} F^S_u(w)(-1)^{\sinprod{b}{w}}.
\end{equation}
Substituting back the definitions of $F_u^S$ and $G_u^S$ gives
\begin{equation}
  \sum_{b': b'|_S = b} \xi_{b',b'\oplus u}\chi_{b',b'\oplus u} = \frac{1}{4^{\abs{S}}} \sum_{w \in \bits^{2S}} \xi_{w\oplus u,w}^{-1} L_{w\oplus u,w} (-1)^{\sinprod{b}{w}}.
\end{equation}
Since $u$ is supported on $S$, the phase $\xi_{b',b'\oplus u}$ depends only on the restriction $b'|_S$, hence
\[
\xi_{b',b'\oplus u} = \xi_{b'|_S,b'|_S \oplus u} = \xi_{b, b \oplus u}.
\]
For a Pauli operator $b \in \bits^{2S}$, define the local extension sum
\[
\chi_{b, b\oplus u}^S \coloneq \sum_{b': b'|_S = b} \chi_{b',b'\oplus u}.
\]
It follows that
\begin{equation}\label{eq:fwht_inverse}
  \chi_{b, b\oplus u}^S = \frac{1}{\xi_{b, b \oplus u} 4^{\abs{S}}} \sum_{w \in \bits^{2S}} \xi_{w\oplus u,w}^{-1} L_{w\oplus u,w} (-1)^{\sinprod{b}{w}}.
\end{equation}
In other words, the inverse Walsh--Hadamard transform does not generally return a single $\chi$-matrix entry, but the aggregated sum $\chi^S_{b,b\oplus u}$ of all global entries whose restriction to $S$ equals the local label $b$.

The remaining step is to extract the individual entry by de-aliasing, i.e., removing common outside extensions without knowing the true support. Fix a shift $u$ and a candidate label $b$, where either $(b,b\oplus u)\in\Gamma_k$, or $u\in\cP_{n,k}$ and $b\in\{0,u\}$ for the two Hamiltonian boundary entries.  Let
\[
  S=\supp(b, b\oplus u).
\]
Then the local extension sum over $S$ obeys
\[
  \chi^{S}_{b,b\oplus u}=\chi_{b,b\oplus u}+
  \sum_{\substack{b'\neq b\\b'|_{S}=b|_{S}\\\gamma_{b',b'\oplus u}\neq0}}\chi_{b',b'\oplus u},
\]
with the convention that boundary entries $\chi_{u0}$ and $\chi_{0u}$ are also kept when they appear. The local extension sum equals the desired coefficient plus the nonzero larger dissipative coefficients that agree with $b$ on the local support $S$. Since these common outside extensions are unknown, the post-processing below peels them from larger supports to smaller supports. This estimates every dissipative coefficient indexed by $\Gamma_k$.  Exact identification of the nonzero support would require an additional coefficient gap assumption, which we do not impose.

We give two de-aliasing algorithms, both processing candidates from larger to smaller support. The dense algorithm (\cref{alg:coeff_recovery_dense}) makes no assumption on the support and subtracts every larger candidate alias. The thresholded peeling algorithm (\cref{alg:coeff_recovery_threshold}) instead uses the dissipative site-degree bound to subtract only the larger aliases retained above threshold. Our main theorem (\cref{thm:learning_upper}) runs whichever of the two has the smaller recovery factor.

For the stability bounds below, we define two recovery factors,
$\Kden$ and $\Kdeg$, which bound the errors of \cref{alg:coeff_recovery_dense}
and \cref{alg:coeff_recovery_threshold} respectively in \cref{lem:recovery_error}.
First define the dense recovery
factor.  For $1\leq s<r\leq k$, let
\[
  B_{s,r} \coloneq 3^{r-s}\binom{n-s}{r-s}.
\]
This is the number of ways to enlarge a fixed union support of size
$s$ to one of size $r$ by choosing the new qubits and assigning one
of the three non-identity Paulis to the common outside extension.  Define
\[
  F_k\coloneq 1,\qquad F_s \coloneq 1+\sum_{r=s+1}^k B_{s,r} F_r, \qquad s=k-1,\ldots,1,
\]
and set
\[
  \Kden\coloneq\max_{1\leq s\leq k}F_s.
\]
For fixed $k$, we have $\Kden= O(n^{k-1})$. The
bounded-degree thresholded peeling algorithm has recovery factor
\[
  \Kdeg \coloneq 2\sum_{j=0}^{k-1}(2\Ddeg)^j.
\]
For fixed $k = O(1)$ and bounded dissipative site degree $\Ddeg=O(1)$,
this is a constant. Finally set
\[
  \Krec \coloneq \min\{\Kdeg,\Kden \}.
\]

\begin{algorithm}[ht!]
\caption{Dense de-aliasing coefficient recovery}
\label{alg:coeff_recovery_dense}
\begin{algorithmic}[1]
\Require Estimates $\hat L_{vw}$ for $(v,w)\in\cT_k$; locality $k$; accuracy parameter $\eta$.
\Ensure Estimates of the Lindbladian coefficients $\{\hat{h}_a:a\in\cP_{n,k}\}$ and $\{\hat\gamma_{bc}:(b,c)\in\Gamma_k\}$.
\For{each shift $u$ with $\wt(u)\leq k$}
  \For{$s=k,k-1,\ldots,1$}
    \For{each $b$ with $\wt(b,b\oplus u)=s$ and either $(b,b\oplus u)\in\Gamma_k$ or $u\in\cP_{n,k}$ and $b\in\{0,u\}$}
      \State Set $S\gets\supp(b,b\oplus u)$ and compute $\hat\chi^S_{b,b\oplus u}$ by \cref{eq:fwht_inverse}.
      \State Set the full alias set
      \[
        \hat\cA_{u,b} \gets \cbra{b':(b',b'\oplus u)\in\Gamma_k,\ \wt(b',b'\oplus u)>s,\ b'|_S=b|_S}.
      \]
      \State Form the fully de-aliased residual
      \[
        Z_{b,u} \gets \hat\chi^{S}_{b,b\oplus u} - \sum_{b'\in\hat\cA_{u,b}} \hat\chi_{b',b'\oplus u}.
      \]
      \State Set $\hat\chi_{b,b\oplus u}\gets Z_{b,u}$.
    \EndFor
  \EndFor
\EndFor
\State\Return $\hat h_a \gets \frac{\iu}{2}(\hat\chi_{a0}-\hat\chi_{0a})$ for $a\in\cP_{n,k}$ and $\hat\gamma_{bc}\gets\hat\chi_{bc}$ for $(b,c)\in\Gamma_k$.
\end{algorithmic}
\end{algorithm}

\begin{algorithm}[ht!]
\caption{Thresholded peeling coefficient recovery with low-intersection dissipators}
\label{alg:coeff_recovery_threshold}
\begin{algorithmic}[1]
\Require Estimates $\hat L_{vw}$ for $(v,w)\in \cT_k$; locality $k$; dissipative site degree $\Ddeg$; accuracy parameter $\eta$.
\Ensure Estimates of the Lindbladian coefficients $\{\hat{h}_a:a\in\cP_{n,k}\}$ and $\{\hat\gamma_{bc}:(b,c)\in\Gamma_k\}$.
\State Set $E_{k+1}\gets0$.
\For{$s=k,k-1,\ldots,1$}
  \State Set $\tau_s\gets \eta+\Ddeg E_{s+1}$ and $E_s\gets2\tau_s$.
\EndFor
\For{each shift $u$ with $\wt(u)\leq k$}
  \State Initialize the dissipative set $\hat\cS_u\gets\varnothing$.
  \For{$s=k,k-1,\ldots,1$}
    \For{each $b$ with $\wt(b,b\oplus u)=s$ and either $(b,b\oplus u)\in\Gamma_k$ or $u\in\cP_{n,k}$ and $b\in\{0,u\}$}
      \State Set $S \gets \supp(b,b\oplus u)$ and compute $\hat\chi^S_{b,b\oplus u}$ by \cref{eq:fwht_inverse}.
      \State Form the peeled residual
      \[
      Z_{b,u} \gets \hat\chi^{S}_{b,b\oplus u} - \sum_{\substack{b'\in\hat\cS_u:\\b'|_{S}=b|_{S}}} \hat\chi_{b',b'\oplus u}.
      \]
      \If{$u \in \cP_{n,k}$ and $b \in \{0,u\}$}
        \State Set $\hat\chi_{b,b\oplus u} \gets Z_{b,u}$.
      \ElsIf{$\abs{Z_{b,u}}>\tau_s$}
        \State Set $\hat\chi_{b,b\oplus u} \gets Z_{b,u}$.
        \State Add $b$ to $\hat\cS_u$.
      \Else
        \State Set $\hat\chi_{b,b\oplus u} \gets 0$.
      \EndIf
    \EndFor
  \EndFor
\EndFor
\State\Return $\hat h_a \gets \frac{\iu}{2}(\hat\chi_{a0}-\hat\chi_{0a})$ for $a \in \cP_{n,k}$ and $\hat\gamma_{bc} \gets \hat\chi_{bc}$ for $(b,c)\in\Gamma_k$. 
\end{algorithmic}
\end{algorithm}

\begin{lemma}[Coefficient recovery stability]\label{lem:recovery_error}
Assume that
\[
  \max_{(v,w)\in \cT_k}
  \abs{\hat L_{vw}-L_{vw}}\leq \eta.
\]
Run either \cref{alg:coeff_recovery_dense} or \cref{alg:coeff_recovery_threshold}. Then every local extension sum computed by the algorithm has error at most $\eta$.  Moreover, every candidate dissipative entry $(b,c)\in\Gamma_k$ and every boundary entry used for Hamiltonian recovery satisfies 
\[
  \abs{\hat\chi_{bc}-\chi_{bc}} \leq \Kden\,\eta
  \ \text{ (\cref{alg:coeff_recovery_dense})}, \qquad
  \abs{\hat\chi_{bc}-\chi_{bc}} \leq \Kdeg\,\eta
  \ \text{ (\cref{alg:coeff_recovery_threshold})},
\]
where boundary entries are indexed by the same union support size $\wt(b,b\oplus u)$. In particular, running the algorithm with the smaller recovery factor yields error at most $\Krec\,\eta$ with $\Krec=\min\{\Kdeg,\Kden\}$.
\end{lemma}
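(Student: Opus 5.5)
The plan has three steps: show the local extension sums are $\eta$-accurate, run a descending induction for the dense algorithm, and run a descending induction with a support-recovery invariant for the thresholded algorithm.

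\textbf{Step 1: accuracy of the local extension sums.} By \cref{eq:fwht_inverse}, $\hat\chi^S_{b,b\oplus u}-\chi^S_{b,b\oplus u}$ equals $\frac{1}{\xi_{b,b\oplus u}4^{\abs{S}}}\sum_{w\in\bits^{2S}}\xi^{-1}_{w\oplus u,w}(\hat L_{w\oplus u,w}-L_{w\oplus u,w})(-1)^{\sinprod{b}{w}}$.
\begin{itemize}
\item Every pair $(w\oplus u,w)$ with $\supp(w)\subseteq S$ has joint support in $S$, and $\abs{S}\leq k$, so it lies in $\cT_k$.
\item All phases have unit modulus.
\item There are $4^{\abs{S}}$ summands, each with error at most $\eta$.
\end{itemize}
The triangle inequality therefore gives error at most $\eta$, which is the first claim.

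\textbf{Step 2: the dense algorithm.} Let $e_s$ be the worst error over entries of union support size $s$; I prove $e_s\leq F_s\eta$ by descending induction on $s$.
\begin{itemize}
\item The true local sum equals $\chi_{b,b\oplus u}$ plus the true entries of every candidate alias in $\hat\cA_{u,b}$. Zero-coefficient aliases may be included freely.
\item Hence $Z_{b,u}-\chi_{b,b\oplus u}$ equals the local-sum error minus the sum of alias errors.
\item An alias of union size $r>s$ agrees with $b$ on $S$ and shares a common Pauli on each new qubit, since $b'$ and $b'\oplus u$ coincide outside $S$. So there are at most $B_{s,r}$ aliases of size $r$.
\item This gives $e_s\leq\eta+\sum_{r>s}B_{s,r}e_r\leq F_s\eta$, with base case $e_k\leq\eta=F_k\eta$.
\item Boundary entries are treated identically, since their aliases are dissipative pairs with shift $u$.
\end{itemize}

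\textbf{Step 3: the thresholded algorithm.} I prove the following by descending induction on $s$, for every candidate of size $s$:
\begin{itemize}
\item[(i)] $\abs{\hat\chi_{b,b\oplus u}-\chi_{b,b\oplus u}}\leq E_s$;
\item[(ii)] if $\chi_{b,b\oplus u}\neq0$ and $b\notin\hat\cS_u$, then $\abs{\chi_{b,b\oplus u}}\leq 2\tau_s$ (this is what (i) says for an entry that was zeroed);
\item[(iii)] every retained $b\in\hat\cS_u$ has $\chi_{b,b\oplus u}\neq0$.
\end{itemize}

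For a candidate of size $s$, the residual error splits into three parts:
\begin{itemize}
\item the local-sum error, at most $\eta$;
\item errors of retained larger aliases, at most $E_{s+1}\geq E_r$ each, using $E_r$ decreasing in $r$;
\item true values of nonzero aliases that were not retained, each of magnitude at most $E_r\leq E_{s+1}$ by (i) and (ii).
\end{itemize}
Together the retained and unretained aliases are contained in the nonzero aliases. All of these act nontrivially on some common qubit $i\in S$ (take any $i\in S$, nonempty as $s\geq1$), so the site degree bounds their number by $\Ddeg$. Hence $\abs{Z_{b,u}-\chi_{b,b\oplus u}}\leq\eta+\Ddeg E_{s+1}=\tau_s$.

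The thresholding case analysis then goes as follows:
\begin{itemize}
\item If the entry is retained, its error is at most $\tau_s\leq E_s$.
\item If it is zeroed, $\abs{\chi_{b,b\oplus u}}\leq\abs{Z_{b,u}}+\tau_s\leq 2\tau_s=E_s$.
\item If $\chi_{b,b\oplus u}=0$, then $\abs{Z_{b,u}}\leq\tau_s$, so it is not retained, giving (iii).
\item Boundary entries are always kept, with error at most $\tau_s$; their aliases are dissipative, so the same count applies.
\end{itemize}
Unrolling $E_s=2\eta+2\Ddeg E_{s+1}$ gives $E_1=2\eta\sum_{j=0}^{k-1}(2\Ddeg)^j=\Kdeg\eta$, and $E_s\leq E_1$ for all $s$.

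The main obstacle is Step 3's alias counting. The degree bound counts nonzero pairs, so invariant (iii) is needed to ensure that retained spurious aliases do not inflate the count beyond $\Ddeg$. I would check carefully that nonzero-coefficient pairs sharing a qubit of $S$ are exactly what $\Ddeg$ bounds.
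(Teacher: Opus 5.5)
Your proposal is correct and follows essentially the same route as the paper. It uses the same unit-modulus averaging bound for the local Walsh--Hadamard inversion, the same $B_{s,r}$-counting recursion $E_s\le \eta+\sum_{r>s}B_{s,r}E_r$ for dense de-aliasing, and the same descending induction for thresholded peeling. That induction carries the invariant that no inactive candidate is ever retained, so that at most $\Ddeg$ aliases (all acting on a common qubit of $S$) contribute error. Your explicit use of $E_r\le E_{s+1}$ for $r>s$, and your split of the residual into retained-alias errors and unretained true values, only make precise what the paper's induction hypothesis uses implicitly.
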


\begin{proof}
For a fixed local inversion region $S$, all phases in \cref{eq:fwht_inverse} have modulus one.  Subtracting the exact inverse from \cref{eq:fwht_inverse} gives
\[
  \abs[\big]{\hat\chi^S_{b,b\oplus u}-\chi^S_{b,b\oplus u}} \leq 4^{-\abs{S}}\sum_{w \in \bits^{2S}} \abs{\hat L_{w\oplus u,w}-L_{w\oplus u,w}} \leq \eta.
\]
First consider \cref{alg:coeff_recovery_dense}.  Let $E_s$ be the maximum
recovery error among all dissipative candidates and boundary
entries with union support size $s$. For $s=k$, there are no strict larger extensions, so $E_k\leq\eta=F_k\eta$. If
$\abs{S}=s<k$, a strict common extension to union support size $r$ is obtained by choosing $r-s$ qubits outside $S$ and placing the same non-identity Pauli on each new qubit in the two labels. Hence the number of such extensions is at most $B_{s,r}$. Since dense de-aliasing subtracts all larger candidates, the only remaining error is the local-inversion error plus the accumulated errors of the already reconstructed larger entries,
\[
  E_s \leq \eta + \sum_{r=s+1}^k B_{s,r}E_r.
\]
The recursive definition of $F_s$ gives $E_s\leq F_s\eta$ by descending induction, and thus $E_s\leq \Kden\eta$.

Now consider \cref{alg:coeff_recovery_threshold}, and reuse the symbol $E_s$, now redefined as the maximum recovery error at union support size $s$ for this algorithm (governed by the thresholds $\tau_s$ rather than the dense recursion $F_s$ above). We call a candidate $b$ \emph{active} if its true coefficient $\chi_{b,b\oplus u}$ is nonzero, and \emph{inactive} otherwise. Fix a shift $u$ and a candidate $b$, set $S=\supp(b,b\oplus u)$, and write $r=\abs{S}$.  Any strict common extension $b'\neq b$ has $b'|_S=b|_S$ and $\supp(b',b'\oplus u)\supsetneq S$. Fix any $i\in S=\supp(b,b\oplus u)$. Since $b'|_S=b|_S$ and $u$ is supported on $S$, we have $(b'\oplus u)|_S=(b\oplus u)|_S$, so the pair $(b',b'\oplus u)$ acts nontrivially on $i$ exactly as $(b,b\oplus u)$ does. By the definition of dissipative site degree, at most $\Ddeg$ dissipative pairs act on $i$, so there are at most $\Ddeg$ true larger aliases.

We prove the claim by descending induction on $r$.  Assume that all coefficients with support size larger than $r$ have already been estimated to error at most $E_{r+1}$, and that no inactive larger candidate has been retained.  For the residual $Z_{b,u}$, the local extension sum error contributes at most $\eta$.  The only additional residual error comes from at most $\Ddeg$ true larger aliases, each contributing at most $E_{r+1}$, whether it was retained and subtracted with error or was too small to be retained. Hence
\[
  \abs{Z_{b,u}-\chi_{b,b\oplus u}} \leq \eta+\Ddeg E_{r+1} =\tau_r.
\]
The boundary labels are the only entries needed to recover Hamiltonian coefficients, and their strict common extensions are dissipative pairs, so the same $\Ddeg$ bound controls their aliasing.  If $b$ is a boundary label, the algorithm keeps $Z_{b,u}$ and its error is at most $\tau_r\leq E_r$.  If $b$ is dissipative and $\abs{Z_{b,u}}>\tau_r$, the algorithm keeps $Z_{b,u}$ and again the error is at most $\tau_r\leq E_r$.  If $\abs{Z_{b,u}}\leq\tau_r$, the algorithm outputs zero, then
\[
  \abs{\chi_{b,b\oplus u}} \leq \abs{Z_{b,u}}+\tau_r \leq 2\tau_r = E_r.
\]
For an inactive dissipative candidate, the same bound gives $\abs{Z_{b,u}}\leq \tau_r$, so it is not retained. This closes the induction. Unrolling $E_{k+1}=0$ and $E_r=2(\eta+\Ddeg E_{r+1})$ gives
\[
  E_r = 2\eta\sum_{j=0}^{k-r}(2\Ddeg)^j \leq \Kdeg\,\eta,
\]
which proves the claimed coefficient bound for \cref{alg:coeff_recovery_threshold}.
\end{proof}

We also bound the classical running time of the recovery procedure.
\begin{lemma}[Running time of coefficient recovery]\label{lem:recovery_time}
Given the PTM-generator estimates $\{\hat L_{vw}\}_{(v,w)\in\cT_k}$, \cref{alg:coeff_recovery_dense} runs in classical time
$\widetilde O(3^{O(k)}n^k)$ and
\cref{alg:coeff_recovery_threshold} runs in classical time $\widetilde O(3^{O(k)} (1+\Ddeg) n^k)$.
\end{lemma}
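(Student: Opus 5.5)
The plan is to count the work per candidate and multiply by the number of candidates, treating the two algorithms separately only at the de-aliasing step. I would first bound the candidate set. Every processed pair is indexed by a shift $u$ with $\wt(u)\leq k$ and a label $b$ whose union support $S=\supp(b,b\oplus u)$ has size at most $k$. Since $u$ is determined by the pair, the candidates inject into $\Gamma_k$ together with the $2\abs{\cP_{n,k}}$ boundary labels. Their number is therefore at most $\sum_{s\leq k}\binom{n}{s}16^s=O(16^k n^k)$. The candidates can be enumerated by looping over supports $S$ and local labels on $S$, so the enumeration itself costs $O(k)$ per candidate.

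The second step is the cost of one local inversion \cref{eq:fwht_inverse}. For a fixed candidate this is a sum over the $4^{\abs{S}}\leq 4^k$ labels $w\in\bits^{2S}$. Each term needs an estimate $\hat L_{w\oplus u,w}$ with $(w\oplus u,w)\in\cT_k$ (its support lies in $S$), a phase $\xi$ and a symplectic sign, all computable in $O(k)$ time. Storing the $\hat L$ estimates in a hash table or in an array indexed by (support, local label) gives lookups in $\widetilde O(1)$. A single inversion then costs $O(k4^k)$, and summing over all candidates gives $\widetilde O(3^{O(k)}n^k)$ for this stage in both algorithms. One could instead run a fast Walsh--Hadamard transform once per pair $(S,u)$ and read off all $b$ at once, which gives the same bound.

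The third step, de-aliasing, is where I expect the main obstacle, because a naive bound on alias sets can exceed $n^k$. For \cref{alg:coeff_recovery_dense} I would not charge each candidate its full alias set $\hat\cA_{u,b}$, which has size up to $\sum_r B_{s,r}=O(n^{k-s})$. Instead I would count total work by pairs (candidate, larger alias). Each alias of size $r$ together with its restriction to some $S\subseteq$ its support determines the candidate, and there are at most $2^k$ such restrictions. The total subtraction work is therefore $O(2^k\cdot 16^k n^k)$, provided the aliases of each candidate are listed by enumerating sub-supports of each larger candidate and pushing its estimate down to its restrictions; this is a superset-sum accumulation. Concretely, after finishing size $r$, for each retained estimate I add it into accumulators indexed by each of its restrictions $(b'|_S,u)$ with $S$ of the appropriate form. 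The residual $Z_{b,u}$ is then a single accumulator lookup.

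For \cref{alg:coeff_recovery_threshold} the same push-down applies, but only the retained set $\hat\cS_u$ is pushed. The proof of \cref{lem:recovery_error} shows that only active candidates are retained. Each active candidate pushes to at most $2^k$ restrictions, and there are at most $n\Ddeg$ active dissipative pairs. The subtraction work is thus $O(2^k n\Ddeg)$, and the factor $(1+\Ddeg)$ in the stated bound absorbs it comfortably. The threshold comparisons are $O(1)$ per candidate, and the Hamiltonian and $\gamma$ outputs take $O(1)$ each. Adding the three stages yields $\widetilde O(3^{O(k)}n^k)$ for the dense algorithm and $\widetilde O(3^{O(k)}(1+\Ddeg)n^k)$ for the thresholded one, where the $\widetilde O$ hides the logarithmic lookup cost.
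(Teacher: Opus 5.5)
Your proof is correct. The candidate count and the transform stage (one fast Walsh--Hadamard transform per pair $(S,u)$, or $O(k4^k)$ per candidate) match the paper; the difference is in how you charge the de-aliasing work.

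For \cref{alg:coeff_recovery_dense}, the paper does charge each candidate its full alias set. It enumerates the at most $B_{s,r}$ extensions directly, and the total is still $3^{O(k)}n^k$ because $\binom{n}{s}\binom{n-s}{r-s}=\binom{n}{r}\binom{r}{s}$: there are only $O(n^s)$ candidates of size $s$, each with $O(n^{r-s})$ aliases of size $r$. So your concern about exceeding $n^k$ only applies to the crude product of the candidate count with the maximal alias-set size. Your superset-sum push-down counts the same (candidate, alias) incidences, just from the superset side.

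For \cref{alg:coeff_recovery_threshold}, the paper also inserts each retained label under its at most $2^k$ sub-region keys, as you do. However, it stores the labels themselves and sums a list at each residual. It bounds each list by $\Ddeg$ using the accuracy event of \cref{lem:recovery_error}, and this is where its factor $(1+\Ddeg)$ enters. Your running sums make each residual a single lookup. Since your push-down costs at most $2^k$ per retained candidate, active or not, you can drop the appeal to ``only active candidates are retained.'' That appeal, in your proof as in the paper's, quietly imports the hypothesis $\max\abs{\hat L_{vw}-L_{vw}}\leq\eta$, which the running-time lemma does not state. Without it you get $\widetilde O(3^{O(k)}n^k)$ for the thresholded algorithm unconditionally, with no $\Ddeg$ factor.
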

\begin{proof}
Let $C_k\coloneq\abs{\Gamma_k}+\abs{\cP_{n,k}}=3^{O(k)}n^k$ be the
number of recovered coefficients.  Both recovery algorithms enumerate,
for each shift $u$ with $\wt(u)\leq k$, the candidate labels $b$. The
total number of such $(u,b)$ candidates is $O(C_k)$.  For a fixed shift
$u$ and region $S=\supp(b,b\oplus u)$ with $\abs{S}\leq k$, the inverse
Walsh--Hadamard transform \cref{eq:fwht_inverse} returns all local
extension sums $\{\hat\chi^S_{b,b\oplus u}\}_{b|_S}$ simultaneously by a
fast Walsh--Hadamard transform on $4^{\abs{S}}\leq4^k$ points in
$O(4^k k)$ time.  Summed over all relevant regions and shifts, these
local transforms cost $\widetilde O(16^k n^k)=\widetilde O(3^{O(k)} n^k)$.

It remains to count the cost of residual subtraction. For the dense de-aliasing algorithm, implement the alias set by enumerating larger
extensions directly, rather than by scanning all of $\Gamma_k$.  Fix a candidate with union support size $s$. The number of its larger aliases with union support size $r>s$ is at most
$B_{s,r}=3^{r-s}\binom{n-s}{r-s}$. The number of candidates with union support size $s$ is at most $3^{O(s)}\binom ns$. Hence the total number of dense alias subtractions is bounded by
\[
  \sum_{1\leq s<r\leq k} 3^{O(s)}\binom ns\, 3^{r-s}\binom{n-s}{r-s} = \sum_{1\leq s<r\leq k} 3^{O(k)}\binom{n}{r}\binom{r}{s} = 3^{O(k)}n^k .
\]
Thus dense de-aliasing has total classical running time
$\widetilde O(3^{O(k)}n^k)$.

For the thresholded de-aliasing algorithm, precomputing the thresholds $\{\tau_s,E_s\}_{s=1}^k$ for
\cref{alg:coeff_recovery_threshold} takes $O(k)$ time.  The thresholded
peeling forms each residual $Z_{b,u}$ by subtracting at most $\Ddeg$
retained larger aliases (by the proof of \cref{lem:recovery_error},
only active candidates are ever retained),
located in $O(\Ddeg)$ time with a hash table keyed by $(u,b|_S)$.
To make every retained label findable by the smaller candidates it aliases,
we insert it under all of its $\le 2^k$ sub-region keys at retention time,
which adds $O(2^k C_k)=3^{O(k)}n^k$ in total. The lookups themselves
cost $O(\Ddeg\,C_k)=3^{O(k)}\Ddeg\,n^k$ in total.  Combining this with the transform cost
gives $\widetilde O(3^{O(k)} (1+\Ddeg) n^k)$.
\end{proof}

Combining the PTM-generator estimation of \cref{thm:ptm_generator_estimation} with the coefficient recovery of \cref{lem:recovery_error,lem:recovery_time}, we obtain our main upper bound for local Lindbladian learning.

\begin{theorem}[Upper bound for local Lindbladian learning]\label{thm:learning_upper}
Let $\cL$ be an unknown $k$-local Lindbladian on $n$ qubits with dissipative site degree $\Ddeg$ and local dynamical strength $\Lloc \leq \Lambda$. For any $\eps, \delta \in (0,1)$, there exists an algorithm that outputs estimates of all coefficients of $\cL$ such that with probability at least $1-\delta$,
\[
  \max_a\abs{\hat h_a-h_a}\leq \eps, \qquad \max_{(b,c)\in\Gamma_k}\abs{\hat\gamma_{bc}-\gamma_{bc}}\leq \eps.
\]
The number of channel uses is
\[
  \widetilde O\ab(3^{O(k)} \Krec^{2}\frac{\Lambda^2}{\eps^2} \ab\Big(k\log n+\log\frac1\delta)),
\]
and the total evolution time is 
\[
  \widetilde O\ab(3^{O(k)} \Krec^{2} \frac{\Lambda}{\eps^2} \ab\Big(k\log n+\log\frac1\delta)).
\]
Moreover, the algorithm uses classical processing time
\[
  \widetilde O\ab(3^{O(k)} \Krec^{2}\frac{\Lambda^2}{\eps^2}\,n^{k} \ab\Big(k\log n+\log\frac1\delta)).
\]
In particular, $\Krec = O(1)$ when $\Ddeg = O(1)$, and $\Krec = O(n^{k-1})$ unconditionally.
\end{theorem}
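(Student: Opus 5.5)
The plan is to compose the two stages already analyzed. First run \cref{alg:PTM_learning} with target accuracy $\eta\coloneq\eps/\Krec$ in place of $\eps$, with failure probability $\delta$. Then feed the resulting estimates $\{\hat L_{vw}\}_{(v,w)\in\cT_k}$ into whichever of \cref{alg:coeff_recovery_dense} and \cref{alg:coeff_recovery_threshold} has the smaller recovery factor. The choice is computable in advance, since both $\Kdeg$ and $\Kden$ depend only on $(k,\Ddeg,n)$. By \cref{thm:ptm_generator_estimation}, with probability at least $1-\delta$ we have $\max_{(v,w)\in\cT_k}\abs{\hat L_{vw}-L_{vw}}\leq\eta$.

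On this event, \cref{lem:recovery_error} gives $\abs{\hat\chi_{bc}-\chi_{bc}}\leq\Krec\eta=\eps$ for every dissipative pair in $\Gamma_k$ and for every boundary entry $\chi_{a0},\chi_{0a}$. By \cref{lem:chi_to_params}, $\hat\gamma_{bc}=\hat\chi_{bc}$, which is directly $\eps$-accurate. For the Hamiltonian part, $\hat h_a-h_a=\frac{\iu}{2}\big((\hat\chi_{a0}-\chi_{a0})-(\hat\chi_{0a}-\chi_{0a})\big)$, so $\abs{\hat h_a-h_a}\leq\frac12(\eps+\eps)=\eps$. If $\hat h_a$ should be real, replacing it by its real part only decreases the error, because $h_a$ is real. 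One small point needs care: \cref{alg:PTM_learning} assumes $\eta<k\Lambda$, and in the complementary regime the trivial all-zero output already suffices, as noted at the start of \cref{sec:algorithm}.

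The resource bounds come from substituting $\eps\mapsto\eps/\Krec$ into \cref{thm:ptm_generator_estimation}. This gives $\widetilde O(3^{2k}k^2\Krec^2\Lambda^2\eps^{-2}(k\log n+\log\frac2\delta))$ channel uses and $\widetilde O(3^{2k}k\Krec^2\Lambda\eps^{-2}(\cdots))$ total evolution time. Here $k$ factors are absorbed into $3^{O(k)}$, and $q=\lceil\log(2k\Lambda\Krec/\eps)\rceil$ remains polylogarithmic, since $\Krec\leq\Kden=O(n^{k-1})$ contributes only $O(k\log n)$ inside the $\widetilde O$. The classical time is the PTM post-processing bound, also carrying the $\Krec^2$ factor and the $n^k$ factor, plus the recovery time $\widetilde O(3^{O(k)}(1+\Ddeg)n^k)$ from \cref{lem:recovery_time}. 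The recovery term is dominated by the first one: when the thresholded algorithm is chosen, $\Ddeg\leq\Kdeg/4\leq\Krec$; when the dense algorithm is chosen, its cost has no $\Ddeg$ factor at all.

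Finally, $\Kdeg=O(1)$ when $k,\Ddeg=O(1)$, and $\Kden=O(n^{k-1})$ by unrolling $F_s$ with $B_{s,r}=O(n^{r-s})$. I expect no real obstacle, since the result is a composition of earlier lemmas. The only delicate points are bookkeeping ones: the boundary entries for $h_a$ must be covered by \cref{lem:recovery_error}, which the lemma states explicitly, and the $\log\Krec$ inside $q$ must not spoil the polylogarithmic factors.
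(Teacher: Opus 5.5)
Your proposal is correct and follows the paper's proof: run \cref{alg:PTM_learning} at accuracy $\eta=\eps/\Krec$, then whichever recovery algorithm has the smaller factor; apply \cref{lem:recovery_error} and \cref{lem:chi_to_params}, with the triangle inequality for $h_a$; and substitute $\eta$ into \cref{thm:ptm_generator_estimation} for the resource bounds. Your extra bookkeeping is consistent with the paper and slightly more careful than its argument, namely the $\log\Krec$ entering $q$, the trivial small-$\Lambda/\eps$ regime, and why the recovery cost is dominated.
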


\begin{proof}
Run \cref{alg:PTM_learning} with PTM accuracy parameter $\eta=\eps/\Krec$, and then run \cref{alg:coeff_recovery_dense} when
$\Kden\leq\Kdeg$, and \cref{alg:coeff_recovery_threshold}
otherwise.

Every local Fourier inversion is over $S=\supp(b,b\oplus u)$ with $\abs{S}\leq k$, so its PTM entries have $\supp(v)\cup\supp(w)\subseteq S$, hence $\wt(v,w)\leq k$, and therefore lie in $\cT_k$. By \cref{thm:ptm_generator_estimation}, all these PTM entries are estimated to error at most $\eta$ simultaneously with probability at least $1-\delta$.  By \cref{lem:recovery_error}, every recovered individual $\chi$ entry used by \cref{alg:coeff_recovery_dense} or \cref{alg:coeff_recovery_threshold} has error at most $\Krec\eta=\eps$.

The formula for $\gamma$ is direct from \cref{lem:chi_to_params}.  For $h_a$,
\[
  \abs{\hat h_a-h_a} \leq \frac{1}{2}\abs{\hat\chi_{a0}-\chi_{a0}}
  + \frac{1}{2} \abs{\hat\chi_{0a}-\chi_{0a}} \leq \eps.
\]
The channel-use bound is \cref{thm:ptm_generator_estimation} with $\eta=\eps/\Krec$ in place of $\eps$, and similarly for the total evolution time

For the classical running time, \cref{thm:ptm_generator_estimation} with $\eta=\eps/\Krec$ shows that \cref{alg:PTM_learning} runs in time $\widetilde O(3^{O(k)}\Krec^2\Lambda^2 n^{k}/\eps^2\,(k\log n+\log\tfrac1\delta))$. By \cref{lem:recovery_time}, \cref{alg:coeff_recovery_dense} or \cref{alg:coeff_recovery_threshold} adds only $\widetilde O(3^{O(k)} (1+\Ddeg) n^k)$, which is dominated by the sampling stage. Summing the two stages gives the stated classical processing time.
\end{proof}

\section{Lower bounds for Lindbladian learning}\label{sec:lowerbound}
We prove information-theoretic lower bounds matching \cref{thm:learning_upper,thm:ptm_generator_estimation} up to logarithmic factors. The bounds hold in the strongest access model, allowing adaptive learners to use arbitrary ancillas, entangling operations, and arbitrary measurements, certifying that our non-adaptive, ancilla-free algorithm is near-optimal and that neither adaptivity nor entanglement helps. \cref{sec:lower_bound_techniques} sets up the hard instance and the per-access information limitation. \cref{thm:lower_bound_measurement} gives the channel-use bound, and \cref{thm:lower_bound_time} gives the evolution-time bound.

The lower-bound strategy has two steps.  First, choose two similar Lindbladians whose coefficients differ by more than the target accuracy. Any successful learner must distinguish them.  Second, bound the relative entropy between the learner's final output distributions under the two hypotheses. The bound holds even if the learner uses ancillas, adaptive input states, and adaptive measurements.

\subsection{Construction of hard instances}\label{sec:lower_bound_techniques}

First, we construct the hard instance. For $\gamma \geq 0$, define the one-qubit dephasing Lindbladian
\begin{equation}
    \cL_\gamma(\rho) = \gamma(Z \rho Z - \rho),
\end{equation}
whose locality is $k=1$ and dissipative site degree is  $\Ddeg = 1$. For an observable $Q$, the Heisenberg-picture action can be written as $\cL_\gamma^\dagger(Q) = \gamma(ZQZ-Q)$, which gives $\cL_\gamma^\dagger(Q)=0$ when $[Z,Q]=0$ and $\cL_\gamma^\dagger(Q)=-2\gamma Q$ when $\{Z,Q\}=0$. Therefore, by the definition of induced operator norm, we have
\begin{equation}
    \norm{\cL_\gamma^\dagger}_{\infty\to \infty} = 2\gamma.
\end{equation}
Since this Lindbladian consists of a single local term on one qubit, its local dynamical strength coincides with the operator norm, $\Lloc = \norm{\cL_\gamma^\dagger}_{\infty\to\infty} = 2\gamma$.

Given $\Lambda > 0$ and $\eps \in (0, \Lambda/16]$, we consider two instances $\cL_{\gamma_0}$ and $\cL_{\gamma_1}$ by setting
    \[
    \gamma_0 \coloneq \frac{\Lambda}{4} \qquad \text{and} \qquad \gamma_1 \coloneq \gamma_0 + 4 \eps.
    \]
    Since $\eps \leq \Lambda/16$, we have $2\gamma_1 = \Lambda/2+8\eps \leq \Lambda$, so both instances clearly satisfy the local-strength constraint $\Lloc \leq \Lambda$. Note that the coefficients of $\cL_{\gamma_0}$ and $\cL_{\gamma_1}$ differ in $\ell^\infty$ norm by $\abs{\gamma_0 - \gamma_1} = 4\eps$, so any algorithm that learns the Lindbladian to $\ell^\infty$ error $\eps$ can distinguish $\gamma_0$ from $\gamma_1$.

The channel semigroup $e^{t\cL_\gamma}$ is a family of Pauli dephasing channels
\[
e^{t\cL_\gamma}(\rho) = (1-p_{\gamma t})\rho + p_{\gamma t} Z\rho Z, \qquad p_{\gamma t} \coloneq \frac{1- e^{-2\gamma t}}{2}.
\]
Thus, a use of the channel for time $t$ can be viewed as applying a Pauli $Z$ conjugation with probability $p_{\gamma t}$ governed by a Bernoulli distribution. 
Revealing this hidden Bernoulli flag can only increase the information available to the learner, so the relative entropy between two dephasing channels is bounded by the relative entropy between their corresponding Bernoulli distributions.

\subsection{Channel-use lower bound}

We first prove the lower bound on the number of channel uses required to learn a local Lindbladian.
\begin{theorem}[Channel-use lower bound]\label{thm:lower_bound_measurement}
    For a fixed $k=O(1)$, consider learning an unknown $k$-local Lindbladian $\cL$ with access to the channel semigroup $\{e^{t\cL}\}_{t\geq 0}$. Given $\Lambda > 0$ and $\eps \in (0, \Lambda/16]$, if a learning algorithm can estimate coefficients of $k$-local Lindbladians with local dynamical strength at most $\Lambda$ up to $\ell^\infty$ error $\eps$ with success probability at least $2/3$, then the number of channel uses $N$ must satisfy
    \[
    N = \Omega\ab\Big(\frac{\Lambda^2}{\eps^2}).
    \]
\end{theorem}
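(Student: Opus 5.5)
The plan is a two-point argument on the hard instances $\cL_{\gamma_0},\cL_{\gamma_1}$ from \cref{sec:lower_bound_techniques}: bound the relative entropy gained per channel use uniformly in $t$, sum over the protocol by the chain rule, and conclude with Pinsker. A learner with $\ell^\infty$ error $\eps$ and success probability $2/3$ yields a test that distinguishes $\gamma_0$ from $\gamma_1$ with success probability $2/3$, because the two true coefficients differ by $4\eps>2\eps$: round the estimate $\hat\gamma$ to the nearer of $\gamma_0,\gamma_1$. Hence the final outcome distributions $\mu_0,\mu_1$ of the whole protocol satisfy $\dtv(\mu_0,\mu_1)\geq 1/3$. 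By \cref{fact:pinsker}, it then suffices to show $D(\mu_0\parallel\mu_1)\leq N\cdot O(\eps^2/\Lambda^2)$, since this gives $N=\Omega(\Lambda^2/\eps^2)$.

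\textbf{Augmented experiment.} Each channel use at time $t$ is simulated by a classical Bernoulli flag $F\sim\mathrm{Bern}(p_{\gamma t})$ followed by conditionally applying $Z$ on the target qubit; ancillas and other operations are left untouched. Consider the augmented protocol that also records every flag. The learner's actual output, including all adaptive choices, ancillas, entangled measurements and intermediate quantum operations, is a fixed post-processing, i.e.\ a channel, of the augmented transcript. This transcript consists of the flags, the internal randomness and the measurement outcomes. By \cref{fact:data_processing}, $D(\mu_0\parallel\mu_1)$ is at most the relative entropy between the augmented transcript distributions. Conditioned on all earlier flags and outcomes, the global state before the $j$-th use is identical under both hypotheses, and so is the choice of $t_j$. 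The only hypothesis-dependent randomness is therefore the flag itself, and everything after it is a hypothesis-independent kernel. The chain rule for relative entropy then gives
\[
D(\mu_0\parallel\mu_1)\leq \sum_{j=1}^N \E_{\gamma_0}\ab\big[D_\rmb(p_{\gamma_0 t_j}\parallel p_{\gamma_1 t_j})].
\]

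\textbf{Per-use bound.} The main technical step is showing that $D_\rmb(p_{\gamma_0t}\parallel p_{\gamma_1t})\leq C\Delta^2/\gamma_0^2$ uniformly over $t>0$, where $\Delta=4\eps$. The point $t=0$ contributes zero. I would apply \cref{fact:chi2} with $q=p_{\gamma_1t}$, using $q(1-q)=\tfrac14(1-e^{-4\gamma_1t})$ and the mean value theorem for the numerator:
\[
\abs{p_{\gamma_0t}-p_{\gamma_1t}}=\tfrac12\abs{e^{-2\gamma_0t}-e^{-2\gamma_1t}}\leq \Delta t\,e^{-2\gamma_0t}.
\]
The ratio is then bounded by
\[
\frac{4\Delta^2t^2e^{-4\gamma_0t}}{1-e^{-4\gamma_1t}}.
\]
To bound this, set $x=4\gamma_0t$, use $\gamma_1\geq\gamma_0$, and use $1-e^{-x}\geq x e^{-x}$. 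This yields at most
\[
\frac{4\Delta^2 t^2 e^{-x}}{x}=\frac{\Delta^2}{4\gamma_0^2}\,x e^{-x}\leq \frac{\Delta^2}{\gamma_0^2}.
\]
The same calculation gives the time-proportional bound $\Delta^2t/\gamma_0$, which is reserved for \cref{thm:lower_bound_time}.

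\textbf{Conclusion.} Summing the per-use bound gives $D(\mu_0\parallel\mu_1)\leq N\Delta^2/\gamma_0^2=256N\eps^2/\Lambda^2$. Combining this with $\dtv(\mu_0,\mu_1)\geq 1/3$ and \cref{fact:pinsker} gives $1/9\leq 128N\eps^2/\Lambda^2$, i.e.\ $N\geq \Lambda^2/(1152\eps^2)$. The remaining care point is making the chain-rule step rigorous for an adaptive protocol, in particular when the number of uses is random; I would handle this by taking $N$ to be a worst-case upper bound and padding the protocol with dummy uses of time $0$, which contribute nothing.
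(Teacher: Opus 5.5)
Your overall route is the paper's: the same two dephasing instances, revealing the hidden $Z$-flag, summing per-use divergences, and Pinsker. Your classical-transcript version of the summation is correct. Each $e^{t\cL_\gamma}$ is a $\gamma$-dependent mixture of the fixed unitaries $I$ and $Z$, so the whole adaptive protocol is a $\gamma$-independent post-processing of the flag sequence, and the chain rule leaves only $\sum_j \mathbb{E}_{\gamma_0}[D_{\mathrm b}(p_{\gamma_0t_j}\parallel p_{\gamma_1t_j})]$. This handles adaptively chosen $t_j$ more transparently than the paper's quantum induction via $D(e^{t\cL_{\gamma_0}}(\rho_0)\parallel e^{t\cL_{\gamma_1}}(\rho_1))\le D(\rho_0\parallel\rho_1)+D_{\mathrm b}(p_0\parallel p_1)$.

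The problem is the per-use bound, which is the step that actually carries this theorem. Your setup up to
\[
D_{\mathrm b}(p_{\gamma_0t}\parallel p_{\gamma_1t})\le\frac{4\Delta^2t^2e^{-x}}{1-e^{-x}},\qquad x=4\gamma_0t,
\]
is fine. However, the inequality you then invoke, $1-e^{-x}\ge xe^{-x}$, cancels the exponential:
\[
\frac{4\Delta^2t^2e^{-x}}{1-e^{-x}}\le\frac{4\Delta^2t^2e^{-x}}{xe^{-x}}=\frac{4\Delta^2t^2}{x}=\frac{\Delta^2}{4\gamma_0^2}\,x=\frac{\Delta^2}{\gamma_0}\,t .
\]
The expression $\frac{4\Delta^2t^2e^{-x}}{x}$ that you write would need $1-e^{-x}\ge x$, which is false for $x>0$. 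So as written you have only proved the time-proportional bound. That bound is unbounded in $t$, so a learner using long evolution times is not controlled at all, and the channel-use bound does not follow.

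The uniform bound requires keeping the decay of the numerator against the $t^2$ growth. Write the right-hand side as $\frac{\Delta^2}{4\gamma_0^2}\cdot\frac{x^2}{e^x-1}$ and use $e^x-1\ge x^2/2$; this gives $D_{\mathrm b}\le\Delta^2/(2\gamma_0^2)$ for all $t$. The paper does the same with $x=2\gamma_0t$ and $e^{2x}-1\ge x^2$. In other words, the two per-use bounds come from two different lower bounds on $e^x-1$ ($\ge x$ for the time bound, $\ge x^2/2$ for the uniform one), not from ``the same calculation''. With this repair, the rest of your argument goes through, including $N\ge\Lambda^2/(1152\eps^2)$.
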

\begin{proof}
    We consider Lindbladians that are either $\cL_{\gamma_0}$ or $\cL_{\gamma_1}$, as defined in \cref{sec:lower_bound_techniques}. As noted, any algorithm that learns the Lindbladian to $\ell^\infty$ error $\eps$ can distinguish $\gamma_0$ from $\gamma_1$. We will show that the two Lindbladians are hard to distinguish unless we use enough accesses to the channel semigroup.

    Recall that the channel semigroup $e^{t\cL_{\gamma_i}}$ is a family of Pauli dephasing channels
    \[
    e^{t\cL_{\gamma_i}}(\rho) = (1-p_{\gamma_i t})\rho + p_{\gamma_i t} Z\rho Z, \qquad p_{\gamma_i t} \coloneq \frac{1- e^{-2\gamma_i t}}{2}.
    \]
    Fix a time $t \geq 0$ and define $p_i \coloneq p_{\gamma_i t}$ for $i \in \{0,1\}$ for the ease of notation. For any input state $\rho$, the output under $\cL_{\gamma_i}$ is
    \[
    \sigma_i \coloneq (1- p_i) \rho + p_i Z \rho Z.
    \]
    We introduce an auxiliary flag indicating whether $Z$ is applied or not,
    \[
    \tilde{\sigma}_i \coloneq (1- p_i) \ketbra{0}{0}\otimes \rho + p_i \ketbra{1}{1} \otimes Z\rho Z.
    \]
    Then by the data-processing inequality (\cref{fact:data_processing}, tracing out the flag), we have
    \[
    D(\sigma_0 \parallel \sigma_1) \leq D(\tilde\sigma_0 \parallel \tilde\sigma_1).
    \]
    The block-diagonal relative entropy gives
    \[
    D(\tilde\sigma_0 \parallel \tilde\sigma_1) = D_\rmb(p_0 \parallel p_1) + (1-p_0) D(\rho \parallel \rho) + p_0 D(Z \rho Z \parallel Z\rho Z) = D_\rmb(p_0 \parallel p_1),
    \]
    where $D_\rmb$ is the binary relative entropy defined as
    \[
    D_\rmb(p_0 \parallel p_1) \coloneq p_0 \ln \frac{p_0}{p_1} + (1-p_0) \ln\frac{1-p_0}{1-p_1}.
    \]
    Therefore we have
    \[
    D(\sigma_0 \parallel \sigma_1) \leq D_\rmb(p_0 \parallel p_1).
    \]
    Now bound $D_\rmb(p_0 \parallel p_1)$ uniformly over $t \geq 0$. For $t=0$, we have $p_0=p_1=0$ and $D_\rmb(p_0 \parallel p_1)=0$, so we may assume $t>0$, which ensures $p_0, p_1 \in (0,1)$. Let $\Delta \coloneq \gamma_1 - \gamma_0 = 4\eps$, $x \coloneq 2\gamma_0 t$, and $\beta \coloneq \Delta/\gamma_0$. Then we can write
    \[
    p_0 = \frac{1- e^{-x}}{2}, \qquad p_1 = \frac{1-e^{-(1+\beta)x}}{2}.
    \]
    By \cref{fact:chi2}, the $\chi^2$-divergence bound gives
    \begin{align*}
        D_\rmb(p_0 \parallel p_1) \leq  \frac{(p_0 - p_1)^2}{p_1(1-p_1)} &= \frac{e^{-2x}(1-e^{-\beta x})^2}{1-e^{-2(1+\beta)x}}\\
        &\leq \frac{e^{-2x}(1-e^{-\beta x})^2}{1-e^{-2x}}\\
        &\leq \frac{e^{-2x}\beta^2 x^2}{1-e^{-2x}} &\mbox{(by $1 - e^{-\beta x}\leq \beta x$)}\\
        &= \frac{\beta^2 x^2}{e^{2x}-1}\\
        &\leq \beta^2 = \frac{\Delta^2}{\gamma_0^2}.  &\mbox{(by $e^{2x}-1 \geq x^2$)}
    \end{align*}
    We now consider an adaptive algorithm using the channel $N$ times, which may apply arbitrary quantum operations, use ancillary qubits, choose future evolution times based on previous outcomes, and perform arbitrary final measurements. Let $\mathrm{dist}_0$ and $\mathrm{dist}_1$ be the final output distributions of the algorithm, when the Lindbladian is $\cL_{\gamma_0}$ and $\cL_{\gamma_1}$, respectively. We can show that the previous bound is stable under adaptivity. For any two input states $\rho_0$ and $\rho_1$, we have
    \[
      D\ab\big(e^{t\cL_{\gamma_0}}(\rho_0) \parallel e^{t\cL_{\gamma_1}}(\rho_1))
      \leq D(\rho_0 \parallel \rho_1) + D_\rmb(p_0 \parallel p_1) \leq D(\rho_0 \parallel \rho_1) + \frac{\Delta^2}{\gamma_0^2},
    \]
    which follows from the same flagged construction as above. This means that each access to the channel gives an additive increase of at most $\Delta^2/ \gamma_0^2$ in the relative entropy, and subsequent quantum operations can only decrease it by the data-processing inequality in \cref{fact:data_processing}. Starting from the same initial state, by induction over the $N$ channel uses, the final output distributions satisfy
    \begin{equation}\label{eq:rel_entropy_bound}
        D(\mathrm{dist}_0 \parallel \mathrm{dist}_1) \leq N\frac{\Delta^2}{\gamma_0^2}.
    \end{equation}

    Assume the algorithm learns both $\cL_{\gamma_0}$ and $\cL_{\gamma_1}$ with success probability at least $2/3$. This implies that $\cL_{\gamma_0}$ and $\cL_{\gamma_1}$ can be distinguished. From the estimate $\hat\gamma$, define the binary test $\sfT \coloneq \mathbf{1}[\hat\gamma \geq \gamma_0 + 2\eps]$.
    \begin{itemize}
        \item In case of $\cL_{\gamma_0}$: success of the algorithm implies $\abs{\hat{\gamma} - \gamma_0} \leq \eps$, so $\hat{\gamma}\leq \gamma_0 + \eps < \gamma_0 + 2\eps$ and $\sfT = 0$ with probability at least $2/3$.
        \item In case of $\cL_{\gamma_1}$: success of the algorithm implies $\abs{\hat{\gamma} - \gamma_1} \leq \eps$, so $\hat{\gamma}\geq \gamma_1 - \eps > \gamma_0 + 2\eps$ and $\sfT = 1$ with probability at least $2/3$.
    \end{itemize}
    Therefore, the total variation distance between the two output distributions must satisfy
    \[
    \dtv(\mathrm{dist}_0, \mathrm{dist}_1) \geq \frac{1}{3}.
    \]
    Pinsker's inequality in \cref{fact:pinsker} gives
    \[
    \dtv(\mathrm{dist}_0, \mathrm{dist}_1) \leq \sqrt{\frac{1}{2} D(\mathrm{dist}_0 \parallel \mathrm{dist}_1)}.
    \]
    Combining with \cref{eq:rel_entropy_bound}, we have
    \[
    N \frac{\Delta^2}{\gamma_0^2} \geq \frac{2}{9}.
    \]
    Substituting $\gamma_0=\Lambda/4$ and $\Delta=4\eps$ yields the lower bound
    \[
    N \geq \frac{2}{9} \cdot \frac{\gamma_0^2}{\Delta^2} = \frac{2}{9} \cdot \frac{(\Lambda/4)^2}{16\eps^2}=\Omega\ab\Big(\frac{\Lambda^2}{\eps^2}). \qedhere
    \]
\end{proof}

\subsection{Total evolution time lower bound}
We then prove the lower bound on the total evolution time of the semigroup $e^{t\cL}$ required to learn a local Lindbladian $\cL$.
\begin{theorem}[Evolution-time lower bound]\label{thm:lower_bound_time}
    For a fixed $k=O(1)$, consider learning an unknown $k$-local Lindbladian $\cL$ with access to the channel semigroup  $\{e^{t\cL}\}_{t\geq 0}$. Given $\Lambda > 0$ and $\eps \in (0, \Lambda/16]$, if an algorithm can estimate coefficients of $k$-local Lindbladians with local dynamical strength at most $\Lambda$ up to $\ell^\infty$ error $\eps$ with success probability at least $2/3$, then the total evolution time $T_\mathrm{tot}$ must satisfy
    \[
    T_\mathrm{tot} = \Omega\ab\Big(\frac{\Lambda}{\eps^2}).
    \]
\end{theorem}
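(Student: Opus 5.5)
We reuse the two hard instances $\cL_{\gamma_0},\cL_{\gamma_1}$ from \cref{sec:lower_bound_techniques}, with $\gamma_0=\Lambda/4$ and $\Delta=\gamma_1-\gamma_0=4\eps$. We also reuse the flagged-channel argument from the proof of \cref{thm:lower_bound_measurement}. The only change is the per-use information bound: instead of the time-independent bound $\Delta^2/\gamma_0^2$, we will prove one that is linear in the evolution time $t$ of that use, namely
\[
  D_\rmb\ab\big(p_{\gamma_0t}\parallel p_{\gamma_1t}) \leq \frac{\Delta^2}{\gamma_0}\,t \qquad \text{for all } t\geq 0.
\]
Summing these bounds over the channel uses then controls the relative entropy by the total evolution time rather than by the number of uses.

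\emph{Step 1 (per-use bound).} Use the same substitution as before, $x=2\gamma_0 t$ and $\beta=\Delta/\gamma_0$. The same chain of inequalities via \cref{fact:chi2} and $1-e^{-\beta x}\leq\beta x$ gives $D_\rmb\leq \beta^2x^2/(e^{2x}-1)$. Now use $e^{2x}-1\geq 2x$ in place of $e^{2x}-1\geq x^2$. This yields
\[
  D_\rmb\leq \frac{\beta^2 x}{2} = \frac{\Delta^2}{\gamma_0^2}\cdot\gamma_0 t = \frac{\Delta^2}{\gamma_0}\,t .
\]
The $t=0$ case is trivial. This step is routine.

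\emph{Step 2 (adaptivity).} As in the previous proof, the flag construction gives
\[
  D\ab\big(e^{t\cL_{\gamma_0}}(\rho_0)\parallel e^{t\cL_{\gamma_1}}(\rho_1))\leq D(\rho_0\parallel\rho_1)+\frac{\Delta^2}{\gamma_0}t,
\]
and this holds with ancillas, where the channel is tensored with the identity. Intermediate operations are handled by \cref{fact:data_processing}. The main obstacle is that evolution times may be chosen adaptively, so the $t_j$ are random variables that depend on earlier outcomes. I would handle this by keeping classical registers that store the chosen times and outcomes, and conditioning on the history. By the chain rule, the relative entropy increment at step $j$ is then at most $(\Delta^2/\gamma_0)\,\E_0[t_j]$, with the expectation taken under hypothesis $0$. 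Summing over $j$ gives
\[
  D(\mathrm{dist}_0\parallel\mathrm{dist}_1)\leq \frac{\Delta^2}{\gamma_0}\,\E_0[\Ttot].
\]
If $\Ttot$ is bounded deterministically, which is the statement of the theorem, we may simply replace $\E_0[\Ttot]$ by that bound. A cleaner way to see this is to view the whole protocol as a sequence of quantum-classical states and apply the inequality above recursively, using joint convexity of relative entropy over the classical branches.

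\emph{Step 3 (conclusion).} Exactly as before, a successful learner yields the test $\sfT=\mathbf{1}[\hat\gamma\geq\gamma_0+2\eps]$ with $\dtv(\mathrm{dist}_0,\mathrm{dist}_1)\geq 1/3$. Pinsker's inequality (\cref{fact:pinsker}) then gives $\Ttot\,\Delta^2/\gamma_0\geq 2/9$. Substituting $\gamma_0=\Lambda/4$ and $\Delta=4\eps$,
\[
  \Ttot\geq \frac{2}{9}\cdot\frac{\Lambda/4}{16\eps^2}=\Omega\ab\Big(\frac{\Lambda}{\eps^2}).
\]
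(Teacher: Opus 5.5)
Your proposal is correct and follows the paper's proof essentially step for step: the same two dephasing instances, the same $\chi^2$ chain with $e^{2x}-1\geq 2x$ giving the per-use bound $\Delta^2 t/\gamma_0$, the flagged single-access inequality summed over channel uses, and Pinsker's inequality. Your Step 2 treatment of adaptively chosen times, which conditions on the history and bounds the divergence by $\frac{\Delta^2}{\gamma_0}\E_0[\Ttot]$, is exactly the argument the paper gives separately in \cref{cor:expected_time_lower}, so you have in effect proved the theorem and that corollary together. (For the classical branches the right tool is the chain rule for block-diagonal states rather than joint convexity, but this does not affect correctness.)
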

\begin{proof}
    We construct the same hard instances $\cL_{\gamma_0}$ and $\cL_{\gamma_1}$ as defined in \cref{sec:lower_bound_techniques}. Let $\Delta \coloneq \gamma_1 - \gamma_0 = 4\eps$, $x \coloneq 2\gamma_0 t$, and $\beta \coloneq \Delta/\gamma_0$. Then we can write
    \[
    p_0 = \frac{1- e^{-x}}{2}, \qquad p_1 = \frac{1-e^{-(1+\beta)x}}{2}.
    \]
    Recall that we have previously proved a uniform bound of $D_\rmb(p_0 \parallel p_1) \leq \Delta^2/\gamma_0^2$ over all $t \geq 0$ in the proof of \cref{thm:lower_bound_measurement}. 

    Now we slightly change the proof to obtain a bound that also depends on time $t$. Using the $\chi^2$-divergence bound in \cref{fact:chi2}, we have
    \begin{align*}
        D_\rmb(p_0 \parallel p_1) \leq  \frac{(p_0 - p_1)^2}{p_1(1-p_1)} &= \frac{e^{-2x}(1-e^{-\beta x})^2}{1-e^{-2(1+\beta)x}}\\
        &\leq \frac{e^{-2x}(1-e^{-\beta x})^2}{1-e^{-2x}}\\
        &\leq \frac{e^{-2x}\beta^2 x^2}{1-e^{-2x}} & \mbox{(by $1 - e^{-\beta x}\leq \beta x$)}\\
        &= \frac{\beta^2 x^2}{e^{2x}-1}\\
        &\leq \beta^2 \frac{x}{2}  & \mbox{(by $e^{2x}-1 \geq 2x$)}\\
        &= \frac{\Delta^2}{\gamma_0^2} \cdot \frac{2\gamma_0 t}{2}\\
        &= \frac{\Delta^2}{\gamma_0}t.
    \end{align*}
    The rest of the proof follows the same argument as the proof of \cref{thm:lower_bound_measurement}. The only difference is that we change the upper bound of $D_\rmb(p_0 \parallel p_1)$ from $\Delta^2/\gamma_0^2$ to $\Delta^2t/\gamma_0$. For any two input states $\rho_0$ and $\rho_1$, we have
    \[
      D\ab\big(e^{t\cL_{\gamma_0}}(\rho_0) \parallel e^{t\cL_{\gamma_1}}(\rho_1))
      \leq D(\rho_0 \parallel \rho_1) + D_\rmb(p_0 \parallel p_1) \leq D(\rho_0 \parallel \rho_1) + \frac{\Delta^2}{\gamma_0}t.
    \]
    This means that each access to the channel with evolution time $t$ gives an additive increase of at most $\Delta^2 t/ \gamma_0$ in the relative entropy.

    Now consider an algorithm that uses the channels $N$ times, with evolution times $t_1, t_2, \dots, t_N$. Define the total evolution time
    $T_\mathrm{tot} \coloneq \sum_{j=1}^N t_j$.
    Let $\mathrm{dist}_0$ and $\mathrm{dist}_1$ be the final output distributions of the algorithm, when the Lindbladian is $\cL_{\gamma_0}$ and $\cL_{\gamma_1}$, respectively. Starting from the same initial state, it follows immediately by induction that
    \begin{equation}\label{eq:rel_entropy_bound_t}
        D(\mathrm{dist}_0 \parallel \mathrm{dist}_1) \leq \sum_{j=1}^N \frac{\Delta^2}{\gamma_0}t_j = \frac{\Delta^2}{\gamma_0} T_\mathrm{tot}.
    \end{equation}

    Assume the algorithm learns both $\cL_{\gamma_0}$ and $\cL_{\gamma_1}$ with success probability at least $2/3$. This implies that $\cL_{\gamma_0}$ and $\cL_{\gamma_1}$ can be distinguished. Therefore, the total variation distance between the two output distributions must satisfy
    \[
    \dtv(\mathrm{dist}_0, \mathrm{dist}_1) \geq \frac{1}{3}.
    \]
    Pinsker's inequality in \cref{fact:pinsker} gives
    \[
    \dtv(\mathrm{dist}_0, \mathrm{dist}_1) \leq \sqrt{\frac{1}{2} D(\mathrm{dist}_0 \parallel \mathrm{dist}_1)}.
    \]
    Combining with \cref{eq:rel_entropy_bound_t}, we have
    \[
    \frac{\Delta^2}{\gamma_0} T_\mathrm{tot} \geq \frac{2}{9}.
    \]
    Substituting $\gamma_0=\Lambda/4$ and $\Delta=4\eps$ yields the lower bound
    \[
    T_\mathrm{tot} \geq \frac{2}{9} \cdot \frac{\gamma_0}{\Delta^2} = \frac{2}{9} \cdot \frac{\Lambda/4}{16\eps^2}=\Omega\ab\Big(\frac{\Lambda}{\eps^2}). \qedhere
    \]
\end{proof}

The argument above bounds the information gained from a realized sequence of evolution times.  In a fully adaptive protocol, however, the learner may decide both the evolution time of the next access and when to stop based on the outcomes observed so far, so the total evolution time is itself a random variable. The same per-unit-time information bound still applies after conditioning on the past, which leads to the following expected-time version.

\begin{corollary}[Evolution time lower bound for adaptive learning]\label{cor:expected_time_lower}
Under the assumptions of \cref{thm:lower_bound_time}, allow the algorithm to choose a stopping time $N$ and evolution times $t_1,t_2,\ldots,t_N$ adaptively from previous outcomes and internal randomness. Let
$T_{\mathrm{tot}}\coloneq \sum_{j=1}^N t_j$ be the resulting random total evolution time. Given $\Lambda > 0$ and $\eps \in (0, \Lambda/16]$, if an algorithm succeeds with probability at least $2/3$ over all $k$-local Lindbladians $\cL$ with local dynamical strength at most
$\Lambda$, estimating their coefficients up to $\ell^\infty$ error $\eps$,
then its worst-case expected total evolution time satisfies
\[
  \sup_{\cL}
  \E_{\cL}\ab[T_{\mathrm{tot}}]
  =
  \Omega\ab\Big(\frac{\Lambda}{\eps^2}).
\]
\end{corollary}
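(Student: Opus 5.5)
We reuse the two hard instances $\cL_{\gamma_0},\cL_{\gamma_1}$ from \cref{sec:lower_bound_techniques}, together with the per-access bound from the proof of \cref{thm:lower_bound_time}: conditioned on the past, an access with evolution time $t$ increases the relative entropy by at most $\frac{\Delta^2}{\gamma_0}t$. The new ingredient is to sum these increments with a stopping-time (Wald-type) argument rather than over a fixed sequence of times. Throughout, expectations are taken under hypothesis $0$.

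\textbf{Step 1: modeling the protocol.} We model the protocol as a sequence of rounds. In round $j$, the learner's classical history $H_{j-1}$ (past outcomes and internal randomness) determines three things: whether to stop, the next time $t_j$, and the quantum operations to apply. To make the learner's knowledge classical at each stage, we may reveal the hidden Bernoulli dephasing flag after every access. This only helps the learner, so by data processing it suffices to lower bound learners that see the flags.

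In this modified model, the joint distribution of the full transcript under hypothesis $i$ factorizes as a product of conditional laws. Given the history, the learner chooses $t_j$, and then the flag $F_j\sim\mathrm{Bern}(p_{\gamma_i t_j})$ is drawn. With the flags revealed, every quantum state is a known function of the history, so measurement outcomes carry no information beyond $H_{j-1}$ and the flags. The only hypothesis-dependent factors are therefore the flag laws.

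\textbf{Step 2: chain rule.} The chain rule for KL divergence gives
\[
D(\mathrm{dist}_0\parallel\mathrm{dist}_1)\le \E_0\Big[\sum_{j\ge1}\mathbf 1[N\ge j]\,D_\rmb\big(p_{\gamma_0t_j}\parallel p_{\gamma_1t_j}\big)\Big].
\]
This holds because $\{N\ge j\}$ and $t_j$ are measurable with respect to $H_{j-1}$, and each conditional divergence is bounded by the binary one. Inserting $D_\rmb\le \frac{\Delta^2}{\gamma_0}t$ yields
\[
D(\mathrm{dist}_0\parallel\mathrm{dist}_1)\le \frac{\Delta^2}{\gamma_0}\,\E_0[T_{\mathrm{tot}}].
\]
Pinsker's inequality together with $\dtv\ge1/3$ then gives
\[
\E_0[T_{\mathrm{tot}}]\ge \frac{2\gamma_0}{9\Delta^2}=\Omega(\Lambda/\eps^2),
\]
and hence the supremum over $\cL$ is at least this large.

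\textbf{Main obstacle: unbounded stopping time.} The delicate point is that $N$ may be unbounded, so the chain rule over infinitely many rounds must be justified. We would handle this by truncating at $N\wedge m$: the truncated transcript is a function of the full one, and the $m$-round bound holds by the finite chain rule. We then let $m\to\infty$, using monotone convergence on the right-hand side and lower semicontinuity of KL divergence on the left, applied to the output once the protocol halts almost surely. If $\E_0[T_{\mathrm{tot}}]=\infty$ there is nothing to prove.

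If $N=\infty$ with positive probability, the learner fails to output on that event. Taking $m$ large makes the probability of not having halted small, which reduces the success probability by at most a small amount, so a $\dtv\ge 1/3-o(1)$ bound still holds for the truncated outputs.
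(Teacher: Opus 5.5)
Your proposal is correct and follows essentially the same route as the paper. Both use the same two dephasing instances, the per-access bound $D_\rmb(p_{\gamma_0 t}\parallel p_{\gamma_1 t})\le \frac{\Delta^2}{\gamma_0}t$, a chain rule over rounds in which $\{N\ge j\}$ and $t_j$ are fixed by the history, truncation at finitely many uses, and Pinsker. The one difference is in execution. You reveal all dephasing flags globally, so the transcript becomes classical and its divergence is exactly the expected sum of the binary flag divergences. The paper instead applies the flagged bound access by access, to arbitrary pairs of system--memory states, inside a quantum relative-entropy chain rule. Your explicit handling of non-halting runs, via $\dtv\ge 1/3-o(1)$ for the truncated outputs, is a slightly more careful version of the paper's limiting step.
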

\begin{proof}
We reuse the hard instances $\cL_{\gamma_0},\cL_{\gamma_1}$ and write $\mathbb{P}_{\gamma_i}^{\mathrm{tr}}$ for the law of the fully adaptive transcript when the true generator is $\cL_{\gamma_i}$, $i\in\{0,1\}$.  The transcript includes the algorithm's internal randomness, the chosen evolution times, all measurement outcomes, and the final output.  Let $\E_{\gamma_i}$ denote expectation under $\mathbb{P}_{\gamma_i}^{\mathrm{tr}}$.

We first prove the relative-entropy bound for a truncated transcript.  Fix $N_0$ and run the algorithm until it stops or until $N_0$ possible channel uses have been reached. After stopping, set all later evolution times to zero and append dummy outcomes.  Let $\mathbb{P}_{\gamma_i}^{(N_0)}$ be the law of this truncated transcript.  Conditioned on the same realized history before the $j$-th possible use, the adaptive rule fixes the next evolution time $t_j$ and the admissible operation preparing the input.  The actual system-memory input states under the two hypotheses may be different, but the single-access estimate in the proof of \cref{thm:lower_bound_time} applies to arbitrary pairs of input states after tensoring the dephasing channel with the identity on any memory and ancillas.  Hence, the $j$-th use can increase the relative entropy by at most $\frac{\Delta^2}{\gamma_0}t_j$.  The intervening adaptive operations, measurements, and classical post-processing cannot increase relative entropy by data processing.  Therefore, the chain rule for relative entropy gives
\[
  D\ab\Big(\mathbb{P}_{\gamma_0}^{(N_0)}\parallel \mathbb{P}_{\gamma_1}^{(N_0)}) \leq
  \frac{\Delta^2}{\gamma_0}\, \E_{\gamma_0}\ab\bigg[\sum_{j=1}^{N_0}t_j].
\]
Taking $N_0 \to\infty$ and using monotone convergence for the right-hand side, together with the monotonicity of relative entropy for increasing transcripts, yields
\[
  D\ab\big(\mathbb{P}_{\gamma_0}^{\mathrm{tr}}\parallel \mathbb{P}_{\gamma_1}^{\mathrm{tr}}) \leq \frac{\Delta^2}{\gamma_0}\,\E_{\gamma_0}\ab[T_{\mathrm{tot}}].
\]

Let $\mathrm{dist}_i$ be the law of the final output under $\cL_{\gamma_i}$.  Since the final output is a function of the transcript, data processing gives
\[
  D(\mathrm{dist}_0\parallel \mathrm{dist}_1) \leq D\ab\big(\mathbb{P}_{\gamma_0}^{\mathrm{tr}}\parallel \mathbb{P}_{\gamma_1}^{\mathrm{tr}}) \leq
  \frac{\Delta^2}{\gamma_0}\,\E_{\gamma_0}\ab[T_{\mathrm{tot}}].
\]
As in \cref{thm:lower_bound_time}, success probability at least $2/3$ on both hard instances yields a test distinguishing $\cL_{\gamma_0}$ from $\cL_{\gamma_1}$ with total variation distance at least $1/3$. Pinsker's inequality then gives
\[
  D(\mathrm{dist}_0\parallel \mathrm{dist}_1)\geq\frac{2}{9}.
\]
Combining the last two displays and substituting $\gamma_0=\Lambda/4$ and $\Delta=4\eps$ gives
\[
  \E_{\gamma_0}\ab[T_{\mathrm{tot}}] \geq \frac{2}{9}\cdot \frac{\gamma_0}{\Delta^2}
  = \Omega\ab\Big(\frac{\Lambda}{\eps^2}).
\]
Since $\cL_{\gamma_0}$ satisfies the local-strength constraint, the same lower bound holds for the worst-case expected total evolution time.
\end{proof}

\section*{Acknowledgments}
    The authors thank Jinge Bao for discussions at an early stage of this work and Wenjun Yu for helpful comments on the manuscript.
    This project is supported by the National Research Foundation, Singapore through the National Quantum Office, hosted in A*STAR, under its Centre for Quantum Technologies Funding Initiative (S24Q2D0009) and its Advanced Quantum Algorithms and Solutions Funding Initiative (S25Q9DA001 and S25Q9DA002).
    N.G.\ acknowledges support by the SandboxAQ Research Excellence Scholarship.
    N.G.\ and Z.Y.\ acknowledge support by the CQT Young Researcher Career Development Grant.
\bibliographystyle{alphaurl}
\bibliography{ref}

\end{document}